\newenvironment{cenv}{\begin{list}{}{%
      \setlength{\labelwidth}{1.5em}%
      \setlength{\leftmargin}{\labelwidth}%
      \addtolength{\leftmargin}{\labelsep}%
      \setlength{\listparindent}{0em}%
      \setlength{\topsep}{10pt}%
      \setlength{\itemsep}{5pt}%
      \setlength{\parsep}{0pt}%
    }
  }{
  \end{list}
}
\newcounter{claimcounter}
\newenvironment{Claim}{
  
  \refstepcounter{claimcounter}
  \begin{cenv}
  \item[{Claim \arabic{claimcounter}.}]
  }{
  \end{cenv}
}
\newenvironment{ClaimProof}[1][]{\noindent{%
\ifthenelse{\equal{#1}{}}{{\sl Proof.\ }}{{\sl #1.\ }}%
}}{\hspace*{1em}\nobreak\hfill$\dashv$\endtrivlist\addvspace{2ex plus
0.5ex minus0.1ex}}
\newtheorem{theorem}{Theorem}[section]
\newtheorem{lemma}[theorem]{Lemma}
\newtheorem{corollary}[theorem]{Corollary}
\newtheorem{definition}[theorem]{Definition}
\newtheorem{remark}[theorem]{Remark}
\numberwithin{equation}{section}
\newcommand{\minor}{\preceq}
\newcommand{\st}{\mathrel : }
\newcommand{\N}{{\mathbb N}}
\newcommand{\CCC}{\mathcal{C}}
 \newcommand{\VVV}{\mathcal{V}}
 \newcommand{\XXX}{\mathcal{X}}
\newcommand{\YYY}{\mathcal{Y}} 
\newcommand{\norm}[1]{\ensuremath{\left\lVert#1\right\rVert}}
\newcommand{\AWs}{\textup{AW}$[*]$\xspace}
\newcommand{\FO}{\textup{FO}}
\newcommand{\Oof}{\mbox{$\cal O$}}
\newcommand{\rad}{\mathrm{rad}}
\newcommand{\aug}{\mathrm{aug}}
\newcommand{\wcol}{\mathrm{wcol}}
\newcommand{\wreach}{\mathrm{WReach}}
\newcommand{\tp}{\operatorname{tp}^+}
\newcommand{\atp}{\operatorname{atp}^+}
\newcommand{\itp}{\operatorname{itp}^+}
\newcommand{\dist}{\operatorname{dist}}
\newcommand{\NOdM}{Ne{\v s}et{\v r}il and Ossona de Mendez}
\newcommand{\npprob}[5]{%
\begin{center}\normalfont\fbox{%
\begin{tabular}[t]{rp{#1}}%
\multicolumn{2}{l}{\textsc{#2}}\\%
\textit{Input:} & #3\\%
\textit{Problem:} & #5%
\end{tabular}}%
\end{center}}
\definecolor{gruen}{rgb}{0,0.6,0.2}
\newcommand{\red}{\color{red}}
\newcommand{\green}{\color{gruen}}
\newcounter{rbcounter}
\newcommand{\randbem}[3]{\stepcounter{rbcounter}\parbox{0mm}{\hspace*{-0.3em}#1{}$^{\arabic{rbcounter}}$}\marginpar{#1\raggedright\footnotesize\textbf{#2$^{\arabic{rbcounter}}$: }#3}}
\newcommand{\stephan}[1]{\randbem{\red}{St}{#1}}
\newcommand{\sebastian}[1]{\randbem{\green}{Se}{#1}}
\renewcommand{\phi}{\varphi}
\renewcommand{\epsilon}{\varepsilon}
\newcommand{\case}[1]{\par\medskip\noindent\textit{Case #1: }}
\newenvironment{cs}{
  \begin{description}
    \renewcommand{\case}[1]{\item[\itshape\mdseries Case ##1:]}
  }{
  \end{description}
}
\newcommand{\tup}[1]{\bar{#1}}
\def\qed{\hspace*{\fill}$\Box$}
\newenvironment{proof}[1][]
{\ifthenelse{\equal{#1}{}}{\noindent\textit{Proof.
    }}{\noindent\textit{#1. }}}%
{\qed\par\smallskip}
\newcommand{\qr}{\operatorname{qr}}
\newcommand{\dom}{\operatorname{dom}}
\newcommand{\rg}{\operatorname{rg}}
\newcommand{\hphi}{\widehat{\phi}}
\newcommand{\colprod}{\,\tikz{\path[draw,fill=red!20] circle (2.5pt);}\,}
\begin{document}

\title{Deciding first-order
  properties of nowhere dense graphs}


\author{Martin Grohe\\RWTH Aachen
  University\\grohe@informatik.rwth-aachen.de \and Stephan
  Kreutzer\\Technical University Berlin\\stephan.kreutzer@tu-berlin.de
  \and Sebastian Siebertz\\Technical University Berlin\\sebastian.siebertz@tu-berlin.de}

\date{}

\maketitle

\begin{abstract}
  Nowhere dense graph classes, introduced by Ne\v set\v ril and Ossona
  de Mendez~\cite{NesetrilO11}, form a large variety of classes of
  ``sparse graphs'' including the class of planar graphs, actually all
  classes with excluded minors, and also bounded degree graphs and
  graph classes of bounded expansion.

  We show that deciding properties of graphs definable in first-order
  logic is fixed-parameter tractable on nowhere dense graph
  classes. At least for graph classes closed under taking subgraphs,
  this result is optimal: it was known before that for all
  classes~$\mathcal C$ of graphs closed under taking subgraphs, if
  deciding first-order properties of graphs in~$\mathcal C$ is
  fixed-parameter tractable, then~$\mathcal C$ must be nowhere dense
  (under a reasonable complexity theoretic assumption).

  As a by-product, we give an algorithmic construction of sparse
  neighbourhood covers for nowhere dense graphs. This extends and
  improves previous constructions of neighbourhood covers for graph
  classes with excluded minors. At the same time, our construction is
  considerably simpler than those.

Our proofs are based on a new game-theoretic characterisation of
  nowhere dense graphs that allows for a recursive version of
  locality-based algorithms
  on these classes. On the logical side, we prove a
  ``rank-preserving'' version of Gaifman's locality
  theorem.
\end{abstract}

\section{Introduction}

Algorithmic meta theorems attempt to explain and unify algorithmic
results by proving tractability not only for individual problems, but
for whole classes of problems. These classes are typically defined in
terms of logic. The meaning of ``tractability'' varies; for example,
it may be linear or polynomial time solvability, fixed-parameter
tractability, or polynomial time approximability to some ratio. The
prototypical example of an algorithmic meta theorem is Courcelle's
Theorem~\cite{cou90}, stating that all properties of graphs of bounded
tree-width that are definable in monadic second-order logic are
decidable in linear time. Another well-known example is Papadimitriou
and Yannakakis's \cite{papyan91} result that all optimisation problems
in the class MAXSNP, which is defined in terms of a fragment of
existential second-order logic, admit constant-ratio polynomial time
approximation algorithms. By now, there is a rich literature on
algorithmic meta theorems (see, for example,
\cite{bodfomlok+09,coumakrot00,coumakrot01,dawgrokre07,
  dawgrokre+06,dvokratho10,frigro01,kretaz10a,kretaz10,see96} and the
surveys \cite{gro07b,grokre11,kre11}).  While the main motivation for
proving such meta theorems may be to understand the ``essence'' and
the scope of certain algorithmic techniques by abstracting from
problem-specific details, sometimes meta theorems are also crucial for
obtaining new algorithmic results. A recent example is the quadratic
time algorithm for a structural decomposition of graphs with excluded
minors from \cite{grokawree13}, which builds on Courcelle's Theorem in
an essential way. Furthermore, meta theorems often give a quick and
easy way to see that certain problems can be solved efficiently (in
principle), for example in linear time on graphs of bounded
tree-width. Once this has been established, a problem specific
analysis may yield better algorithms -- even though implementations
of, for instance, Courcelle's theorem have shown that the direct
application of meta theorems
can yield competitive algorithms for common problems such as the
dominating set problem (see~\cite{LangerRRS12}).

In this paper, we prove a new meta theorem for first-order logic on
nowhere dense classes of graphs. These classes were introduced by Ne\v
set\v ril and Ossona de Mendez~\cite{NesetrilOdM12,NesetrilO11} as a
formalisation of classes of ``sparse'' graphs. All familiar examples
of sparse graph classes, like the class of planar graphs, classes of
bounded tree-width, classes of bounded degree, and indeed all classes
with excluded topological subgraphs are nowhere
dense. Figure~\ref{fig:classes} shows the containment relations
between these and other sparse graph classes.\footnote{Notably,
  classes of bounded average degree or bounded degeneracy are not
  necessarily nowhere dense. To be precise: for every~$k\ge 2$ the
  class of all graphs of degeneracy at most~$k$ is somewhere
  dense. This is reasonable, because every graph can be turned into a
  graph of degeneracy~$2$ by simply subdividing every edge
  once. Recall that a graph has \emph{degeneracy} at most~$d$ if every
  subgraph has a vertex of degree at most~$d$. Degeneracy at most~$d$
  implies that the graph and all its subgraphs have average degree at
  most~$2d$ and hence have a linear number of edges. Contrarily,
  graphs in nowhere dense classes can have an edge density
  of~$n^{1+\epsilon}$ and are therefore not necessarily degenerate.}
``Nowhere density'' turns out to be a very robust concept with several
seemingly unrelated natural characterisations (see
\cite{NesetrilOdM12,NesetrilO11}). Furthermore, Ne\v set\v ril and
Ossona de Mendez~\cite{NesetrilO11} established a clear-cut dichotomy
between nowhere dense and somewhere dense graph classes. The exact
definition of nowhere dense graph classes is
technical 
and we defer it to Section~\ref{sec:nowheredense}.

\begin{figure}[t]
  \centering
  \begin{tikzpicture}
     [
     klasse/.style = {draw, fill=gruen!20,shape=ellipse}, >=stealth
     ]
     \small

      \shade[top color=red!20] (-3,10.5) -- (-3,9.5) .. controls (3,7.8) and (6,7.8)
     .. (11,9.5) -- (11,10.5) -- cycle;

      \shade[top color=white,bottom color=gruen!20] (-3,-0.5) -- (-3,7) .. controls (3,8.7) and (6,8.7)
     .. (11,7) -- (11,-0.5) -- cycle;

    \draw (4.5,0) node[klasse] (pl) {planar}
               (4.5,1.2) node[klasse] (bg) {bounded genus}
               (0,0.75) node[klasse] (tw) {bounded tree-width}
               (-1.2,2.7) node[klasse,text width=2cm,text centered] (ltw) {bounded local tree-width}
               (3,2.9) node[klasse] (x) {excluded minor}
               (6,4) node[klasse] (xt) {excluded topological subgraph}
               (9,1.2) node[klasse] (d) {bounded degree}
               (6,5.25) node[klasse] (bx) {bounded expansion}
               (0,4.5) node[klasse] (lx) {locally excluded minor}
               (2,6.25) node[klasse] (lbx) {locally bounded expansion}
               (3.5,7.5) node[klasse] (nd) {nowhere dense}
               (6,9.3) node[klasse,fill=red!20] (bd) {bounded degeneracy}
     ;
     
     \draw[->] (pl) edge (bg) (bg) edge (x) (tw) edge (ltw) edge (x)
     (ltw) edge (lx) (x) edge (lx)
     edge (xt) (lx) edge (lbx) (lbx) edge (nd) (d) edge (xt) (xt) edge
     (bx) (bx) edge (lbx) (bx) edge (bd);

      \draw[->] (d) edge (ltw) (bg) edge (ltw);

     \draw[gruen] (-3,7) .. controls (2.5,8.5) and (5.5,8.5)
     .. (11,7);
     \path (-2.5,-0.2) node[anchor=west] {\color{gruen}\bfseries nowhere dense};
     
     \draw[red] (-3,9.5) .. controls (2.5,7.9) and (5.5,7.9) .. (11,9.5);
     \path (-2.5,10.2) node[anchor=west] {\color{red}\bfseries somewhere dense};
  \end{tikzpicture}

  \caption{Sparse graph classes}
  \label{fig:classes}
\end{figure}
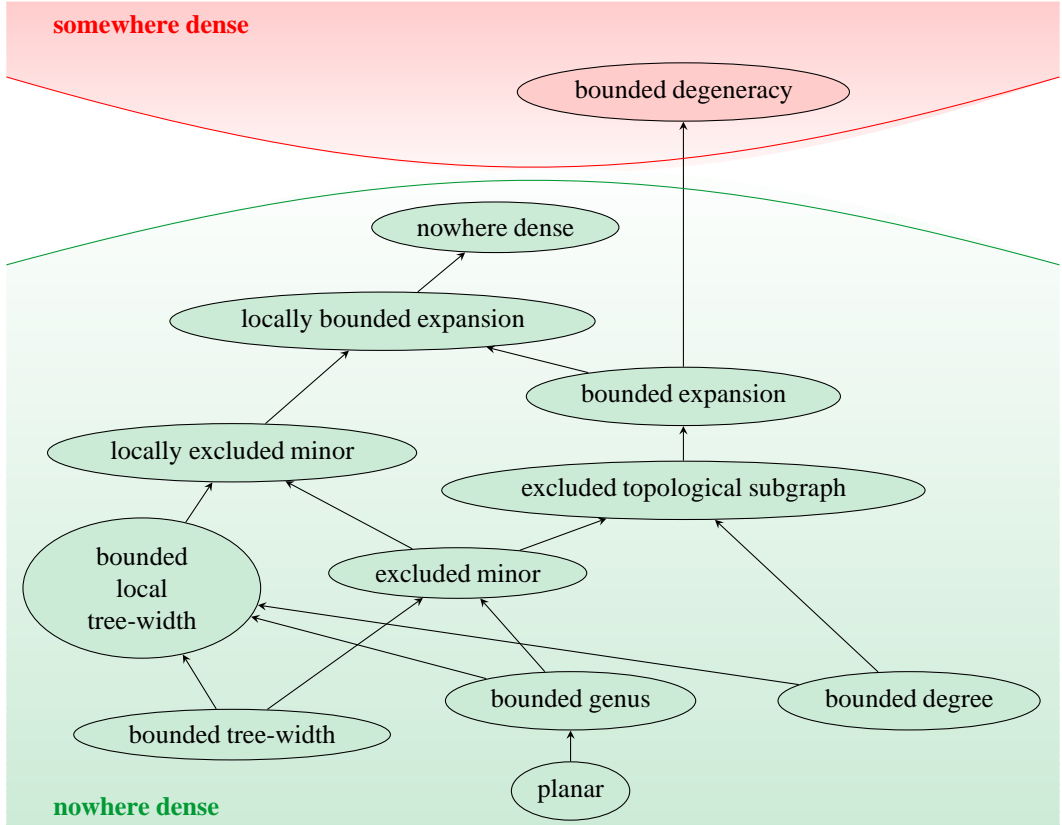

\begin{theorem}\label{theo:main}
  For every nowhere dense class~$\mathcal C$ and every~$\epsilon>0$,
  every property of graphs definable in first-order logic can be
  decided in time~$\Oof(n^{1+\epsilon})$ on~$\mathcal C$.
\end{theorem}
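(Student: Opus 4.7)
My plan is to prove the theorem by induction on the quantifier rank of the input formula $\varphi$, combining a locality theorem with a structural recursion on the graph. At a high level: apply Gaifman's locality theorem to rewrite $\varphi$ as a Boolean combination of $r$-local formulas and scattered existential sentences; handle local formulas by restricting to a small neighbourhood obtained from a sparse neighbourhood cover; and handle the scattered existentials by greedily selecting vertices satisfying the relevant local condition and deleting their $2r$-balls so the remaining witnesses are automatically far apart.

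The first ingredient I would prove is a rank-preserving version of Gaifman's locality theorem: any formula of quantifier rank $q$ is equivalent to a Boolean combination of $r$-local formulas and scattered sentences whose constituent local formulas have quantifier rank at most $q$, rather than the much larger bound produced by a naive application of the classical Gaifman theorem. This refinement is essential because the algorithm reapplies the locality theorem at each level of the recursion, and any blow-up in the rank would derail the induction.

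The second ingredient is an algorithmic construction, in time $\Oof(n^{1+\epsilon})$, of a sparse neighbourhood cover $\XXX$ of radius $\Oof(r)$ and degree $n^{\epsilon}$: a family of connected subgraphs of $G\in\CCC$ such that every ball $N_r(v)$ is contained in some $X\in\XXX$ and every vertex lies in at most $n^{\epsilon}$ clusters. The construction exploits the fact that nowhere dense classes have weak colouring numbers $\wcol_r(G)$ bounded by $\Oof(n^{\epsilon})$, generalising earlier cover constructions for classes with excluded minors while being considerably simpler. With such a cover in hand, evaluating an $r$-local formula $\psi^{(r)}(v)$ at $v$ reduces to evaluating $\psi^{(r)}$ inside the induced subgraph $G[X]$ for some $X\ni v$.

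The main obstacle is that $G[X]$ is not necessarily smaller than $G$, so naive recursion on $G[X]$ need not terminate. I would overcome this by introducing a \emph{splitter game} on $G$ in which a Connector player picks a bounded-radius neighbourhood and a Splitter player deletes a single vertex from it, and showing that nowhere density is equivalent to Splitter having a winning strategy in a bounded number $\ell=\ell(r)$ of rounds. This splitter depth, which strictly decreases whenever the algorithm descends into a cluster and removes Splitter's chosen vertex, provides the second induction parameter alongside quantifier rank. The hardest step is designing this game so that its boundedness both characterises nowhere density and lets Splitter's move be computed cheaply inside each recursive call; once this is in place, the recursion tree has depth bounded by a function of $\varphi$ alone and the total work fits in $\Oof(n^{1+\epsilon})$.
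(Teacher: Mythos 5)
Your overall architecture matches the paper's almost exactly: a rank-preserving refinement of Gaifman's theorem, sparse neighbourhood covers of radius $\Oof(r)$ and degree $n^{\epsilon}$ built from weak colouring numbers, and a game-theoretic characterisation of nowhere density whose bounded game length serves as the second induction parameter. However, there is a genuine gap in your version of the splitter game. You let Splitter delete a \emph{single} vertex per round, whereas the paper lets Splitter delete a set $W_{i+1}\subseteq N_r^{G_i}(v_{i+1})$ of size up to $m=m(r,\CCC)$. This is not cosmetic: Splitter's winning strategy consists of deleting, in round $i+1$, the union of shortest paths from the new centre $v_{i+1}$ to \emph{all} previously chosen centres $v_1,\ldots,v_i$ (up to $i(r+1)$ vertices), so that a long play produces many centres that become pairwise far apart after removing a small separator, contradicting uniform quasi-wideness. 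With one deletion per round this separation argument collapses, and the equivalence you assert between nowhere density and Splitter winning the single-deletion game in boundedly many rounds is unproven; one cannot simply simulate an $m$-fold deletion by $m$ single-deletion rounds, because Connector re-centres (and the graph shrinks to a new ball) in between. Correspondingly, on the algorithmic side the recursion must remove a whole set $W_X$ of up to $m$ vertices when descending into a cluster, absorbing them into the formula via an interpretation lemma; removing one vertex would not decrease the game parameter.

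The second gap is the treatment of the scattered existential sentences. Greedily selecting witnesses and deleting their $2r$-balls only yields a \emph{maximal} scattered set; if it has fewer than $k$ elements you are left with an exact distance-independent-set instance concentrated near the greedy kernel, and this problem is NP-hard in general. The paper devotes a separate recursive algorithm to it (the \textsc{Distance Independent Set} theorem), which itself runs the splitter game, passes to a ``rainbow'' coloured variant via a lexicographic product with $K_k$, and re-colours vertices by their distance vectors to Splitter's deleted set at each level of the recursion; your one-line greedy step does not by itself decide the sentence. Finally, to make the induction on quantifier rank close, the paper does not literally re-evaluate ``$r$-local formulas of rank $q$'' inside clusters: it stores local types as unary predicates in iterated expansions $A\star_{\XXX}^{q+1}q$ computed bottom-up, and measures rank by a discounted $q$-rank on the extension $\FO^+$ with distance atoms. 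Some analogue of this bookkeeping is needed to make your rank-preservation claim precise, since distance formulas $\delta_{\le d}$ would otherwise consume quantifier rank $\log d$ at every level.
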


In particular, deciding first-order properties is fixed-parameter
tractable on nowhere dense graph classes.\footnote{There is a minor
  issue regarding non-uniform vs uniform fixed-parameter tractability,
  see Remark~\ref{rem:uniform}.}  Deciding first-order properties of
arbitrary graphs is known to be complete for the parameterized
complexity class~$\textup{AW}[*]$ and thus unlikely to be
fixed-parameter tractable~\cite{dowfeltay96}.

Ne{\v s}et{\v r}il and Ossona de Mendez \cite{NesetrilOdM12} already
proved that deciding properties definable in existential first-order
logic is fixed-parameter tractable on nowhere dense graphs. Dawar and
Kreutzer~\cite{DawarKre09} showed that dominating set (parameterized
by the size of the solution) is fixed-parameter tractable on nowhere
dense graphs. Our theorem implies new fixed-parameter tractability
results on nowhere dense graphs for many other standard parameterized
problems, for example, connected dominating set and digraph kernel
(both parameterized by the size of the solution), Steiner tree
(parameterized by the size of the tree) and circuit satisfiability
(parameterized by the depth of the circuit and the Hamming weight of
the solution). The last result requires the generalisation of our
theorem from graphs to arbitrary relational structures, which is
straightforward.

Our theorem can be seen as the culmination of a long line of meta
theorems for first order logic. The starting point is Seese's
\cite{see96} result that first-order properties of bounded degree
graphs can be decided in linear time. Frick and Grohe~\cite{frigro01}
gave linear time algorithms for planar graphs and all apex-minor-free
graph classes and~$\Oof(n^{1+\epsilon})$ algorithms for graphs of
bounded local tree-width. Flum and Grohe~\cite{flugro01} proved that
deciding first-order properties is fixed-parameter tractable on graph
classes with excluded minors, and Dawar, Grohe, and
Kreutzer~\cite{dawgrokre07} extended this to classes of graphs locally
excluding a minor. Finally, Dvo\v r\'ak, Kr\'al, and
Thomas~\cite{dvokratho10} proved that first-order properties can be
decided in linear time on graph classes of bounded expansion and in
time~$\Oof(n^{1+\epsilon})$ on classes of locally bounded
expansion. All these classes are nowhere dense, and there are nowhere
dense classes that do not belong to any of these classes. For example,
the class of all graphs whose girth is larger than the maximum degree
is nowhere dense, but has unbounded expansion. If to every graph in
this class we add one vertex and connect it with all other vertices,
we obtain a class of graphs that is still nowhere dense, but does not
even have locally bounded expansion. However, what makes our theorem
interesting is not primarily that it is yet another extension of the
previous results, but that it is optimal for classes $\mathcal C$
closed under taking subgraphs: under the standard complexity theoretic
assumption~$\textup{FPT}\neq\textup{W[1]}$, Kreutzer \cite{kre11} and
Dvo\v r\'ak et al.~\cite{dvokratho10} proved that if a class~$\mathcal
C$ closed under taking subgraphs is somewhere dense (that is, not
nowhere dense), then deciding first-order properties of graphs
in~$\mathcal C$ is not fixed-parameter tractable. Note that all
classes considered in the previous results are closed under taking
subgraphs.  Hence our result supports the intuition that nowhere dense
classes are the natural limit for many algorithmic techniques for
sparse graph classes.

Technically, we neither use the structural graph theory underlying
\cite{dawgrokre07,flugro01} nor the quantifier elimination techniques
employed by \cite{dvokratho10}. Our starting point is the locality
based technique introduced in \cite{frigro01}. In a nutshell, this
technique works as follows. Using Gaifman's theorem, the problem to
decide whether a general first-order formula~$\phi$ is true in a graph
can be reduced to testing whether a formula is true
in~$r$-neighbourhoods in the graph, where the radius~$r$ only depends
on~$\phi$, and solving a variant of the (distance~$d$) independent set
problem. Hence, if~$\CCC$ is a class of graphs
where~$r$-neighbourhoods have a simple structure, such as the class of
planar
graphs or classes of bounded local tree-width, this method gives
an easy way for deciding properties definable in first-order
logic.

Applying this technique to nowhere dense classes of graphs immediately
runs into problems, as~$r$-neighbourhoods in nowhere dense graphs do
not necessarily have a simple structure that can be exploited algorithmically. We therefore
iterate the locality based approach. Using locality we reduce the
first-order model-checking problem to the problems of evaluating
formulas in~$r$-neighbourhoods and solving a variant of the
independent set problem. We then show that~$r$-neighbourhoods~$N$ in
nowhere dense graphs can be split by deleting a set~$W$ of only a few
vertices into smaller neighbourhoods. We apply the locality
argument again and transform our formula into formulas to be evaluated
in $r$-neighbourhoods in~$N-W$ and solving the independent set
problem on~$N-W$. We show that on nowhere dense classes of graphs
this process terminates after a constant number of steps.

The three main steps of our proof, each of which may be of independent
interest, are the following.
\begin{itemize}
\item An algorithmic construction of \emph{sparse neighbourhood
    covers} for nowhere dense graphs (Section~\ref{sec:covers}). The
  parameters are surprisingly good: we can cover
  all~$r$-neighbourhoods with sets (called \emph{clusters}) of
  radius~$2r$ such that each vertex is contained in~$n^{o(1)}$
  clusters. For classes of bounded expansion (see
  Figure~\ref{fig:classes}), we even get such covers where each vertex
  is only contained in a constant number of clusters.  In particular,
  the small radius of the clusters substantially improves known
  results for planar graphs and graphs with excluded
  minors~\cite{abrgavmal+07,buscoslaf+07}, which all have bounded
  expansion.
\item A new characterisation of nowhere dense graph classes in terms
  of a game, the \emph{Splitter Game} (Section~\ref{sec:game}). We use
  this game to formalise the process of localising and splitting
  described above and showing that it terminates on nowhere dense
  graphs. It turns out that it only terminates on nowhere dense
  graphs, thus providing a necessary and sufficient condition for
  nowhere density.
\item A \emph{Rank-Preserving Locality Theorem}
  (Section~\ref{sec:local}), strengthening Gaifman's well-known
  locality theorem for first-order logic by translating first-order
  formulas into local formulas of the same rank. The key innovation
  here is a new, discounted rank measure for first-order formulas.
\end{itemize}

We describe the main algorithm proving Theorem~\ref{theo:main} in
Section~\ref{sec:mc}.
\parskip0pt

\section{Preliminaries}
\label{sec:prelim}

We assume familiarity with basic concepts of graph theory and refer to
\cite{Diestel05} for background. We denote the set of positive
integers by~$\N$. For~$k\in\N$ we write~$[k]$ for the set~$\{1,\ldots,
k\}$. We will often write~$\tup{a}$ for a~$k$-tuple~$(a_1,\ldots,
a_k)$ and~$a\in\tup{a}$ for~$a\in\{a_1,\ldots, a_k\}$.

In this section, we will review the necessary background from graph
theory and parameterized complexity theory. We will provide some
background on logic in Section~\ref{sec:local}.

\smallskip

\noindent\textbf{Background from graph theory. } All graphs in this
paper are finite and simple, i.e., they do not have loops or multiple
edges between the same pair of vertices. Whenever we speak of a graph
we mean an undirected graph and we will explicitly mention when we
deal with directed graphs.

If~$G$ is a graph then~$V(G)$ denotes its set of vertices and~$E(G)$
its set of edges. We write~$n := |V(G)|$ for the \emph{order} of $G$.

An \emph{orientation} of~$G$ is a directed graph~$\vec{G}$ on the same
vertex set, which is denoted~$V(\vec{G})$, such that for each
edge~$\{u,v\}\in E(G)$ the set of arcs~$E(\vec{G})$ contains exactly
one of the arcs~$(u,v)$ or~$(v,u)$. For~$v\in V(\vec{G})$, the
set~$N^-(v):=\{u : (u,v)\in E(\vec{G})\}$ denotes the
\emph{in-neighbours} of~$v$ and~$N^+(v):=\{w : (v,w)\in E(\vec{G})\}$
denotes the \emph{out-neighbours} of~$v$.  The
\emph{indegree}~$d^-(v)$ of a vertex~$v$ is the number in-neighbours
of~$v$. We denote the \emph{maximum indegree} of~$\vec{G}$
by~$\Delta^-(\vec{G})$. For any directed graph~$\vec{G}$ we denote the
underlying undirected graph by~$G$.

We assume that all graphs are represented by adjacency lists so that
the total size of the representation of a graph is linear in the
number of edges and vertices. In fact we will often store an
orientation~$\vec{G}$ of a graph~$G$ and use one adjacency list for
the in-neighbours and one adjacency list for the out-neighbours of
each vertex. This representation allows to check adjacency of vertices
in time~$\Oof(\Delta^-(\vec{G}))$.

For a set~$X\subseteq V(G)$ we write~$G[X]$ for the subgraph of~$G$
induced by~$X$ and we let~$G\setminus X:=G[V(G)\setminus
X]$. For~$k\in\N$,~$G$ is \emph{$k$-degenerate} if for
each~$X\subseteq V(G)$ the graph~$G[X]$ contains a vertex of degree at
most~$k$. If a graph~$G$ is~$k$-degenerate then~$G$ contains at
most~$k\cdot n$ edges and an orientation~$\vec{G}$ of~$G$
with~$\Delta^-(G)\leq k$ can be computed in time~$\Oof(k\cdot n)$ by a
simple greedy algorithm.

The \emph{distance}~$\dist^G(u,v)$ between two vertices~$u,v\in V(G)$
is the length of a shortest path from~$u$ to~$v$ if such a path exists
and~$\infty$ otherwise. The \emph{radius}~$\rad(G)$ of~$G$
is $\min_{u\in V(G)}\max_{v\in V(G)}\dist^G(u,v)$. A vertex~$u\in
V(G)$ such that~$\max_{v\in V(G)}\dist^G(u,v)=\rad(G)$ is called a
\emph{centre vertex} of~$G$.

By~$N_r^G(v)$ we denote the \emph{$r$-neighbourhood} of~$v$ in~$G$,
i.e., the set of vertices of distance at most~$r$ from~$v$ in~$G$.  A
set~$W\subseteq V(G)$ is \emph{$r$-independent in~$G$} if
$\dist^G(u,v) > r$ for all distinct~$u,v\in W$. A~$1$-independent
set
is simply called independent.  A set~$W\subseteq V(G)$ is
\emph{$r$-scattered in~$G$} if~$N_r^G(u)\cap N_r^G(w)=\emptyset$ for
all distinct~$u,w\in W$, i.e., if it is~$2r$-independent.

A graph~$H$ is a \emph{minor} of a graph~$G$, written~$H \minor G$, if
$H$ can be obtained from a subgraph of~$G$ by contracting
edges. Equivalently,~$H$ is a minor of~$G$ if there is a map that
associates with every vertex~$v \in V(H)$ a tree~$T_v \subseteq G$
such that~$T_u$ and~$T_v$ are disjoint for~$u \neq v$ and whenever
there is an edge~$\{u, v\} \in E(H)$ there is an edge in~$G$ between
some node in~$T_u$ and some node in~$T_v$.  The subgraphs~$T_v$ are
called \emph{branch sets}. 

Let~$r\in\N$.~$H$ is a \emph{depth-$r$ minor} of~$G$, denoted~$H
\minor_r G$, if $H$ is a minor of~$G$ and this is witnessed by a
collection of branch sets~$\{ T_v \st v \in V(H) \}$, each of which is
a tree of radius at most~$r$.

For~$s\geq 1$ we denote the complete graph on~$s$ vertices by~$K_s$.

\smallskip

\noindent\textbf{Parameterized complexity. }
The complexity theoretical framework we use in this paper is
parameterized complexity theory, see \cite{DowneyF98, FlumGro06}. A
\emph{parameterized problem} is a pair~$(P,\chi)$, where~$P$ is a
decision problem and~$\chi$ is a polynomial time computable function
that associates with every instance~$w$ of~$P$ a positive integer,
called the \emph{parameter}. The \emph{model-checking problem} for
first-order logic on a class~$\CCC$ of graphs is the following
decision problem. Given an~$\FO$-sentence and a graph~$G\in\CCC$,
decide whether~$G$ satisfies~$\phi$, written~$G\models\phi$. The
parameter is~$|\phi|$. We say that the model-checking problem on a
class~$\CCC$ is \emph{fixed-parameter tractable}, or in the complexity
class FPT, if there is an algorithm that decides on input $(G,\phi)$
whether~$G\models\phi$, in time $f(|\phi|)\cdot |V(G)|^{\Oof(1)}$ for
some computable function $f:\N\rightarrow\N$.  The model-checking
problem for first-order logic on the class of all graphs is known to
be complete for the parameterized complexity class~$\AWs$, which is
widely believed to strictly contain the class FPT. Thus, it is widely
believed that model-checking for first-order logic is not
fixed-parameter tractable.

\section{Nowhere Dense Classes of Graphs}
\label{sec:nowheredense}

Nowhere dense classes of graphs were introduced by Ne\v set\v ril and
Ossona de Mendez~\cite{NesetrilOdM12,NesetrilO11} as a formalisation
of classes of ``sparse'' graphs.

\begin{definition}[Nowhere dense classes]
  A class~$\CCC$ of graphs is \emph{nowhere dense} if for every~$r$
  there is a graph~$H_r$ such that~$H_r \not \minor_r G$ for all~$G
  \in \CCC$.
\end{definition}

It is immediate from the definition that if~$\CCC$ excludes a minor
then it is nowhere dense. But note that excluding some graph as a
depth-$r$ minor is a ``local'' condition that is much weaker than
excluding it ``globally'' as a minor.

\begin{remark}\label{rem:uniform}\upshape
  We call a class~$\CCC$ \emph{effectively nowhere dense} if there is
  a computable function~$f$ such that~$K_{f(r)}\not\preceq_r G$ for
  all~$G\in\CCC$. All natural nowhere dense classes are effectively
  nowhere dense, but it is possible to construct artificial classes
  that are nowhere dense, but not effectively so.

  The way Theorem~\ref{theo:main} is stated in the introduction only
  asserts that deciding first-order properties of nowhere dense graphs
  is \emph{non-uniformly} fixed-parameter tractable. That is, for
  every~$\epsilon>0$ and every sentence~$\phi$ of first-order logic
  there is an algorithm deciding the property defined by~$\phi$ in
  time~$\Oof(n^{1+\epsilon})$. This allows for the algorithms for
  different sentences to be unrelated. For effectively nowhere dense
  classes~$\CCC$, we obtain uniform fixed-parameter tractability, that
  is, a single algorithm that, given an $n$-vertex graph~$G\in\mathcal
  C$,~$\epsilon>0$ and a sentence~$\phi$ of first-order logic, decides
  whether~$\phi$ holds in~$G$ in time ~$f(|\phi|,\epsilon)\cdot
  n^{1+\epsilon}$, for some computable function~$f$.~$\hfill\dashv$
\end{remark}

``Nowhere density'' turns out to be a very robust concept with several
seemingly unrelated natural characterisations (see
\cite{NesetrilOdM12,NesetrilO11}).  We will use several different
characterisations, each supporting different algorithmic techniques. In
the rest of this section we will recall the required equivalences.

The following characterization relates nowhere density to sparsity, albeit
sparsity in the liberal sense that the number of edges of an
$n$-vertex graph is~$n^{1+o(1)}$.

\begin{lemma}[Ne{\v s}et{\v r}il-Ossona de Mendez
  \cite{NesetrilO11}]\label{lem:dens1}
  A class~$\CCC$ of graphs is nowhere dense if, and only if, for every
 ~$r\in \N$
  \begin{equation}\label{eq:ndlimit}
    \lim_{n\to\infty}\sup\left\{\left.\frac{\log|E(H)|}{\log|V(H)|}\;\right|\;
      H\preceq_r G \text{
        with }|V(H)|\ge n, G\in\CCC\right\}\le 1.
  \end{equation}
  Here we take~$\frac{\log|E(H)|}{\log|V(H)|}$ to be~$-\infty$ if
  ~$E(H)=\emptyset$, and we take the supremum to be~$0$ if the set is
  empty, that is, if~$\CCC$ contains no graphs of order at least~$n$.
\end{lemma}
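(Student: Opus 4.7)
For the ``$\Leftarrow$'' direction I would argue contrapositively. If $\CCC$ is not nowhere dense then there is some $r\in\N$ such that for every $s$ some $G_s\in\CCC$ satisfies $K_s\preceq_r G_s$. Taking $H:=K_s$ gives a valid entry in~(\ref{eq:ndlimit}) with $|V(H)|=s$ and $\log|E(H)|/\log|V(H)|=\log\binom{s}{2}/\log s\to 2$ as $s\to\infty$. Thus for each $n$ the supremum in~(\ref{eq:ndlimit}) is at least $\log\binom{n}{2}/\log n$, so the limit is at least $2>1$, contradicting the hypothesis.

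For the harder ``$\Rightarrow$'' direction I would fix $r\in\N$ and $\epsilon>0$ and prove that there exists $n_0$ such that every $H\preceq_r G$ with $G\in\CCC$ and $|V(H)|\geq n_0$ satisfies $|E(H)|\leq|V(H)|^{1+\epsilon}$. Arguing by contradiction, suppose a sequence $H_n\preceq_r G_n\in\CCC$ with $|V(H_n)|\to\infty$ and $|E(H_n)|\geq|V(H_n)|^{1+\epsilon}$. The plan is to extract a growing complete minor of $G_n$ at a fixed depth and thereby contradict nowhere density. The key tool I would invoke is a shallow analogue of the Kostochka/Thomason theorem, which is part of the shallow-minor-density machinery of \NOdM: for every $\epsilon>0$ there exist $d=d(\epsilon)\in\N$ and a function $t:\N\to\N$ with $t(n)\to\infty$ such that every $n$-vertex graph with at least $n^{1+\epsilon}$ edges contains $K_{t(n)}$ as a depth-$d$ minor. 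Applied to $H_n$ this yields $K_{t_n}\preceq_d H_n$ with $t_n\to\infty$. Composing the two shallow-minor relations via the standard inequality that $X\preceq_a Y$ and $Y\preceq_b Z$ imply $X\preceq_{a(2b+1)+b} Z$ (each depth-$a$ branch set of $Y$ lifts in $Z$ to the union of the corresponding depth-$b$ branch sets, which admits a BFS tree of depth $a(2b+1)+b$ rooted at the centre of the central branch set), one obtains $K_{t_n}\preceq_{r^*} G_n$ for $r^*:=d(2r+1)+r$. Nowhere density of $\CCC$ supplies $s_0$ with $K_{s_0}\not\preceq_{r^*} G$ for all $G\in\CCC$, and for $n$ large enough that $t_n>s_0$ this is a contradiction.

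The main obstacle is the shallow-minor-density input: polynomial edge density $n^{1+\epsilon}$ must force a growing $K_t$ minor \emph{at a depth depending only on $\epsilon$}, not on $n$. Kostochka/Thomason by themselves only produce $K_t$ minors at \emph{a priori} unbounded depth, which is inadequate here, since the definition of nowhere density quantifies depth before the class and a $K_t$ minor at depth $\Theta(\log n)$ would yield no contradiction. Establishing the bounded-depth version requires an inductive bootstrapping argument across density scales and is the technical heart of the \NOdM\ framework that this step of the argument relies on; by contrast, the contrapositive direction and the shallow-minor composition are essentially formal manipulations.
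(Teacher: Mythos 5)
The paper does not actually prove this lemma: it is imported verbatim from Ne\v set\v ril and Ossona de Mendez \cite{NesetrilO11} (the paper only remarks that Lemma~\ref{thm:densitycharacterization} ``follows immediately from the proof of Lemma~\ref{lem:dens1}''), so there is no in-paper argument to compare yours against. Judged on its own terms, your ``$\Leftarrow$'' direction is correct: failure of nowhere density yields a single depth $r$ at which every clique $K_s$ appears as a depth-$r$ minor of some member of $\CCC$, which pushes the ratio in \eqref{eq:ndlimit} to $2$. Your composition bound $X\preceq_a Y$, $Y\preceq_b Z \Rightarrow X\preceq_{2ab+a+b} Z$ is the standard one and is applied correctly, so the reduction of the hard direction for general $r$ to the case $r=0$ (dense graphs themselves, rather than their shallow minors) is sound, as is the final step of converting an arbitrary excluded depth-$r^*$ minor into an excluded clique.

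The gap is the one you flag yourself: the ``shallow Kostochka--Thomason'' input --- that $n^{1+\epsilon}$ edges force $K_{t(n)}$ as a depth-$d(\epsilon)$ minor with $t(n)\to\infty$ and, crucially, $d$ depending only on $\epsilon$ --- is not an off-the-shelf classical theorem but is essentially the hard direction of the present lemma specialised to $r=0$. So as written the proposal reduces the lemma to an unproved statement of the same depth, and everything you do prove (the contrapositive direction, the minor composition) is the routine part. The black-boxed statement is true --- it follows from the polynomial relationship between shallow-minor densities and shallow-topological-minor densities due to Dvo\v r\'ak and to Ne\v set\v ril--Ossona de Mendez, iterated roughly $O(1/\epsilon)$ times across density scales, which is exactly the argument in \cite{NesetrilO11} --- so the proposal is a legitimate reduction to a citable result rather than a wrong approach. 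But to count as a proof rather than a restatement of where the difficulty lies, you would need to supply that depth-$0$ density-to-shallow-clique argument, and you should be explicit that you are citing it rather than deriving it from the ordinary Kostochka--Thomason theorem, which, as you correctly note, gives no control on the depth of the clique minor.
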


Note that the supremum in \eqref{eq:ndlimit} always exists, because
$\frac{\log|E(H)|}{\log|V(H)|}\le 2$ for all~$H$.  The lemma states
that, as~$n$ gets large, the number of edges in all $r$-shallow minors
of~$n$-vertex graphs in~$\CCC$, is $n^{1+o(1)}$. Thus the graphs
in~$\CCC$ are very uniformly sparse: not only the graphs and all their
subgraphs are sparse, but even all graphs that can be obtained from
subgraphs by ``local'' contractions are. As a further justification of
why nowhere dense classes are inherently interesting as a ``limit of
sparse graph classes'', Ne{\v s}et{\v r}il-Ossona de Mendez proved a
trichotomy stating that for all graph classes~$\CCC$, the limit in
\eqref{eq:ndlimit} approaches $0$ or~$1$ or~$2$ as~$r$ goes to
infinity. This means that if a class~$\CCC$ is not nowhere dense, then
in the limit it is really dense.

For our algorithmic purpose, we state the result in a different form
which follows immediately from the proof of
Lemma~\ref{lem:dens1}.

\begin{lemma}\label{thm:densitycharacterization}
  A class~$\CCC$ of graphs is nowhere dense if, and only if, there is
  a function~$f$ such that for every~$r\in \N$ and every~$\epsilon>0$,
  every depth-$r$ minor~$H$ of a graph~$G\in\CCC$ with~$n\geq
  f(r,\epsilon)$ vertices satisfies~$|E(H)|\leq
  n^{1+\epsilon}$. Furthermore,~$\CCC$ is effectively nowhere dense
  if, and only if, the function~$f$ is computable.
\end{lemma}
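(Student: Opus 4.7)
The strategy is to deduce this equivalence directly from Lemma~\ref{lem:dens1} by spelling out what the limit-superior condition means at finite stages, and then to inspect the quantitative dependence of the resulting parameters to handle the effective case.

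For the ``only if'' direction, assume $\CCC$ is nowhere dense and fix $r\in\N$ and $\epsilon>0$. By Lemma~\ref{lem:dens1}, the $\limsup$ in \eqref{eq:ndlimit} is at most $1$, so by the definition of limit there exists some threshold $f(r,\epsilon)$ such that for every $n\ge f(r,\epsilon)$ the supremum value is at most $1+\epsilon$. Unwinding the supremum, every depth-$r$ minor $H$ of a graph $G\in\CCC$ with $|V(H)|\ge f(r,\epsilon)$ satisfies $\log|E(H)|/\log|V(H)|\le 1+\epsilon$, i.e.\ $|E(H)|\le|V(H)|^{1+\epsilon}$, as required.

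For the ``if'' direction, assume $f$ exists. Fix $r$; then for every $\epsilon>0$ and every $n\ge f(r,\epsilon)$ the supremum in \eqref{eq:ndlimit} is bounded by $1+\epsilon$, so the $\limsup$ is at most $1$. By Lemma~\ref{lem:dens1} this says $\CCC$ is nowhere dense. (To see this unconditionally forces the defining property, note that taking $\epsilon=1$ and $n\ge\max\{f(r,1),16\}$ forces $|E(H)|\le n^{2}/4$ for every depth-$r$ minor $H$ on $n$ vertices, so $K_{n}$ is not a depth-$r$ minor of any $G\in\CCC$ for any such $n$; this produces the graph $H_r$ required in the definition of a nowhere dense class.)

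Finally, for the effective statement, I would trace through the two implications and observe that each conversion is computable. Given a computable $f$, the $H_r$ witnessing $K_{t}\not\preceq_r G$ can be taken as $K_{t}$ for any computable $t\ge\max\{f(r,1),16\}$, yielding a computable clique bound. Conversely, if $\CCC$ is effectively nowhere dense, with a computable function $t(r)$ such that $K_{t(r)}\not\preceq_r G$ for all $G\in\CCC$, then the quantitative proof of Lemma~\ref{lem:dens1} (which bounds the edge density of depth-$r$ minors by an explicit function of the clique-minor exclusion parameter and of $n$) produces an explicit computable $f(r,\epsilon)$ as the threshold beyond which the ratio $\log|E(H)|/\log|V(H)|$ drops below $1+\epsilon$. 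The only nontrivial point in this last step is to verify that the bound arising in Lemma~\ref{lem:dens1} is indeed effectively computable from the clique-exclusion parameter $t(r)$; this is the main obstacle, but it is immediate from inspection of that proof, which is why the present lemma is asserted to follow directly.
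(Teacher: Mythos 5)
Your overall route coincides with the paper's: the paper gives no standalone proof of this lemma, merely observing that it follows from (the proof of) Lemma~\ref{lem:dens1}, and your unwinding of the $\limsup$ condition into finite thresholds in both directions is exactly the intended argument. Deferring the effectivity claim to the quantitative content of the proof of Lemma~\ref{lem:dens1} is likewise what the paper does, and your reading of ``$n$'' as $|V(H)|$ is the correct one.

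There is, however, a concrete arithmetic error in the clique-exclusion step, which you use both in the parenthetical of the ``if'' direction and, more seriously, in the effective ``if'' direction. Taking $\epsilon=1$, the hypothesis gives $|E(H)|\le n^{1+\epsilon}=n^{2}$, not $n^{2}/4$, and since $|E(K_n)|=\binom{n}{2}<n^{2}$ this bound does not exclude $K_n$ as a depth-$r$ minor; no choice of $\epsilon\ge 1$ can work. You need a fixed $\epsilon<1$, e.g.\ $\epsilon=1/2$ together with $n$ large enough that $n^{3/2}<\binom{n}{2}$ (say $n\ge 9$), and correspondingly the computable clique bound should be $t(r)=\max\{f(r,1/2),9\}$ rather than $\max\{f(r,1),16\}$. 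This is a one-line fix and does not affect your main chain of reasoning, since the first sentence of your ``if'' direction already concludes nowhere density directly from the equivalence in Lemma~\ref{lem:dens1}; but as written, the direct construction of the excluded graph $H_r$, and hence the computable clique bound claimed in the effective ``if'' direction, fails.
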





We close the section with stating another characterisation of nowhere
dense classes that will be used below.

\begin{definition}[Uniformly quasi-wide classes]
  A class~$\CCC$ of graphs is \emph{uniformly quasi-wide} with margin
 ~$s:\N\rightarrow \N$ and~$N:\N\times\N\rightarrow \N$ if for all
 ~$r,k\in\N$, if~$G\in \CCC$ and~$W\subseteq V(G)$ with~$|W|>N(r,k)$,
  then there is a set~$S\subseteq V(G)$ with~$|S|<s(r)$, such that~$W$
  contains an~$r$-scattered set of size at least~$k$ in~$G \setminus S$.
\end{definition}

We call~$\CCC$ effectively uniformly quasi-wide if the margins~$s$ and
$N$ are computable functions.

\begin{lemma}[Ne{\v s}et{\v r}il-Ossona de Mendez
  \cite{NesetrilO11}]\label{thm:ndtouqw}
  A class~$\CCC$ of graphs is (effectively) nowhere dense if, and only
  if, it is (effectively) uniformly quasi-wide.
\end{lemma}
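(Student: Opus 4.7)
The plan is to prove both directions separately. For the easy direction, I would argue by contrapositive. Suppose $\CCC$ is not nowhere dense, so fix $r$ such that for every $t$ there exists $G_t \in \CCC$ with $K_t \preceq_r G_t$, witnessed by pairwise disjoint branch sets $\{T_v\}_{v \in [t]}$, each a tree of radius at most $r$ with centre $c_v$, and with an edge $e_{uv}$ in $G_t$ between $T_u$ and $T_v$ for every pair $u \neq v$. Let $W_t := \{c_v : v \in [t]\}$. For any $S \subseteq V(G_t)$, at most $|S|$ branch sets meet $S$; any two ``intact'' centres $c_u, c_v$ (those whose branch sets avoid $S$) are joined in $G_t \setminus S$ by the path that goes from $c_u$ through $T_u$ to one endpoint of $e_{uv}$, across $e_{uv}$, and from the other endpoint through $T_v$ to $c_v$, which has length at most $2r+1$. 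Since an $(r+1)$-scattered set requires pairwise distance greater than $2r+2$, any $(r+1)$-scattered subset of $W_t$ in $G_t \setminus S$ contains at most one intact centre and thus has size at most $|S| + 1$. Choosing $t > N(r+1,\,s(r+1)+2)$ immediately contradicts uniform quasi-wideness at radius $r+1$ and parameter $k = s(r+1) + 2$.

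\textbf{Hard direction (nowhere dense $\Rightarrow$ uniformly quasi-wide).}
This is the substantial direction; I would proceed by induction on $r$, with the trivial base case $r = 0$ (where $W$ itself is $0$-scattered with $S = \emptyset$). For the inductive step the key tool is Lemma~\ref{thm:densitycharacterization}: for every $\epsilon > 0$, any depth-$r$ minor $H$ of a graph in $\CCC$ with $|V(H)|$ sufficiently large satisfies $|E(H)| \leq |V(H)|^{1+\epsilon}$. Given $W \subseteq V(G)$ with $|W|$ large, consider the auxiliary ``closeness graph'' $I$ on $W$ with $uv \in E(I)$ iff $\dist_G(u, v) \leq 2r$. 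The strategy is to show that $I$ is sparse by constructing a depth-$r$ minor of $G$ that realises many of its edges via short paths between $W$-vertices, so that Lemma~\ref{thm:densitycharacterization} forces $|E(I)| \leq |W|^{1+\epsilon}$; a Tur\'an-type bound then produces a large independent set in $I$, which is an $r$-scattered set in $G$.

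The complication is that short paths witnessing different edges of $I$ may overlap, so the associated branch sets are not automatically disjoint. The standard remedy (following \NOdM) is to iteratively identify a small ``hub'' set of vertices that lie on many short paths between elements of $W$; adding such hubs to the separator $S$ and removing them decouples many pairs. Once no more hubs remain, the surviving short-path structure is disjoint enough to form a genuine depth-$r$ minor whose density is controlled by Lemma~\ref{thm:densitycharacterization}. The recursion bounds $|S| < s(r)$ and yields a scattered subset of size at least $k$ whenever $|W| > N(r, k)$; the functions $s$ and $N$ are defined recursively from $s(r-1)$, $N(r-1, \cdot)$, and the density function $f(r, \epsilon)$, and remain computable whenever $f$ is. The main obstacle is precisely this hub-removal/shallow-minor argument: balancing the growth of $|S|$ against the extraction of a large scattered set, and carefully constructing the depth-$r$ minor from the short paths between $W$-vertices, is the combinatorial heart of the original proof in~\cite{NesetrilO11}.
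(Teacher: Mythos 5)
First, note that the paper does not prove this lemma at all: it is imported verbatim from \NOdM~\cite{NesetrilO11}, so there is no in-paper proof to compare your argument against. Judged on its own terms, your easy direction (uniformly quasi-wide $\Rightarrow$ nowhere dense) is complete and correct: the branch-set/centre construction, the $2r+1$ bound on the distance between intact centres, the shift to radius $r+1$ so that $2r+1\le 2(r+1)$ (needed because the paper's $r$-scattered means distance $>2r$), and the count ``at most one intact centre plus at most $|S|$ damaged ones'' all check out, and the choice $t>N(r+1,s(r+1)+2)$ refutes any candidate margins $s,N$. The effectiveness statement for this direction also falls out, since $t$ is computable from $r$, $s$ and $N$.

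The hard direction, however, contains a genuine gap rather than a proof. Everything you write there is a plan: you name the auxiliary closeness graph, the intended use of Lemma~\ref{thm:densitycharacterization} plus a Tur\'an bound, and the hub-removal device for making the short paths disjoint --- and then you explicitly defer the hub-removal/shallow-minor construction as ``the combinatorial heart of the original proof.'' That deferred step is not a routine verification; it is essentially the entire content of the implication. In particular, you do not specify how a hub is defined (a vertex lying on ``many'' short paths, for some quantified threshold), why only boundedly many (fewer than $s(r)$, independent of $|W|$) hub-removal rounds are needed, how the recursion on $r$ interacts with the fact that removing hubs changes distances, nor how the surviving paths between $W$-vertices are pruned and split into pairwise disjoint trees of radius at most $r$ so that the closeness graph genuinely appears as a depth-$r$ minor of $G\setminus S$ (two paths witnessing different pairs may share interior vertices even after all hubs are gone, so some Ramsey- or counting-type uniformisation is still required). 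Without these quantitative choices the recursive definitions of $s(r)$ and $N(r,k)$ cannot be written down, and the claimed computability of the margins is likewise unestablished. So the proposal correctly identifies the known strategy of \NOdM{} but does not constitute a proof of the forward implication.
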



\section{Game theoretic characterisation of nowhere dense classes}
\label{sec:game}
We now provide a new characterisation of nowhere dense classes in
terms of a game.

\begin{definition}[Splitter game]
  Let~$G$ be a graph and
  let~${\ell},m,r>0$. The~$({\ell},m,r)$-\emph{splitter game} on~$G$
  is played by two players, ``Connector'' and ``Splitter'', as
  follows. We let~$G_0:=G$. In round~$i+1$ of the game, Connector
  chooses a vertex~$v_{i+1}\in V(G_i)$. Then Splitter picks a
  subset~$W_{i+1}\subseteq N_r^{G_i}(v_{i+1})$ of size at most~$m$. We
  let~$G_{i+1}:=G_i[N_r^{G_i}(v_{i+1})\setminus W_{i+1}]$. Splitter
  wins if~$G_{i+1}=\emptyset$. Otherwise the game continues
  at~$G_{i+1}$. If Splitter has not won after~${\ell}$ rounds, then
  Connector wins.

  A \emph{strategy} for Splitter is a function~$f$ that associates
  to every partial play $(v_1, W_1, \dots,$ $v_s, W_s)$ with
  associated sequence~$G_0, \dots, G_s$ of graphs and move~$v_{s+1}\in
  V(G_s)$ by Connector a set~$W_{s+1}\subseteq N_r^{G_s}(v_{s+1})$ of
  size at most~$m$.  A strategy~$f$ is a \emph{winning strategy} for
  Splitter in the~$({\ell},m,r)$-splitter game on~$G$ if Splitter wins
  every play in which he follows the strategy~$f$. If Splitter has a
  winning strategy, we say that he \emph{wins}
  the~$({\ell},m,r)$-splitter game on~$G$.
\end{definition}

\begin{theorem}\label{thm:splittergame}
  Let~$\CCC$ be a nowhere dense class of graphs. Then for
  every~$r>0$ there are~${\ell},m>0$, such that for
  every~$G\in\CCC$, Splitter wins the~$({\ell},m,r)$-splitter game
  on~$G$.

  If~$\CCC$ is effectively nowhere dense, then~${\ell}$ and~$m$ can be
  computed from~$r$.
\end{theorem}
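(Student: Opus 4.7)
My plan is to prove the theorem via the uniform quasi-wideness characterisation of Lemma~\ref{thm:ndtouqw}, by induction on an integer $t = t(r)$ for which $K_t \not\minor_r G$ holds for every $G \in \CCC$; such a $t$ exists by nowhere density and is computable from $r$ when $\CCC$ is effectively nowhere dense. Concretely, I would prove the relativised claim that for all $t, r \geq 1$ there are computable $\ell(t, r), m(t, r)$ such that Splitter wins the $(\ell(t, r), m(t, r), r)$-splitter game on every graph $G$ with $K_t \not\minor_r G$. The theorem then follows by instantiating at $t = t(r)$.

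The base case $t = 1$ is trivial because $K_1 \not\minor_r G$ forces $G = \emptyset$. For the inductive step, suppose Connector plays $v_{i+1} \in V(G_i)$ in a graph $G_i \subseteq G$ with $K_t \not\minor_r G_i$. Writing $B := N_r^{G_i}(v_{i+1})$, my aim is to show that Splitter can pick $W_{i+1} \subseteq B$ with $|W_{i+1}| \leq m(t, r)$ such that $G_{i+1} := G_i[B \setminus W_{i+1}]$ satisfies $K_{t-1} \not\minor_r G_{i+1}$; the inductive hypothesis then finishes the play in $\ell(t-1, r)$ more rounds, so $\ell(t, r) := \ell(t-1, r) + 1$.

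The existence of such a $W_{i+1}$ is the technical core, and is where uniform quasi-wideness enters. I would argue by contradiction: assume that for every $W \subseteq B$ with $|W| \leq m(t, r)$, the graph $G_i[B \setminus W]$ still contains a $K_{t-1}$ depth-$r$ minor. Since $v_{i+1}$ sits at distance $\leq r$ from every vertex of $B$, I intend to glue $v_{i+1}$ onto one such minor as a new hub vertex, producing a $K_t$ depth-$r$ minor of $G_i$ and contradicting $K_t \not\minor_r G_i$. To attach $v_{i+1}$ one must choose an anchor vertex in each branch set $T_j$ together with a short path from $v_{i+1}$ to that anchor, and these paths must be pairwise internally disjoint and avoid the other branch sets. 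This is arranged by iterating the hypothesised minor availability with a growing forbidden set to build an oversized pool of anchor candidates in $B$, and then applying uniform quasi-wideness to the subgraph $G_i \setminus \{v_{i+1}\}$ to extract a subpool that becomes $r$-scattered after removing some set $S$ of size $< s(r)$. Scattering in $G_i \setminus (S \cup \{v_{i+1}\})$ forces the shortest $v_{i+1}$-to-anchor paths in $G_i \setminus S$ to be pairwise disjoint except at $v_{i+1}$, since any shared internal vertex would witness two anchors being within distance $\leq 2(r-1)$ of each other there. The Splitter move $W_{i+1}$ absorbs $S$ together with the forbidden vertices used in the iteration, yielding a bound on $m(t, r)$ in terms of $s(r)$ and the UQW margin $N(r, \cdot)$.

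The main obstacle is ensuring that removing $S$ does not destroy the short-distance property $\dist^{G_i \setminus S}(v_{i+1}, u_j) \leq r$ for enough anchors, since otherwise the newly attached branch set of $v_{i+1}$ would fail to have radius $\leq r$. This can be handled by overprovisioning: applying uniform quasi-wideness with a larger scattered-set parameter and discarding the bounded number of anchors whose short paths to $v_{i+1}$ are broken by $S$, leaving at least $t-1$ usable anchors. Once this delicate bookkeeping is fixed, effectiveness of all bounds is immediate: $t(r)$ is computable from the effective nowhere-density bound and $s(r), N(r, \cdot)$ are computable by the effective version of Lemma~\ref{thm:ndtouqw}, so both $\ell(t, r)$ and $m(t, r)$ are computable from $r$ as required.
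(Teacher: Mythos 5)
Your overall architecture --- induction on the size $t$ of an excluded depth-$r$ clique minor, with the inductive step provided by a lemma saying that Splitter can delete a bounded set $W_{i+1}\subseteq N_r^{G_i}(v_{i+1})$ so that the resulting graph excludes $K_{t-1}$ as a depth-$r$ minor --- is genuinely different from the paper's, but the lemma at its core is false, and the proof does not survive without it. Concretely, take $r=2$, $t=3$ and let $G$ be the spider consisting of a centre $v$ and $M$ legs $v$--$u_i$--$w_i$ of length two. Since $G$ is a tree, $K_3\not\minor_2 G$ (indeed $K_3$ is not a minor of $G$ at any depth), and $N_2^G(v)=V(G)$. Yet for every $W$ with $|W|\le m<M$ at least one of the $M$ vertex-disjoint edges $\{u_i,w_i\}$ survives in $G[N_2(v)\setminus W]$, so $K_2\minor_2 G[N_2(v)\setminus W]$ for every bounded $W$. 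Hence no choice of $m(3,2)$ makes your inductive step go through, even though $G$ lies in every reasonable nowhere dense class. The spider also shows exactly where your scattering argument breaks: to glue $v$ onto the single $K_2$-model $(\{u_1\},\{w_1\})$ you need two internally disjoint short paths from $v$, one to each branch set, and they unavoidably collide at $u_1$. Uniform quasi-wideness hands you scattered anchors drawn from \emph{different} minor models (one per model), whereas the gluing requires disjoint access to \emph{all} $t-1$ branch sets of one and the same model; these are different things, and no amount of overprovisioning of anchors repairs that.

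For contrast, the paper's proof avoids any such local progress measure. Splitter's strategy is explicit: in round $i+1$ he deletes $\bigcup_{j\le i}V(P_{j,i+1})\cap N_r^{G_i}(v_{i+1})$, where $P_{j,i+1}$ is a path of length at most $r$ from $v_j$ to $v_{i+1}$ inside $G_{j-1}$, giving $m=\ell(r+1)$. The bound $\ell=N_\CCC(r,2s_\CCC(r))$ on the number of rounds then comes from applying uniform quasi-wideness once, globally, to the set of Connector's moves: a play of length $\ell+1$ would yield, after deleting a set $S$ of fewer than $s_\CCC(r)$ vertices, an $r$-scattered set of $2s_\CCC(r)$ of the $v_j$'s, and the connecting paths of $s_\CCC(r)$ disjoint pairs of them would each have to pass through a distinct vertex of $S$ --- a contradiction. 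If you want to salvage an induction on an excluded-minor-type parameter, you would at least have to let the depth parameter shrink along with the clique size (and correspondingly track a changing game radius), which is substantially more delicate bookkeeping than what you describe.
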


\begin{proof}
  As~$\CCC$ is nowhere dense, it is also uniformly
  quasi-wide. Let~$s_\CCC$ and~$N_\CCC$ be the margin of~$\CCC$.
  Let~$r> 0$ and let~${\ell}:=N_\CCC(r,2s_\CCC(r))$ and~$m:=
  {\ell}\cdot (r+1)$. Note that both~${\ell}$ and~$m$ only depend
  on~$\CCC$ and~$r$. We claim that for any~$G\in\CCC$, Splitter wins
  the~$({\ell},m,r)$-splitter game on~$G$.

  Let~$G\in\CCC$ be a graph. In the~$({\ell},m,r)$-splitter game
  on~$G$, Splitter uses the following strategy. In the first round, if
  Connector chooses~$v_1\in V(G_0)$, where~$G_0:=G$, then Splitter
  chooses~$W_1:=\{v_1\}$. Now let~$i>1$ and suppose that~$v_1,\ldots,
  v_i, G_1,\ldots, G_i, W_1,\ldots, W_i$ have already been
  defined. Suppose Connector chooses~$v_{i+1}\in V(G_i)$. We
  define~$W_{i+1}$ as follows. For each~$1\leq j\leq i$, choose a
  path~$P_{j,i+1}$ in~$G_{j-1}[N_r^{G_{j-1}}(v_j)]$ of length at
  most~$r$ connecting~$v_j$ and~$v_{i+1}$. Such a path must exist
  as~$v_{i+1}\in V(G_{i})\subseteq V(G_j)\subseteq
  N_r^{G_{j-1}}(v_j)$. We let~$W_{i+1}:=\bigcup_{1\leq j\leq
    i}V(P_{j,i+1})\cap N_r^{G_{i}}(v_{i+1})$. Note that~$|W_{i+1}|\leq
  i\cdot (r+1)$ (the paths have length at most~$r$ and hence consist
  of~$r+1$ vertices). It remains to be shown is that the length of any
  such play is bounded by ~${\ell}$.

  Assume towards a contradiction that Connector can play on~$G$
  for~${\ell}'={\ell}+1$ rounds. Let~$(v_1,\ldots,
  v_{{\ell}'},$ $G_1,\ldots, G_{{\ell}'}, W_1,\ldots, W_{{\ell}'})$ be
  the play. As~${\ell}'>N_\CCC(r,2s_\CCC(r))$, for~$W:=\{v_1,\ldots,
  v_{{\ell}'}\}$ there is a set~$S\subseteq V(G)$
  with~$|S|<s_\CCC(r)$, such that~$W$ contains an~$r$-scattered
  set~$I$ of size~$t:=2s_\CCC(r)$ in~$G\setminus S$.   Suppose that $I=\{u_1,\ldots,u_t\}$, where $u_j=v_{i_j}$
    for indices $1\le i_1<i_2<\ldots<i_t\le\ell'$.

  We now consider the pairs~$(u_{2j-1}, u_{2j})$ for~$1\leq j\leq
  s(r)$. By construction,~$P_j:=P_{i_{2j-1}, i_{2j}}$ is a path of
  length at most~$r$ from~$u_{2j-1}$ to~$u_{2j}$
  in~$G_{i_{2j-1}-1}$. Any path~$P_j$ must necessarily contain a
  vertex~$s_j\in S$, as otherwise the path would exist in~$G \setminus
  S$,
  contradicting the fact that~$I$ is~$r$-scattered in~$G \setminus S$. We claim
  that for~$i\neq j$,~$s_i\neq s_j$, but this is not possible, as
  there are strictly less than~$s_\CCC(r)$ vertices in~$S$. The claim
  follows easily from the following
  observation. Assume~$i>j$. Then~$V(P_j)\cap V(G_{2j-1}\subseteq W_{2j}$, thus~$V(P_j)\cap V(G_{2j+1})=\emptyset$,
  and~$V(P_i)\subseteq V(G_{2j+1})\subseteq
  V(G_{2i})$. Thus~$V(P_i)\cap V(P_j)=\emptyset$ for~$i\neq j$.
\end{proof}

\begin{remark}\label{rem:splitterrunningtime}\upshape
  In the proof of our main theorem, we will also have to compute
  Splitter's winning strategy efficiently in the following
  sense. 
  
  {\itshape Suppose that we are in a play~$v_1,W_1,\ldots,v_i,W_i$,
    and let~$G_0,G_1,\ldots,G_i$ be the graphs associated with the
    play (that is,~$G_0=G$
    and~$G_{j+1}=G_j[N_r^{G_j}(v_{j+1})\setminus W_{
      j+1}]$). For~$1\le j\le i$, let~$T_j$ be a breadth-first search
    tree of depth~$r$ in~$G_{j-1}$ with root~$v_j$.
    
    Then, given~$v_1,W_1,\ldots,v_i,W_i,v_{i+1}$ and~$T_1,\ldots,T_j$
    and Connector's move~$v_{i+1}$ in round~$(i+1)$, we can compute
    Splitter's answer~$W_{i+1}$ according to her winning strategy in
    time~$\Oof(ri|V(G_i)|+|E(G_i)|))$.  }

  To see this, recall that~$W_{i+1}:=\bigcup_{1\leq j\leq
    i}V(P_{j,i+1})\cap N_r^{G_{i}}(v_{i+1})$, where~$P_{j,i+1}$ can be
  any shortest path from~$v_j$ to~$v_{i+1}$ in~$G_{j-1}$. We choose
  the path from~$v_{i+1}$ to~$v_j$ in the tree~$T_j$. We can compute
  this path in time~$\Oof(r)$ and thus all paths in time~$\Oof(ri)$. We can
  compute~$N_r^{G_i}(v_{i+1})$ in time~$\Oof(|V(G_i)|+|E(G_i)|)$ and the
  intersection in time $\Oof(ri|V(G_i|)$.
\end{remark}

\begin{remark}\label{lem:splittersuperset}\upshape
  If Splitter wins the~$({\ell},m,r)$-splitter game on a graph~$G$, then he
  also wins if we remove in each step of the game a superset of his
  chosen set~$W$.

  We implicitly use this remark when sometimes in a graph~$G_i$
  reached after~$i$ rounds of the game and after
  choices~$v_{i+1},W_{i+1}$ in the next round we do not continue the
  game the graph~$G_{i+1}=G_i[N_r^{G_i}(v_{i+1})\setminus W_{i+1}]$,
  but in a subgraph of~$G_{i+1}$.
\end{remark}

We close the section by observing the converse of
Theorem~\ref{thm:splittergame} and hence show that the splitter game
provides another characterisation of nowhere dense classes of
graphs. 

\begin{theorem}\label{thm:splitter->nwd}
  Let~$\CCC$ be a class of graphs. If for every~$r>0$ there are~${\ell},m>0$
  such that for every graph~$G\in\CCC$, Splitter wins the~$({\ell},m,r)$-splitter
  game, then~$\CCC$ is nowhere dense.
\end{theorem}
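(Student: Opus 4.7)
The plan is to prove the contrapositive: if $\CCC$ is somewhere dense then there is a radius $r'$ such that for every $\ell,m>0$, Connector wins the $(\ell,m,r')$-splitter game on some graph $G\in\CCC$. Unfolding the definition, $\CCC$ being somewhere dense means there exists $r\in\N$ such that for every $s\in\N$ some graph in $\CCC$ contains $K_s$ as a depth-$r$ minor. I will take $r':=3r+1$ and, given $\ell,m$, pick $G\in\CCC$ together with a depth-$r$ minor embedding of $K_s$ for $s:=\ell m+1$.

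Fix branch sets $T_1,\ldots,T_s$ in $G$ witnessing $K_s\preceq_r G$, each a subtree of radius at most $r$ with a centre $c_j$, and for every pair $i\ne j$ fix one edge $e_{ij}\in E(G)$ between $T_i$ and $T_j$. Connector's strategy is to always play at the centre of an ``intact'' branch set. Formally, I maintain after round $i$ an index set $\III_i\subseteq[s]$ consisting of those $j$ for which $V(T_j)\subseteq V(G_i)$. In round $i+1$, Connector selects any $j_{i+1}\in\III_i\setminus\{j_1,\ldots,j_i\}$ and plays $v_{i+1}:=c_{j_{i+1}}$.

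The heart of the argument is a geometric estimate showing that enough branch sets survive the localisation step. For any two $j,j'\in\III_i$, every vertex of $T_{j'}$ lies in $N_{r'}^{G_i}(c_j)$: a witnessing path starts at $c_j$, traverses $T_j$ (at most $r$ steps) to the $T_j$-endpoint of $e_{j,j'}$, crosses $e_{j,j'}$, and then follows $T_{j'}$ through $c_{j'}$ to the target vertex (at most $2r$ steps), for a total length of at most $3r+1=r'$. Since $V(T_j)\cup V(T_{j'})\subseteq V(G_i)$ by the invariant, the entire path lies in $G_i$. Consequently Splitter's move $W_{i+1}$ of size at most $m$ can remove at most $m$ indices from $\III_i$, giving $|\III_{i+1}|\ge|\III_i|-m$, and by induction $|\III_\ell|\ge s-\ell m=1$. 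Hence $G_\ell\ne\emptyset$, Connector survives $\ell$ rounds, and this contradicts Splitter's assumed winning strategy.

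The one delicate point I anticipate is that distances in the shrinking subgraph $G_i$ can be larger than in $G$, so the $(3r{+}1)$-bound in $G$ is only useful if the connecting path actually survives the localisation. This is the reason for restricting Connector to centres of \emph{fully} intact branch sets: the invariant forces both $T_j$ and the fixed connecting edges $e_{j,j'}$ to persist in $G_i$, so the path produced above is a genuine path in $G_i$. Once this invariant is stated carefully the induction is entirely routine.
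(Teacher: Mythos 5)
Your proof is correct and takes essentially the same route as the paper's: argue the contrapositive, extract $K_{\ell m+1}$ as a depth-$r$ minor, have Connector play inside branch sets so that the localisation step keeps every still-intact branch set, and count that Splitter can destroy at most $m$ branch sets per round (your radius $3r+1$, obtained by playing centres, even slightly sharpens the paper's $4r+1$). The only blemish is the requirement $j_{i+1}\in\III_i\setminus\{j_1,\ldots,j_i\}$, which is never used in the argument and can in fact be unsatisfiable (Splitter may arrange that the only intact branch sets are ones Connector has already visited, e.g.\ by always deleting a vertex of an unvisited branch set); simply let Connector pick any $j_{i+1}\in\III_i$, which your invariant $|\III_i|\ge s-im\ge 1$ guarantees is possible, and nothing else in the proof changes.
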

\begin{proof}
  We show that if~$\CCC$ is not nowhere dense, i.e.,~$\CCC$ contains
  all graphs as depth-$r$ minors at some depth~$r$, then for
  all~${\ell},m>0$ there is a graph~$G\in\CCC$ such that Connector
  wins the~$({\ell},m,4r+1)$-splitter game.

  Let~${\ell},m>0$. We choose~$G\in\CCC$ such that~$G$ contains the
  complete graph~$K:=K_{{\ell}m+1}$ as a depth-$r$ minor. Connector
  uses the following strategy to win the~$({\ell},m,4r)$-splitter
  game. Connector chooses any vertex from the branch set of a vertex
  of~$K$. The~$4r+1$-neighbourhood of this vertex contains the branch
  sets of all vertices of~$K$. Splitter removes any~$m$ vertices. We
  actually allow him to remove the complete branch sets of all~$m$
  vertices he chose. In round~$2$ we may thus assume to find the
  complete graph~$K_{(\ell-1)m+1}$ as a depth-$r$ minor and continue
  to play in this way until in round~${\ell}$ at least the branch set
  of a single vertex remains.
\end{proof}

\section{Independent Sets in Nowhere Dense Classes of Graphs}
\label{sec:scattered-sets}

In this section we use the splitter game to show that the
\textsc{Distance Independent Set} problem, which is NP-complete in
general, 
is fixed-parameter tractable on nowhere dense classes of graphs. This
will be used later in the proof of our main theorem but is also of
independent interest. Recall from Section~\ref{sec:prelim} that, for
$r\geq 0$, a set
of vertices in a graph is \emph{$r$-independent} if their mutual
distance is greater than $r$. 

 \begin{theorem}\label{thm:scattered-set}
   Let~$\CCC$ be a nowhere dense class of graphs. There 
   is a function~$f$ such that for
   every~$\epsilon>0$ the following problem can be solved in
   time~$f(\epsilon,r, k)\cdot |V(G)|^{1+\epsilon}$.
   \npprob{9.5cm}{Distance Independent Set}{Graph~$G\in\CCC$,~$W\subseteq
     V(G)$,~$k, r\in\N$.}{$k+r$}{Determine whether $G$ contains an
     $r$-independent set of size $k$.}  Furthermore, if~$\CCC$ is
   effectively nowhere dense, 
   then~$f$ is computable. 
\end{theorem}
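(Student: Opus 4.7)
I combine uniform quasi-wideness (Lemma~\ref{thm:ndtouqw}) with a profile-based kernelization. Set $p:=(r+2)^{s_\CCC(r)}$ and $k':=p\cdot k$, and write $\tau:=N_\CCC(r,k')$. If $|W|\le\tau$, I brute-force by enumerating all $\binom{|W|}{k}$ subsets of $W$ and verifying $r$-independence by BFS from every vertex up to depth $r$: each BFS runs in time $\Oof(n^{1+\epsilon})$ by the sparsity bound of Lemma~\ref{thm:densitycharacterization}, giving total time $g(r,k,\epsilon)\cdot n^{1+\epsilon}$.

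If $|W|>\tau$, I invoke Lemma~\ref{thm:ndtouqw} to obtain $S\subseteq V(G)$ with $|S|<s_\CCC(r)$ and an $r$-scattered (in $G\setminus S$) subset $I\subseteq W$ of size $\ge k'$. For each $u\in I$ I compute its truncated distance profile $\pi_u\colon S\to\{0,\ldots,r,\infty\}$ with $\pi_u(s):=\min(r+1,\dist^G(u,s))$, and partition $I$ into at most $p$ profile classes. The key structural observation is that within a single class $C$ all pairs of distinct vertices have the same $G$-distance: since $\dist^{G\setminus S}(u,v)>2r$ for distinct $u,v\in C$, every $G$-path of length $\le r$ between them must enter and leave $S$, so the shortest such path has length $\min_{s,s'\in S}\pi_C(s)+\dist^G(s,s')+\pi_C(s')$, independent of the specific pair. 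Hence either all pairs in $C$ are at $G$-distance $>r$ (the ``separated'' case) or all at $\le r$ (the ``clustered'' case). By pigeonhole some class $C$ has $|C|\ge k$, and if it is separated then $C$ itself is an $r$-independent set of size $\ge k$ and I output YES.

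In the clustered case at most one vertex of $C$ can lie in any $r$-independent set, and the shared profile $\pi_C$ makes the vertices of $C$ interchangeable for extension: if $v\in C$ together with $X\subseteq W\setminus C$ forms an $r$-independent set of size $k$, so does any $v'\in C$ together with $X$. I then discard all but one representative of $C$ from $W$, producing a strictly smaller equivalent instance, and recurse. \textbf{The main obstacle} is to justify interchangeability rigorously for extension vertices $w\in W\setminus I$: the profile argument yields $\dist^G(u,w)=\dist^G(v,w)$ for $u,v\in C$ immediately when $w\in I$, but a short $G$-path from a generic $w\notin I$ to $u$ need not pass through $S$ and so may not transfer to $v$. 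I expect to resolve this by either iterating the quasi-wideness extraction so that all potentially relevant witnesses also land in a scattered set, or by choosing $S$ adaptively depending on $W$. A secondary technicality is bounding the number of reduction rounds so that the total running time is $\Oof(n^{1+\epsilon})$: a naive kernelization would shrink $|W|$ by only $|C|-1$ per round (potentially giving $\Omega(n)$ rounds), so I must extract as large a class $C$ as possible per round, for instance by taking $I$ to be a maximal scattered set rather than one of minimal size $k'$, which yields $|C|\ge|W|/p$ and thus $\Oof(\log n)$ rounds.
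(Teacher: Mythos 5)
Your route is genuinely different from the paper's, but it has a real gap at its central step, and it is the one you yourself flag. The within-class dichotomy is fine: for distinct $u,v$ in the same profile class $C$ of the scattered set $I$, every $G$-path of length at most $r$ between them must meet $S$, so $\dist^G(u,v)$ (when $\le r$) is determined by $\pi_C$. But the kernelization step requires more: soundness of ``keep one representative of $C$'' needs that if $v\in C$ extends to a solution by some $X\subseteq W\setminus C$, then the kept representative $v^*$ does too, i.e.\ $\dist^G(v^*,w)>r$ for all $w\in X$. For $w\notin I$ there is no reason a short $v^*$--$w$ path should pass through $S$, since $v^*$ and $w$ need not be far apart in $G\setminus S$. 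Concretely, with $S=\{s\}$, $r=2$, take $u,v\in C$ both adjacent to $s$ with $\dist^{G\setminus S}(u,v)>2r$, and $w\in W$ adjacent to $u$ but with $\dist^G(v,w)=3$: then $\{v,w\}$ is $r$-independent but $\{u,w\}$ is not, so discarding $v$ in favour of $u$ loses the solution. Your two proposed fixes are not worked out and neither is immediate: you cannot make all of $W$ scattered, and ``choosing $S$ adaptively'' is exactly the hard part (the kernelization literature for distance-$r$ problems on nowhere dense classes resolves this with substantially more machinery, e.g.\ closure/projection arguments). Two secondary issues: Lemma~\ref{thm:ndtouqw} is purely combinatorial, so you still owe an algorithm extracting $S$ and $I$ in time $f(r,k,\epsilon)\cdot n^{1+\epsilon}$; and your round count gives an extra $\log n$ factor, which is absorbable by running with $\epsilon/2$ but should be said.

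For contrast, the paper avoids quasi-wideness here entirely (it is used only to prove Theorem~\ref{thm:splittergame}) and instead recurses along the splitter game: greedy maximal rainbow $r$-independent sets localise the problem to a bounded-radius region, Splitter's move deletes a small set $M$, and --- this is precisely the mechanism you are missing --- the algorithm refines the colouring by the distance vector of each vertex to $M$, so that $r$-independence and rainbow-ness in the smaller graph, together with the validity conditions on sub-colourings, certify $r$-independence in the original graph. The recursion depth is bounded by the game length $\ell$, which depends only on $\CCC$ and $r$, giving the $f(\epsilon,r,k)\cdot n^{1+\epsilon}$ bound directly. If you want to salvage your approach, the distance-profile bookkeeping has to be carried along for \emph{all} of $W$ relative to the deleted/selected vertices, not just for the scattered set $I$; that is essentially what the paper's colour refinement does.
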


We will show that we can solve a coloured version of the problem,
called the \textsc{Rainbow Distance Independent Set} problem, and reduce
the original  distance independent set problem to the rainbow distance
independent set problem.
We first give a formal definition of rainbow sets.

\begin{definition}
  A \emph{coloured graph} $(G, C_1, \dots, C_t)$ is a graph $G$
  together with relations $C_1,\ldots C_t\subseteq V(G)$, called
  \emph{colours},  such that
  $C_i\cap C_j =\emptyset$ for all $i\not=j$.  A vertex $v\not\in
  \bigcup_{1\leq i \leq t} C_i$ is called 
  \emph{uncoloured}.
  A set $X\subseteq V(G)$ is a \emph{rainbow set}
  if all of its elements have distinct colours (and no vertex is uncoloured).

  The \textsc{Rainbow Distance Independent Set} problem on a class
  $\CCC$ of graphs is the following problem.
  \npprob{10.5cm}{
    Rainbow Distance Independent Set (Rainbow
    DIS)}{Graph~$G\in\CCC$, $C_1,\ldots, 
    C_t\subseteq V(G)$,~$k, r\in\N$.}{$k+r$}{Determine whether $G$
    contains a rainbow $r$-independent set of size $k$. }
\end{definition}

Before we describe the algorithm for solving the \textsc{Rainbow Distance
Independent Set} problem, let us show how the plain \textsc{Distance
  Independent Set} problem can be reduced to the rainbow version. 

The \emph{lexicographic product} $G\bullet H$ of two graphs $G$ and
$H$ is defined by $V(G\bullet H)= V(G)\times V(H)$ and $E(G\bullet
H)=\big\{\{(x,y), (x',y')\} : \{x,x'\}\in E(G)$ or $\big(x=x'$ and
$\{y,y'\}\in E(H)\big)\big\}$. The graph $G\bullet H$ has a
natural coloured version $G\colprod H$: we associate a colour with every vertex of
$H$ and colour every vertex of $G\bullet H$ by its projection on $H$. That is, the
colour of $(x,y)$ is $y$ (or the colour associated with $y$).
It is easy to see that a graph $G$ has an $r$-independent set of size
$k$ if and only if $G\colprod K_k$ has a rainbow $r$-independent of
size $k$. This gives us the reduction from distance independent sets
to their rainbow variant. 
Furthermore, observe that if Splitter wins the $(l, m, r)$-splitter game
on a graph $G$, for some $r, l, m\geq 0$,  then he also wins the $(l,
k\cdot m, r)$-splitter game on $G\bullet K_k$, for all $k$.
As a consequence, together with Theorem~\ref{thm:splittergame} and
Theorem~\ref{thm:splitter->nwd} this implies a different and very simple
proof of the following result by \NOdM\xspace (Theorem 13.1 of
\cite{NesetrilOdM12}) that nowhere
dense classes of graphs 
are preserved by taking lexicographic products in the following sense.

\begin{corollary}\label{cor:lexprod}
  If $\CCC$ is a nowhere dense
  class of graphs then for every $k\geq 0$, $\{ G\bullet K_k \st G\in
  \CCC\}$ is also nowhere dense.
\end{corollary}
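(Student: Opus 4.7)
The plan is to combine the splitter game characterization with the observation, quoted just before the corollary, that Splitter's winning strategies transfer along lexicographic products. Fix $k\ge 0$; the cases $k\le 1$ are trivial, so assume $k\ge 2$, and let $\CCC':=\{G\bullet K_k : G\in\CCC\}$. By Theorem~\ref{thm:splitter->nwd}, to prove that $\CCC'$ is nowhere dense it suffices to exhibit, for every $r>0$, constants $\ell,m'>0$ such that Splitter wins the $(\ell,m',r)$-splitter game on every $H\in\CCC'$.

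Given $r>0$, apply Theorem~\ref{thm:splittergame} to $\CCC$ to obtain $\ell,m$ such that Splitter wins the $(\ell,m,r)$-splitter game on every $G\in\CCC$. I will show that Splitter then wins the $(\ell,km,r)$-splitter game on $G\bullet K_k$ for every $G\in\CCC$, which gives the desired statement with $m':=km$.

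The proof proceeds by direct simulation, based on two easy identities valid for $r\ge 1$: first, $N_r^{G\bullet K_k}((x,y))=N_r^G(x)\times V(K_k)$ for every $(x,y)\in V(G\bullet K_k)$; second, $(G\bullet K_k)[X\times V(K_k)] = G[X]\bullet K_k$ for every $X\subseteq V(G)$. Splitter maintains the invariant that after round $i$ of the game on $G\bullet K_k$, the current graph has the form $G_i\bullet K_k$, where $G_i$ is the current graph in an accompanying simulated play on $G$. When Connector plays $(x_{i+1},y_{i+1})$, Splitter regards this as Connector's move $x_{i+1}$ in the game on $G_i$, consults her winning strategy there to obtain a set $W_{i+1}\subseteq N_r^{G_i}(x_{i+1})$ with $|W_{i+1}|\le m$, and answers in $G\bullet K_k$ by removing $W_{i+1}\times V(K_k)$. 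This set lies inside $N_r^{G_i\bullet K_k}((x_{i+1},y_{i+1}))$ by the first identity and has size at most $km$; by the second identity, the successor graph in the game on $G\bullet K_k$ is precisely $G_{i+1}\bullet K_k$, so the invariant persists. Since Splitter's simulated strategy empties $G_i$ within at most $\ell$ rounds, the same bound applies in $G\bullet K_k$.

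The only nonroutine step is verifying the neighborhood identity: projection of any path in $G\bullet K_k$ yields a walk in $G$ of the same length (collapsing the $K_k$-edges), so $\dist^G(x,x')\le \dist^{G\bullet K_k}((x,y),(x',y'))$; conversely, any $G$-path of length $s\le r$ can be lifted to $G\bullet K_k$ with arbitrary second coordinate along the way, and for $r\ge 1$ a final $K_k$-edge (or none, if already matched) adjusts the endpoint's second coordinate, giving $\dist^{G\bullet K_k}((x,y),(x',y'))\le r$ whenever $x'\in N_r^G(x)$. With this identity in hand, the simulation above and Theorem~\ref{thm:splitter->nwd} together establish that $\CCC'$ is nowhere dense.
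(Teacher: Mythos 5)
Your proof is correct and follows exactly the route the paper takes: it combines Theorem~\ref{thm:splittergame} and Theorem~\ref{thm:splitter->nwd} with the observation (stated without proof in the paper just before the corollary) that a winning strategy for Splitter in the $(\ell,m,r)$-game on $G$ yields one for the $(\ell,km,r)$-game on $G\bullet K_k$. The only difference is that you supply the details of that observation --- the neighbourhood identity $N_r^{G\bullet K_k}((x,y))=N_r^G(x)\times V(K_k)$ and the induced-subgraph identity --- which the paper leaves to the reader; these details are verified correctly.
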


Note, however, that the reduction above reduces \textsc{Distance
  Independent 
  Set} on a class $\CCC$ of graphs to \textsc{Rainbow
  Distance Independent Set} on the class $\bigcup_{k\ge 1}\CCC\bullet
K_k$, where $\CCC\bullet
K_k:=\{G\bullet H : G\in\CCC\}$. For the non-uniform version of our
results, this is no problem, because by the previous result, if $\CCC$ is nowhere dense then $\CCC\bullet
K_k$ is nowhere dense as well, and in the nonuniform setting we only
have to deal with fixed $k$. We need to be slightly more careful for
the uniform version. The key insight is that we can easily translate a
winning strategy for Splitter in the $(\ell,m,r)$-splitter game on a
graph $G$ to a winning strategy in the $(\ell,km,r)$-splitter game on
$G\bullet K_k$.

We are now ready to use this reduction to complete the proof of 
Theorem~\ref{thm:scattered-set}.  Let $\epsilon>0$ and let $\ell, m$ be chosen according to Theorem~\ref{thm:splittergame}
  such that Splitter has a winning strategy for the $(\ell,m,
  4k^2r)$-splitter game on every graph in $\CCC$.
 Choose
  $n_0=n_0(\epsilon)$ according to
  Theorem~\ref{thm:densitycharacterization} such that every graph
  $G\in\CCC$ of order $n\geq n_0$ has at most $n^{1+\epsilon}$ many
  edges. 

Suppose we are given an instance $G, k, r, W$ of
\textsc{Distance Independent 
  Set}, where $G\in \CCC$.  We first compute the coloured graph $G' :=
G\colprod K_k$. Let $C_1, \dots, C_t$, where $t := k$,  be the colours of $G'$. As
explained above, Splitter wins the $(l, mk, 4k^2r)$-splitter game on
$G'$ and his winning strategy can easily be computed from
any winning strategy for the $(\ell, m, 4k^2r)$-splitter game on $G$.

We need to 
decide if $(G', C_1, \dots, C_t)$ has a rainbow $r$-independent set of 
  size $k$. If
  $n=|V(G)|\leq n_0$, we test whether this set exists by brute force. In this case the
  running time is bounded by a function of $r,k$ and $\epsilon$. So
  let us assume $n\geq n_0$.

  Let $G_1:=G'$. We compute an inclusion-wise maximal rainbow $r$-independent set
  $I_1=\{x_1^1,\ldots, x_1^{k_1}\}$ of size $k_1\leq k$ by a greedy
  algorithm. If $k_1=k$, we are done and return the independent
  set. Otherwise, we may assume without loss of generality that
  $x_i^j$ has colour $j$. Let $X_1:=N_r(I_1)$. Then all elements
  with colours $k_1+1,\ldots, t$ are contained in $X_1$. Let
  $Y_1:=N_r(X_1)$. Then all paths of length at most $r$ between
  elements of colour $k_1+1,\ldots, k$ lie inside $Y_1$. Let
  $G_2:=G_1\setminus Y_1$.

  We continue by computing an inclusion-wise maximal rainbow $r$-independent set in
  $G_2$. Denote this set by $I_2=\{x_2^1,\ldots, x_2^{k_2}\}$. Note
  that all occurring colours are among $1,\ldots, k_1$ and in
  particular we have $k_2\leq k_1$ because no other colours occur in
  $G_1\setminus Y_1$. Again we may assume without loss of generality that $x_2^i$
  has colour $i$. Let $X_2:=N_r(I_2)$. Then we find all elements with
  colours $k_2+1,\ldots, t$ in $X_1\cup X_2$. We let
  $Y_2:=N_r(X_2)$. Let $G_3:=G_2\setminus Y_2$.

  We repeat this construction until $k_s=k_{s+1}$ or until
  $G_{s+1}=\emptyset$. Note that $s\leq k$, because $k_1< k$. In the
  first case we have constructed $s+1$ sets $I_i=\{x_i^1,\ldots,
  x_i^{k_i}\}$, $X_i$ and $Y_i$ such that $x_i^j$ has colour~$j$ for
  $1\leq i\leq s+1$, $1\leq j\leq k_s$. Furthermore, the colours
  $k_s+1,\ldots, t$ occur only in $X_1\cup\ldots\cup X_s$ and all
  paths of length at most $r$ between vertices of these colours lie in
  $Y_1\cup\ldots\cup Y_s$. By construction, no vertex of colour
  $k_s+1,\ldots, t$ has distance at most $r$ to any vertex of
  $I_{s+1}$. Hence we may assume that any rainbow $r$-independent
  set includes the vertices $x_{s+1}^1,\ldots, x_{s+1}^{k_s}$ of
  colour $1,\ldots, k_s$. It remains to solve the rainbow
  $r$-independent set problem with parameter $k':=k-k_s$ and colours
  $k_s+1,\ldots, t$ on $G':=G[Y_1\cup\ldots \cup Y_s]$. 

  In the other case ($G_{s+1}=\emptyset$) we also let
  $G'':=G[Y_1\cup\ldots Y_s]$. The only difference is that we have
  to solve the original problem with parameter $k'=k$.

  If $G''$ is not connected, let $U_1, \dots, U_c\subseteq G''$ be the
  components of $G''$. For all possible partitions of the set $C_1,
  \dots, C_t$ of colours into parts $\VVV_1, \dots, \VVV_c$ we
  proceed as follows. For all $1\leq i\leq c$ we delete all colours
  from $U_i$ not in $\VVV_i$, i.e.~work in the coloured graph $(U_i,
  \VVV_i)$. 
  We then solve the problem separately
  for all components $(U_i, \VVV_i)$ and for each component determine the maximal
  value $k''\leq k'$ so that $(U_i, \VVV_i)$ contains a rainbow
  $r$-independent set. We then simply check whether for some partition
  $(V_1, \dots, \VVV_c)$ of the colours the maximal values
  for the individual components sum up to at least $k'$.

  Hence, we can assume
  that $G''$ is connected. 
  Then $G'''$ has diameter at most $4k^2\cdot r$ (there are at
  most $\sum_{i=1}^k i\leq k^2$ many vertices in the independent sets
  surrounded by their $2r$-neighbourhoods of diameter at most
  $4r$). Hence the radius of $G''$ is at also at most $4k^2\cdot
  r$.

  Let $v$ be a centre vertex of $G''$. We let $v$ be Connector's choice
  in the $(\ell, km, 4k^2r)$-splitter game and let $M$ be Splitter's
  answer. Without loss of generality we assume that 
  $M=\{m_1,\ldots, m_m\}\neq \emptyset$. We let $G''':=G''\setminus M$ and continue with
  a different colouring of $G'''$ as follows. Let $X\subseteq M$ be a
  rainbow $r$-independent set in $G''$, possibly $X=\emptyset$ (we
  test for all possible sets $X\subseteq M$ whether they are rainbow
  $r$-independent sets and recurse with every possible such set).  We
  remove the colours occurring in $X$ completely from the graph and
  furthermore we remove the colour of vertices from $N_r^{G''}(X)$. 

  We now change the colours of $G'''$ as follows. For every colour
  $C_i$, with $1\leq i \leq t$, and every \emph{distance vector} $\bar{d} := (d_1,\ldots,
  d_m)$, where $d_i\in\{1,\ldots, r, \infty\}$, we add a new colour
  $C_{i,\bar{d}}$ and set $C_{i, \bar{d}}$ to be the set of all vertices
  $w\in C_i$ such that
  $\dist_{G''}(w,m_i) = d_i$, for all $1\leq i\leq r$, where we define
  $\dist_{G''}(w,m_i) = \infty$ if the distance is bigger then $r$.
  Note that the number of colours added in this way is only $t\cdot
  d'$, where $d' := (r+1)^m$ is the number of distance vectors,
  and hence only depends on the number of original colours and $r$ and
  $m$.
  We call a
  subset $C_{i_1, \bar{d}_{1}},\ldots, C_{i_{t''}, \bar{d}_{{t''}}}$ of the colours a \emph{valid
  sub-colouring} if the colours satisfy the
  following constraints:
  \begin{enumerate}
  \item If $C_{i_j, \bar{d}_j}\neq \emptyset$ for a colour which states that
    the distance to some element $m \in M$ is $r'<r$, then
    $D_{i_{j'}, \bar{d}_{j'}}=\emptyset$ for all colours which state that the
    distance to $m$ is at most $r-r'$.
  \item If $C_{i_j, \bar{d}_j}$ and
    $C_{i_{j'}, \bar{d}_{j'}}$ are colours such that $i_j = i_{j'}$
    and $\bar d_j\neq\bar d_{j'}$ then $C_{i_j, \bar{d}_j}=\emptyset$ or
    $C_{i_{j'}, \bar{d}_{j'}}=\emptyset$.
  \end{enumerate}

  We now check for all possible sub-colourings $C_{i_1,
    \bar{d}_{i_1}},\ldots, C_{i_{t''}\bar{d}_{t''}}$ of $G'''$ whether
  they are valid and for each valid sub-colouring we recursively call
  the algorithm on $G'''$ with 
  colouring $C_{i_1, \bar{d}_{i_1}},\ldots, C_{i_{t''}\bar{d}_{t''}}$
  and parameter $k'':=k'-|X|$.
 The number of valid
  sub-colourings only depends on the original number of colours and on
  $m$ and $r$.

  We claim that this procedure correctly decides whether $G''$
  contains a rainbow $r$-independent set of size $k'$. If there
  exists such a set $Z$, let $X:=M\cap Z$. Then $X$ will be considered
  as one of the potential sets to be extended by the algorithm. No
  vertex from $Z\setminus X$ may have a colour of $X$, hence we may
  remove these colours completely from the graph. Furthermore, $Z\cap
  N_r(X) = X$, hence we may remove the colours from $N_r(X)$. Also, if
  $u\in Z$ with $\dist_{G''}(u,m)=r'< r$ for some $m\in M$, then
  $v\not\in Z$ for all $v$ with $\dist_{G''}(v,m)\leq r-r'$. Hence we
  will find $Z$ in the graph where all colours which state that the
  distance to $m$ is at most $r-r'$ are removed. Conversely assume
  that the algorithm has chosen a rainbow $r$-independent set $I$ in
  $G'''$ of size $k'-|X|$ for some $X\subseteq M$ and some valid
  sub-colouring of a colouring which is consistent with $X$. By
  Condition (1) of valid sub-colourings, $I$ is also an
  $r$-independent set in $G''$. By Condition (2) of valid
  sub-colourings, $I$ is also rainbow in $G''$. 

  We now analyse the running time of the algorithm. First observe that
  in a recursive call the parameters $r$ and $m$ are left
  unchanged and $k$ can only decrease. Moreover it follows from the definition of $G'''$ that
  Splitter has a winning strategy for the $(\ell-1, km, 4k^2r)$-splitter
  game on $G'''$. Thus in each recursive call we can reduce the parameter $\ell$ by $1$. Once
  we have reached $\ell=0$, the graph $G'''$ will be empty and the
  algorithm terminates. 

  There is one more issue we need to attend to, and that is how we
  compute Splitter's winning strategy, that is, the sets~$M$. We use
  Remark~\ref{rem:splitterrunningtime}. This means that to compute~$M$
  in some recursive call, we need the whole history of the game (in a
  sense, the whole call stack). In addition, we need a breadth-first
  search tree in all graphs that appeared in the game before. It is no
  problem to compute a breadth-first search tree once when we first
  need it and then store it with the graph; this only increases the
  running time by a constant factor.

  Let us first describe the running time of the algorithm on level $j$
  of the recursion.  The time for computing $k$ maximal
  $r$-independent sets of size at most $k$ and their
  $2r$-neighbourhoods can be bounded by time $c_0\cdot
  n^{1+\epsilon}$. The factor $n^{1+\epsilon}$ stems from the
  breadth-first searches we have to perform in order to find the sets
  $Y(i)$ and Splitter's strategy and $c_0$ is a constant depending
  only on $r$, $k, \epsilon$ and $\CCC$.

  As the initial number of colours was $k$ and the number of colours
  in every recursive step increases by a factor depending only on $r$
  and $m$ (which depends only on $r,k$ and $\CCC$), the total number
  of colours depends only on $r, k$ and $\CCC$. Hence the number of
  rainbow $r$-independent subsets $X$ of an occurring set $M$ is
  bounded by a constant $c_1$ depending only on $r$, $k$ and
  $\CCC$. The number of valid sub-colourings in any recursive step is
  bounded by a constant $c_2$ depending only on $r, k$ and
  $\CCC$.

  Furthermore, for $n\leq n_0$ the running time can be bounded by a
  constant $c_3$ that only depends on $k,r, \epsilon$ and $\CCC$. For
  $j=0$, the running time can be bounded by a constant $c_4$ depending
  only on $k, r, \epsilon$ and $\CCC$. We obtain the following
  recurrence for $T$. 

  \begin{align*}
    T(0)&\le c_3+c_4,\\
    T({j})&\le c_3+c_0\cdot n^{1+\epsilon} + c_1\cdot c_2\cdot T(j-1) &\text{for
      all }{j}\ge 1.
  \end{align*}
  We conclude that there is a constant $c$ depending only on $k, r,
  \epsilon$ and $\CCC$ such that $T(\ell)\leq c\cdot n^{1+\epsilon}$.

This completes the proof of Theorem~\ref{thm:scattered-set}.
\qed

\section{Sparse Neighbourhood Covers}
\label{sec:covers}

Neighborhood covers of small radius and small size play a key role in
the design of many data structures for distributed systems.  Such
covers will also form the basis of the data structure constructed in
our first-order model-checking algorithm on nowhere dense classes of
graphs. In this section we will show that nowhere dense classes of
graphs admit sparse neighbourhood covers of small radius and small
size and present an fpt-algorithm for computing such covers.

\begin{definition}
  For~$r\in\N$, an \emph{$r$-neighbourhood cover}~$\XXX$ of a
  graph~$G$ is a set of connected subgraphs of~$G$ called
  \emph{clusters}, such that for every vertex~$v\in V(G)$ there is
  some~$X\in\XXX$ with~$N_r(v)\subseteq X$. 

  The
  \emph{radius}~$\rad(\XXX)$ of a cover~$\XXX$ is the maximum radius
  of any of its clusters. The \emph{degree}~$d^\XXX(v)$ of~$v$
  in~$\XXX$ is the number of clusters that contain~$v$. The
  \emph{maximum degree}~$\Delta(\XXX)$ of~$\XXX$
  is~$\Delta(\XXX)=\max_{v\in V(G)} d^\XXX(v)$.  The size of~$\XXX$
  is~$\norm{\XXX}=\sum_{X\in\XXX}|X|=\sum_{v\in V(G)}d^\XXX(v)$.
\end{definition}

The main result of this section is the following theorem.

\begin{theorem}\label{thm:alg-covers}
  Let~$\CCC$ be a nowhere dense class of graphs. There is a
  function~$f$ such that for all~$r\in\N$ and~$\epsilon>0$ and all
  graphs~$G\in\CCC$ with~$n\geq f(r,\epsilon)$ vertices, there exists
  an~$r$-neighbourhood cover of radius at most~$2r$ and maximum degree
  at most~$n^\epsilon$ and this cover can be computed in
  time~$f(r,\epsilon)\cdot n^{1+\epsilon}$. Furthermore, if~$\CCC$ is
  effectively nowhere dense, then~$f$ is computable.
\end{theorem}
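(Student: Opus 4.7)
The plan is to construct the cover as a collection of balls of radius $2r$ around a carefully chosen set of centers, and to bound the local density of these centers using the sparsity characterisation of nowhere dense classes.

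First, I would compute a set $S \subseteq V(G)$ that is both $r$-independent and $r$-dominating via the standard greedy procedure: initialise $S := \emptyset$, $U := V(G)$; while $U$ is nonempty, pick any $v \in U$, add it to $S$, and remove $N_r^G(v) \cap U$ from $U$. Each iteration performs one depth-$r$ BFS, and by Lemma \ref{thm:densitycharacterization} the edge count of $G$ (and hence the total BFS work, up to a $\mathrm{poly}(r)$ factor) is at most $n^{1+\epsilon/2}$ once $n \geq f(r,\epsilon)$ for an appropriate $f$. I would then set $\XXX := \{N_{2r}^G(s) : s \in S\}$. By construction each cluster has radius at most $2r$, and for any $v \in V(G)$ there is some $s \in S$ with $\dist^G(s,v) \leq r$ by $r$-dominance, whence $N_r(v) \subseteq N_{2r}(s)$; so $\XXX$ is a legitimate $r$-neighbourhood cover.

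The core of the proof is the degree bound $d^\XXX(w) = |S \cap N_{2r}(w)| \leq n^\epsilon$ for every $w$. The set $T := S \cap N_{2r}(w)$ is $r$-independent (inherited from the greedy choice for $S$) and lies inside $N_{2r}(w)$. The central claim is: for $n$ sufficiently large in terms of $r$, $\epsilon$, and $\CCC$, every $r$-independent set contained in a $2r$-ball has size at most $n^\epsilon$. I would prove this by combining the two characterisations of nowhere denseness: quasi-wideness (Lemma \ref{thm:ndtouqw}) yields a separator $X \subseteq V(G)$ with $|X| < s_\CCC(r)$ after which $T$ contains a large $r$-scattered subset $T'$ in $G \setminus X$; because $T' \subseteq N_{2r}(w)$, every shortest $t$-to-$w$ path of length at most $2r$ for $t \in T'$ must be broken by $X$ (else two elements of $T'$ would still be close in $G \setminus X$ via $w$), and pigeonhole on $X$ forces many elements of $T'$ into a common ball $N_{2r}(x)$ for some $x \in X$. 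Iterating this reduction at most $s_\CCC(r)$ times produces a nested sequence of small ambient balls; combining the scattering conditions at each level with the density bound of Lemma \ref{thm:densitycharacterization} (which, applied to a sufficiently dense depth-$O(r)$ minor on $T$, forces $|T|$ to be small) contradicts the assumption $|T| > n^\epsilon$.

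Finally, materialising each cluster $N_{2r}(s)$ takes a single BFS of depth $2r$ from $s$, so the overall running time is again dominated by BFS work and fits into $f(r,\epsilon)\cdot n^{1+\epsilon}$ by Lemma \ref{thm:densitycharacterization}. The main obstacle will be the core claim on $|T|$. The naive BFS-tree argument around $w$ only produces a star $K_{1,|T|}$ as a depth-$2r$ minor on $|T|+1$ vertices, whose $|T|$ edges give no useful consequence of the density bound. Obtaining the sharp bound $|T| \leq n^\epsilon$ requires either the iterated quasi-wideness reduction sketched above (whose bookkeeping of the ambient ball across iterations is the delicate part), or a more careful tree-carving in a BFS tree rooted at $w$ that pairs up elements of $T$ via disjoint ``tubes'' to embed a much denser depth-$O(r)$ minor (for instance a linear-size matching enriched with the centre vertex $w$ so that Lemma \ref{thm:densitycharacterization} yields a nontrivial restriction on $|T|$).
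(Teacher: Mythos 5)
Your construction has a fatal gap at its central claim, and the gap is not a matter of missing bookkeeping: the claim is false. Consider the spider $G$ obtained from a hub $w$ by attaching $n/(2r)$ vertex-disjoint paths (``legs'') of length $2r$. This is a tree, and the class of trees is nowhere dense. The leg tips are pairwise at distance $4r$ (every tip-to-tip path passes through $w$), so they form an $r$-independent set, and they all lie in $N_{2r}(w)$. Hence an $r$-independent set inside a $2r$-ball can have size $\Omega(n/r)$ rather than $n^{\epsilon}$. Worse, the failure is not specific to the greedy choice of $S$: \emph{any} $r$-dominating set must contain, for each leg, a vertex within distance $r$ of that leg's tip, and every such vertex lies in $N_{2r}(w)$; so for every admissible choice of $S$ your cover $\{N_{2r}(s) : s\in S\}$ gives the hub degree $\Omega(n/r)$. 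The ``balls around an $r$-net'' architecture therefore cannot meet the degree bound at all. Your two proposed repairs also fail on this example: quasi-wideness (Lemma~\ref{thm:ndtouqw}) only \emph{lower}-bounds scattered subsets of large sets --- deleting the single vertex $w$ already makes the tips scattered, which is entirely consistent with their being a huge independent set in a ball, and your pigeonhole step just returns all tips to a ball around $w$ and makes no progress --- and no ``tube-pairing'' can embed anything denser than a star, since $G$ is a tree and all of its (depth-bounded) minors are forests, so Lemma~\ref{thm:densitycharacterization} yields no contradiction.

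The paper avoids this by choosing clusters \emph{asymmetrically} with respect to a linear order on $V(G)$: using transitive fraternal augmentations it computes an order $<$ witnessing $\wcol_{2r}(G)\le n^{\epsilon}$ (Corollary~\ref{thm:order}) and takes as clusters the sets $X_{2r}[G,<,v]=\{w : v\in\wreach_{2r}[G,<,w]\}$. The covering property holds because $N_r(v)$ is contained in the cluster of the $<$-least element of $N_r(v)$, and the degree of $w$ in the cover is \emph{exactly} $|\wreach_{2r}[G,<,w]|\le n^{\epsilon}$, so the degree bound is built into the definition rather than derived from an independence condition. On the spider this puts $w$ first in the order; the hub owns one big cluster and every other vertex lies in only $O(r)$ clusters. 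Any salvage of a ball-based construction would have to abandon the idea that the cluster covering $N_r(v)$ is centred near $v$; the ordering is precisely the device that lets many far-apart vertices share a single cluster anchored at a common small element.
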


To prove the theorem we use the concept of generalised colouring
numbers introduced by Kierstead and Yang in \cite{KiersteadY03}.  For
a graph~$G$, let~$\Pi(G)$ be the set of all linear orderings
of~$V(G)$. 
For~$u,v\in V(G)$ and~$k\in \N$, we say that~$u$ is
\emph{weakly~$k$-accessible} from~$v$ with respect to~$<\in\Pi(G)$
if~$u< v$ and there is a~$u{-}v$-path~$P$ of length at most~$k$ such
that for all~$w\in V(P)$ we have~$u\leq w$. We write~$\leq$ for the
reflexive ordering induced by~$<$. Let~$\wreach_k(G,<, v)$ be the set
of vertices that are weakly $k$-accessible from~$v$ and let
$\wreach_k[G,<,v]:=\wreach_k(G,<,v)\cup\{v\}$. The
\emph{weak~$k$-colouring number}~$\wcol_k(G)$ of~$G$ is defined as
\begin{align*}
  \wcol_k(G)=\min_{<\in\Pi(G)}\max_{v\in V(G)}|\wreach_k[G,<,v]|.
\end{align*}

Zhu \cite{Zhu09} (and in fact also Kierstead and Yang but they were
not aware of the depth-$r$ minor terminology) showed that general
colouring numbers and densities of depth-$r$ minors are strongly
related. From this, Ne{\v s}et{\v r}il and Ossona de Mendez conclude
that the weak colouring number on nowhere dense classes is small.

\begin{lemma}[\cite{Zhu09,  NesetrilO11}]\label{thm:wcolcharacterization}
  Let~$\CCC$ be a nowhere dense class of graphs. Then there is a
  function~$f$ such that for every~$r\in\N$, every~$\epsilon>0$, every
  graph~$G\in\CCC$ with~$n\geq f(r,\epsilon)$ vertices satisfies
  $\wcol_r(G)\leq n^\epsilon$. Furthermore, if~$\CCC$ is effectively
  nowhere dense, then~$f$ is computable.
\end{lemma}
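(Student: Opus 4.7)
The plan is to combine two ingredients: (i) Zhu's theorem from \cite{Zhu09} (anticipated in Kierstead--Yang~\cite{KiersteadY03}) relating the weak $r$-colouring number of a graph to the densities of its shallow minors, and (ii) the density characterisation of nowhere dense classes already recorded as Lemma~\ref{thm:densitycharacterization}. Concretely, Zhu shows that there is a function $g\colon\N\times\R\to\R$ such that for every graph $G$ and every $r\in\N$,
\[
  \wcol_r(G)\;\le\;g\bigl(r,\,\nabla_r(G)\bigr),
\]
where $\nabla_r(G):=\max\{|E(H)|/|V(H)|\st H\minor_r G\}$ is (half) the greatest reduced average density of $G$ at depth $r$, and $g$ grows polynomially in its second argument with exponent depending on $r$ (roughly $g(r,x)=x^{\Oof(r)}$). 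This is the ``input'' I would take as a black box from~\cite{Zhu09}, since reproving it lies outside the scope of the present section.

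Given this, the argument is short. Fix $r\in\N$ and $\epsilon>0$. Choose $\delta:=\delta(r,\epsilon)>0$ small enough that $g(r,x^{\delta})\le x^{\epsilon}$ for all sufficiently large $x$; such a $\delta$ exists because $g$ is polynomial in its second argument with the exponent depending only on $r$. Apply Lemma~\ref{thm:densitycharacterization} with parameter $\delta$ to obtain a threshold $n_0:=f_{\CCC}(r,\delta)$ such that every depth-$r$ minor $H$ of any $G\in\CCC$ with $|V(H)|\ge n_0$ satisfies $|E(H)|\le |V(H)|^{1+\delta}$. A separate easy observation handles depth-$r$ minors with fewer than $n_0$ vertices, whose edge count is bounded by a constant $c_0=c_0(r,\delta)$. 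Together these bound $\nabla_r(G)\le n^{\delta}$ whenever $G\in\CCC$ has $n\ge n_1(r,\epsilon)$ vertices (for a suitable threshold $n_1$ absorbing $n_0$ and $c_0$). Plugging this into Zhu's inequality yields
\[
  \wcol_r(G)\;\le\;g\bigl(r,n^{\delta}\bigr)\;\le\;n^{\epsilon},
\]
as required, and the threshold $f(r,\epsilon)$ of the lemma can be taken to be $n_1(r,\epsilon)$.

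For the ``furthermore'' part concerning effectiveness: if $\CCC$ is effectively nowhere dense, then by Lemma~\ref{thm:densitycharacterization} the function $f_{\CCC}$ is computable; the function $g$ from Zhu's theorem is explicit and hence computable; the choice of $\delta$ from $r$ and $\epsilon$ is a computable arithmetic operation; and the threshold $n_1$ is thus a computable function of $r$ and $\epsilon$.

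The main obstacle I anticipate is not conceptual but bookkeeping: one must check that Zhu's bound $g(r,\nabla_r(G))$ on $\wcol_r(G)$ genuinely has exponent in $\nabla_r(G)$ depending only on $r$ (and not also on $|V(G)|$), since otherwise the trick of choosing $\delta$ small would fail. Modulo quoting this result in the precise form needed, the argument is essentially a one-line chain of inequalities combined with a careful choice of $\delta$.
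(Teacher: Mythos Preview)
The paper does not supply its own proof of this lemma; it is stated with a citation to \cite{Zhu09,NesetrilO11}, and the preceding sentence (``Zhu showed that generalised colouring numbers and densities of depth-$r$ minors are strongly related. From this, Ne\v set\v ril and Ossona de Mendez conclude that the weak colouring number on nowhere dense classes is small'') indicates precisely the derivation you sketch. Your proposal---bound $\nabla_r(G)$ via Lemma~\ref{thm:densitycharacterization} with a suitably small $\delta$, then feed this into Zhu's polynomial bound $\wcol_r(G)\le g(r,\nabla_r(G))$---is correct and is exactly the intended route.
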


For our purpose, we need an efficient algorithm for ordering the
vertices of~$G$ in an order witnessing $\wcol_r(G)\leq
n^\epsilon$. Dvo\v r\'ak \cite{Dvorak13} conjectures that in general
computing~$\wcol_r(G)$ is NP-complete. We are able to prove his
conjecture for all~$r\geq 3$. He provides an approximation algorithm
to solve the problem, but its running time is~$\Oof(r\cdot n^3)$ which
is too expensive for our purpose.  We propose a more efficient
approximation algorithm, based on Ne{\v s}et{\v r}il and Ossona de
Mendez's transitive fraternal augmentation technique and an argument
from Zhu's proof.

In the following we will work with ordered representations of graphs
where each vertex stores an adjacency list for its in-neighbours and
an adjacency list for its out-neighbours.

\begin{definition}
  Let~$\vec{G}$ be a directed graph. A \emph{tight~$1$-transitive
    fraternal augmentation} of~$\vec{G}$ is a directed graph~$\vec{H}$
  on the same vertex set such that for all distinct vertices~$u,v,w$
  \begin{itemize}
  \item if~$(u,v)\in E(\vec{G})$, then~$(u,v)\in E(\vec{H})$. 
  \item if~$(u,w), (w,v)\in E(\vec{G})$, then~$(u,v)\in E(\vec{H})$,
  \item if~$(u,w), (v,w)\in E(\vec{G})$, then~$(u,v)$ or~$(v,u)$ are
    arcs of~$\vec{H}$ and
  \item for all~$(u,v)\in E(\vec H)$, either~$(u,v)\in E(\vec G)$ or
    there is some~$w$ such that~$(u,w),(w,v)\in E(\vec G)$ or
   ~$(u,w),(v,w)\in E(\vec G)$.
  \end{itemize}
  We write~$\aug(\vec{G},1)$ for any tight~$1$-transitive fraternal
  augmentation of~$\vec{G}$ and for~$r>1$ we write~$\aug(\vec{G}, r)$
  for~$\aug(\aug(\vec{G}, r-1),1)$. We call~$\aug(\vec{G},r)$ a
  tight~$r$-transitive fraternal augmentation of~$\vec{G}$. We will
  often write $\aug(G,r)$ and speak of an $r$-transitive fraternal
  augmentation of $G$ instead of $\aug(\vec{G}, r)$ and an
  $r$-transitive fraternal augmentation of an orientation $\vec{G}$ of
  $G$.
\end{definition}

In \cite{NesetrilO05}, Ne{\v s}et{\v r}il-Ossona de Mendez show how to
efficiently compute tight transitive fraternal augmentations. They
state the result in terms of average densities of depth-$r$ minors,
for our purpose it suffices to state their result for nowhere dense
classes. All functions~$f(r, \epsilon)$ in the following lemmas are
computable if~$\CCC$ is effectively nowhere dense.

\begin{lemma}[Ne{\v s}et{\v r}il-Ossona de Mendez \cite{NesetrilO05},
  Corollary 4.2, Theorem 4.3] \label{thm:computingorientation} 

  Let~$\CCC$ be a nowhere dense class of graphs. There is a
  function~$f$ such that for all~$r\in\N$ and~$\epsilon>0$ and all
  graphs~$G\in\CCC$ with~$n\geq f(r,\epsilon)$ vertices, there exists
  an~$r$-transitive fraternal augmentation~$\vec{H}=\aug(G,r)$ of~$G$
  such that~$\Delta^-(\vec{H})\leq n^\epsilon$. Furthermore,~$\vec{H}$
  can be computed from~$G$ in time~$f(r,\epsilon)\cdot
  n^{1+\epsilon}$.
\end{lemma}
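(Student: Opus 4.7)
The plan is to prove the lemma by induction on $r$, iteratively constructing $\aug(G,i)$ from $\aug(G,i-1)$ while maintaining small maximum in-degree. For the base case $r=0$, we just need an orientation of $G$ with small in-degree. By Lemma~\ref{thm:densitycharacterization}, for $n\ge f_0(\epsilon_0)$ every graph $G\in\CCC$ satisfies $|E(G)|\le n^{1+\epsilon_0}$, so $G$ is $2n^{\epsilon_0}$-degenerate. A standard greedy algorithm (repeatedly remove a vertex of minimum degree and orient its remaining incident edges toward it) runs in time $\Oof(|E(G)|)$ and produces an orientation $\vec{G}_0$ with $\Delta^-(\vec{G}_0)\le 2n^{\epsilon_0}$.

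For the inductive step, suppose $\vec{G}_i=\aug(G,i)$ has been computed with $\Delta^-(\vec{G}_i)\le n^{\epsilon_i}$. To build $\vec{G}_{i+1}$, first enumerate the \emph{transitive candidates} $(u,v)$ with $(u,w),(w,v)\in E(\vec{G}_i)$ for some $w$, and the \emph{fraternal candidates} $\{u,v\}$ with $(u,w),(v,w)\in E(\vec{G}_i)$. These can be generated in time $\Oof(|V(G)|\cdot\Delta^-(\vec{G}_i)^2)=\Oof(n^{1+2\epsilon_i})$ by iterating over each vertex $w$ and each pair of in-neighbours of $w$. The transitive edges inherit a natural orientation; the fraternal edges need to be oriented. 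The key observation is that the underlying graph $H_{i+1}$ of $\vec{G}_{i+1}$ is a depth-$c(i)$ minor of $G$ for some constant $c(i)$ depending only on $i$: each arc of $\vec{G}_i$ corresponds to a path of length at most $2^i$ in $G$ (by unfolding the definition of augmentation), so transitive and fraternal candidates correspond to paths or ``vee''-shapes of length $\le 2^{i+1}$, which are realised by contracting stars of radius $2^i$. Applying Lemma~\ref{thm:densitycharacterization} to $H_{i+1}$ with parameter $\epsilon_{i+1}':=\epsilon_i/10$ (say), we get $|E(H_{i+1})|\le n^{1+\epsilon_{i+1}'}$, so $H_{i+1}$ is $O(n^{\epsilon_{i+1}'})$-degenerate, and a greedy orientation of the fraternal edges can be computed in linear time in $|E(H_{i+1})|$. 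Combined with the inherited orientations, this yields $\Delta^-(\vec{G}_{i+1})\le n^{\epsilon_{i+1}}$ for a suitable $\epsilon_{i+1}$.

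The main obstacle will be the careful choice of the exponents $\epsilon_0,\epsilon_1,\ldots,\epsilon_r$. Since each augmentation step roughly squares the in-degree bound (and adds a fresh density exponent), we would pick $\epsilon_0$ exponentially smaller than $\epsilon$ as a function of $r$, so that after $r$ augmentation steps the accumulated exponent is still at most $\epsilon$. Concretely one sets $\epsilon_i := \epsilon\cdot 2^{-(r-i)}$ (or similar) and verifies inductively that the threshold $n\ge f(r,\epsilon)$ is large enough for Lemma~\ref{thm:densitycharacterization} to apply with parameter $\epsilon_{i+1}'$ at every stage. Effectiveness follows because, by Lemma~\ref{thm:densitycharacterization}, when $\CCC$ is effectively nowhere dense the threshold functions are computable, and all the greedy and enumeration procedures above are explicit.

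The total running time per augmentation step is dominated by enumerating candidate edges, which is $\Oof(n\cdot(\Delta^-(\vec{G}_i))^2) = \Oof(n^{1+2\epsilon_i})$, and by the greedy orientation of $H_{i+1}$, which is $\Oof(|E(H_{i+1})|)=\Oof(n^{1+\epsilon_{i+1}'})$. Both are bounded by $\Oof(n^{1+\epsilon})$ for our choice of exponents, and summing over the $r$ steps multiplies by a factor depending only on $r$. Thus the overall running time is $f(r,\epsilon)\cdot n^{1+\epsilon}$, as claimed.
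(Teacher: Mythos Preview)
The paper does not prove this lemma; it is quoted from Ne\v set\v ril and Ossona de Mendez with a citation and no argument. So there is no ``paper's proof'' to compare against, and your proposal has to stand on its own.

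Your overall architecture---iterate the augmentation, at each step enumerate transitive and fraternal candidate edges by scanning in-neighbour pairs, then orient the new fraternal edges greedily using a degeneracy bound---is exactly the Ne\v set\v ril--Ossona de Mendez scheme, and the running-time bookkeeping is fine. The genuine gap is the sentence ``the underlying graph $H_{i+1}$ of $\vec G_{i+1}$ is a depth-$c(i)$ minor of $G$''. This is false: $H_{i+1}$ has the same vertex set as $G$ but strictly more edges (once any augmentation edge is added), and a minor of $G$ can never have more edges than $G$. Your proposed realisation by ``contracting stars of radius $2^i$'' cannot work because those stars, one centred at every vertex of $G$, overlap everywhere; there are no disjoint branch sets. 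Consequently you cannot invoke Lemma~\ref{thm:densitycharacterization} on $H_{i+1}$ directly, and the degeneracy bound you need for the greedy orientation is unjustified.

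What Ne\v set\v ril and Ossona de Mendez actually prove is subtler: not that the augmented graph is itself a shallow minor, but that its \emph{edge density} (and that of every subgraph) is controlled by the shallow-minor densities $\nabla_r(G)$. Concretely, one shows that if some set $S$ of vertices spans many fraternal edges, then by grouping vertices of $S$ with suitably chosen witnesses $w$ one can build disjoint branch sets of bounded radius realising a dense shallow minor of $G$; this yields the degeneracy bound on the fraternal graph needed for the greedy orientation. That step is the actual content of the cited Corollary~4.2/Theorem~4.3, and it is precisely what is missing from your sketch.
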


We will write~$\aug(G,r,\epsilon)$ for an augmentation
$\vec{H}=\aug(G,r)$ such that~$\Delta^-(\vec{H})\leq n^\epsilon$.

The following property of transitive fraternal augmentations is noted
in the proof of Theorem 5.1 in \cite{NesetrilO05}.

\begin{lemma}[\cite{NesetrilO05}]\label{thm:augmentationneighbourhood}
  Let~$G$ be a graph and let~$r\in\N$. Let~$\vec{H}=\aug(G,r)$ be
  an~$r$-transitive fraternal augmentation of~$G$. Let~$v\in V(G)$
  and~$w\in N_r^G(v)$. Let~$v=v_1, v_2,\ldots, v_l=w$ be a path of
  length at most~$r$ from~$v$ to~$w$ in~$G$. Then either~$(v,w)\in
  E(\vec{H})$ or~$(w,v)\in E(\vec{H})$ or there is some~$v_i$ such
  that~$(v_i,v), (v_i,w) \in E(\vec{H})$.
\end{lemma}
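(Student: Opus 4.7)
The plan is to argue by induction on $r$. The base case $r=1$ is immediate, since a path of length one is simply an edge of $G$, so its orientation in $\vec G$ already lies in $\vec H = \aug(G,1)$ by the first clause of the augmentation definition.

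For the inductive step, I may assume the path $v_1,\ldots,v_l$ has length $l-1 = r$: if $l-1 < r$ then the conclusion given by the induction hypothesis on $\aug(G,l-1)$ lifts to $\aug(G,r)$ by the obvious monotonicity $\aug(G,l-1)\subseteq \aug(G,r)$. Write $\vec K := \aug(G,r-1)$, so that $\vec H = \aug(\vec K, 1)$. Applying the induction hypothesis to the sub-path $v_1,\ldots,v_r$ (of length $r-1$) gives one of the three alternatives (a) $(v_1, v_r) \in \vec K$, (b) $(v_r, v_1) \in \vec K$, or (c) some $v_j$ with $2 \leq j \leq r-1$ satisfying $(v_j, v_1),(v_j, v_r) \in \vec K$. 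Moreover, one of the arcs $(v_r, v_{r+1})$, $(v_{r+1}, v_r)$ lies in $\vec G \subseteq \vec K$, giving six combinations.

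The next step is routine case analysis: in almost every combination a single application of transitivity or fraternality inside $\aug(\vec K, 1) = \vec H$ suffices. For example, (a) with $(v_r, v_{r+1})$ yields $(v_1, v_{r+1}) \in \vec H$ by transitivity; (b) with $(v_r, v_{r+1})$ makes $v_r$ the required common in-neighbor of $v_1$ and $v_{r+1}$; (c) with $(v_r, v_{r+1})$ yields $(v_j, v_{r+1}) \in \vec H$ by transitivity, after which $v_j$ is a common in-neighbor; and analogously for the remaining symmetric cases.

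The main obstacle will be the sub-case where (c) meets $(v_{r+1}, v_r) \in \vec K$. Fraternality applied to the pair $(v_j, v_r), (v_{r+1}, v_r)$ only guarantees that $(v_j, v_{r+1})$ \emph{or} $(v_{r+1}, v_j)$ lies in $\vec H$; the first alternative finishes, but the second leaves us with $(v_{r+1}, v_j) \in \vec H$ and $(v_j, v_1)$ only in $\vec K$, which cannot be combined by a single transitivity step inside $\vec H$. To resolve this I would invoke the induction hypothesis a second time, on the sub-path $v_j, v_{j+1}, \ldots, v_{r+1}$ of length $r+1-j \leq r-1$ (using $j \geq 2$), so that its conclusion already sits inside $\vec K$. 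Each of the three resulting options — a direct arc $(v_j, v_{r+1}) \in \vec K$, a reverse arc $(v_{r+1}, v_j) \in \vec K$, or a common in-neighbor $v_k \in \{v_{j+1},\ldots,v_r\}$ with $(v_k, v_j), (v_k, v_{r+1}) \in \vec K$ — can then be chained with $(v_j, v_1) \in \vec K$ by exactly one transitivity step inside $\aug(\vec K, 1) = \vec H$ to produce one of the three forms required by the statement.
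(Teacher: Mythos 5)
Your proof is correct. Note that the paper itself gives no proof of this lemma at all: it is quoted from Ne\v set\v ril and Ossona de Mendez \cite{NesetrilO05} (``noted in the proof of Theorem 5.1''), so your argument is a self-contained reconstruction rather than a variant of something in the text. The induction is sound: the base case, the reduction to paths of length exactly $r$ via monotonicity of the augmentation tower, and the six-way case analysis all check out against the definition (transitivity for arcs $u\to w\to v$, fraternality for arcs with a common head). You correctly identified the one genuinely delicate combination --- alternative (c) together with $(v_{r+1},v_r)\in E(\vec K)$, where fraternality may return the arc in the unusable direction --- and your fix of applying the induction hypothesis a second time to the sub-path $v_j,\dots,v_{r+1}$ (which has length at most $r-1$ precisely because $j\ge 2$, so its conclusion lands in $\vec K$ and can be chained with $(v_j,v_1)\in E(\vec K)$ by one transitivity or preservation step inside $\vec H$) is valid; the witness $v_k$ produced there still lies on the original path, as the statement requires. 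The only difference from the argument in the cited source is quantitative: there the induction shortens paths by a constant factor per augmentation round, which is why the paper remarks that $\lceil\log_{3/2}r\rceil+1$ rounds would suffice, whereas your induction shortens the path by one vertex per round and therefore needs the full $r$ rounds --- exactly what the lemma as stated assumes, so nothing is lost.
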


In fact, for the results in the previous lemma it would suffice to
use an~$\left\lceil \log_{3/2} r\right\rceil+1$-augmentation. While
this would make the construction more efficient, we refrain from doing
so for ease of presentation.

We now show how to approximate~$\wcol_r(G)$ with the help
of~$r$-transitive fraternal augmentations.

\begin{lemma}\label{thm:approximatingcol}
  Let~$G$ be a graph and let~$r>0$. Let~$\vec{H}=\aug(G,r)$ be
  an~$r$-transitive fraternal augmentation of~$G$ such
  that~$\Delta^-(\vec{H})\leq d$. Then~$\wcol_r(G)\leq 2(d+1)^2$.
\end{lemma}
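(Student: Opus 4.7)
The plan is to exhibit an explicit linear ordering $<$ of $V(G)$ and bound $|\wreach_r[G,<,v]|$ uniformly. Since $\Delta^-(\vec{H})\le d$, the underlying undirected graph $H$ of $\vec{H}$ satisfies $|E(H[S])|\le d|S|$ for every $S\subseteq V(G)$, so $H$ is $2d$-degenerate. A standard greedy elimination (repeatedly remove a minimum-degree vertex and prepend it) produces an ordering $<$ of $V(G)=V(H)$ in which every vertex has at most $2d$ earlier neighbours in $H$; this is the ordering I will use.

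Fix $v\in V(G)$ and any $u\in\wreach_r(G,<,v)$, witnessed by a path $P\colon v=v_0,v_1,\dots,v_\ell=u$ in $G$ of length $\ell\le r$ with $u\le v_i$ for all $i$. Applying Lemma~\ref{thm:augmentationneighbourhood} to $P$, one of the following holds:
\begin{itemize}
\item[(a)] $(u,v)\in E(\vec{H})$;
\item[(b)] $(v,u)\in E(\vec{H})$;
\item[(c)] some $v_i$ on $P$ satisfies $(v_i,v),(v_i,u)\in E(\vec{H})$.
\end{itemize}
I count the $u$'s in each case separately. In (a), $u\in N^-_{\vec{H}}(v)$, giving at most $d$ candidates. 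In (b), $\{u,v\}\in E(H)$ and $u<v$, so $u$ is one of the at most $2d$ earlier $H$-neighbours of $v$. In (c), the witness satisfies $v_i\in N^-_{\vec{H}}(v)$, so there are at most $d$ choices of $v_i$; since $\vec{H}$ has no loops, $u\ne v_i$, and together with $u\le v_i$ this makes $u$ a strictly earlier $H$-neighbour of $v_i$, of which there are at most $2d$. Summing and adding $v$ itself yields
\[
|\wreach_r[G,<,v]|\;\le\;1+d+2d+d\cdot 2d\;=\;2d^{2}+3d+1\;\le\;2(d+1)^{2}.
\]

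The crux, and the main obstacle, is case (c): one must exploit that Lemma~\ref{thm:augmentationneighbourhood} forces the intermediate witness $v_i$ to be an \emph{in-}neighbour of $v$ in $\vec{H}$, so it is drawn from the set $N^-_{\vec{H}}(v)$ of bounded size $d$, while the defining condition $u\le v_i$ of weak reachability, combined with the degeneracy ordering of $H$, controls the number of candidates for $u$ given $v_i$. This double use of the in-degree bound is essential, because the out-degrees in $\vec{H}$ are not a priori bounded, so a naive count via $N^+_{\vec{H}}$ would fail.
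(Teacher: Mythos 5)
Your proof is correct and follows essentially the same route as the paper's: the same $2d$-degeneracy ordering derived from $\Delta^-(\vec{H})\le d$, the same appeal to Lemma~\ref{thm:augmentationneighbourhood}, and the same three-way count (the paper merges your cases (a) and (b) into a single bound of $2d$ earlier $H$-neighbours of $v$, giving $2d^2+2d+1$ rather than your slightly looser $2d^2+3d+1$, but both are at most $2(d+1)^2$). No gaps.
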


\begin{proof}
  As~$\Delta^-(\vec{H})\leq d$, the underlying undirected graph~$H$
  is~$2d$-degenerate and we can order the vertices of~$H$ such that
  each vertex has at most~$2d$ smaller neighbours. Denote this order
  by~$<$. For each vertex~$v\in V(G)$ we count the number of
  end-vertices of paths of length at most~$r$ from~$v$ such that the
  end-vertex is the smallest vertex of the path. This number
  bounds~$|\wreach_r[G,<, v)]|$.

  By Lemma~\ref{thm:augmentationneighbourhood}, for each such path
  with end-vertex~$w$, we either have an edge~$(v,w)$ or an
  edge~$(w,v)$ or there is~$u$ on the path and we have edges~$(u,v),
  (u,w)$ in~$H$. By construction of the order there are at most~$2d$
  edges~$(v,w)$ or~$(w,v)$ such that~$w<v$. Furthermore, we have at
  most~$d$ edges~$(u,v)$, as~$v$ has indegree at most~$d$ and for each
  such~$u$ there are at most~$2d$ edges~$(u,w)$ such that~$w< u$ by
  construction of the order. These are exactly the pairs of edges we
  have to consider, as no vertex on the path from~$v$ to~$w$ may be
  smaller than~$w$.  Hence in total we have~$|\wreach_r[G,<,v]|\leq
  2d+2d^2+1\leq 2(d+1)^2$.
\end{proof}

\begin{corollary}\label{thm:order}
  Let~$\CCC$ be a nowhere dense class of graphs. There is a
  function~$f$ such that for all~$r\in\N$ and~$\epsilon>0$ and
  every~$G\in\CCC$ with~$n\geq f(r,\epsilon)$ vertices, we can order
  the vertices of~$G$ in order~$<$ such that~$|\wreach_r[G,<, v]|\leq
  n^\epsilon$ for all~$v\in V(G)$ in time~$f(r,\epsilon)\cdot
  n^{1+\epsilon}$. Furthermore, if~$\CCC$ is effectively nowhere
  dense, then~$f$ is computable.
\end{corollary}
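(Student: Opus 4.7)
The plan is to combine Lemma~\ref{thm:computingorientation} with (a slightly strengthened, algorithmic reading of) Lemma~\ref{thm:approximatingcol}. Given $r$ and $\epsilon>0$, I would first choose an auxiliary $\epsilon':=\epsilon/3$ (any $\epsilon'$ with $3\epsilon'\le\epsilon$ works). For $n$ large enough in terms of $\epsilon$ we then have $2(n^{\epsilon'}+1)^2\le n^\epsilon$, and this threshold can be absorbed into the final $f(r,\epsilon)$.

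Next I would invoke Lemma~\ref{thm:computingorientation} with parameters $r$ and $\epsilon'$ to compute, in time $f_1(r,\epsilon')\cdot n^{1+\epsilon'}$, a tight $r$-transitive fraternal augmentation $\vec H=\aug(G,r,\epsilon')$ with $\Delta^-(\vec H)\le d:=n^{\epsilon'}$. The underlying undirected graph $H$ then has maximum average degree at most $2d$ in every subgraph, so $H$ is $2d$-degenerate. Running the standard greedy degeneracy ordering algorithm on $H$ produces, in time $\OOO(|V(H)|+|E(H)|)=\OOO(n\cdot d)=\OOO(n^{1+\epsilon'})$, a linear order $<$ on $V(G)=V(H)$ such that each vertex has at most $2d$ neighbours in $H$ that are smaller with respect to $<$. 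This is exactly the hypothesis used in the proof of Lemma~\ref{thm:approximatingcol}, which bounds $|\wreach_r[G,<,v]|\le 2(d+1)^2$ for every $v\in V(G)$.

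Plugging in $d\le n^{\epsilon'}$ and the choice of $\epsilon'$, this gives $|\wreach_r[G,<,v]|\le 2(n^{\epsilon'}+1)^2\le n^\epsilon$, as required. The total running time is dominated by the augmentation step and is $f(r,\epsilon)\cdot n^{1+\epsilon}$ for a suitable $f$ obtained from $f_1$. If $\CCC$ is effectively nowhere dense, Lemma~\ref{thm:computingorientation} provides a computable $f_1$, and since the choice $\epsilon'=\epsilon/3$ and the size threshold are explicit, $f$ is then computable as well.

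The only real step is noticing that the proof of Lemma~\ref{thm:approximatingcol} is constructive: it produces the order from a degeneracy ordering of the underlying undirected graph of $\vec H$. I do not expect any genuine obstacle here; the subtleties are purely bookkeeping, namely balancing the auxiliary parameter $\epsilon'$ against the exponent on $d$ in $2(d+1)^2$ and ensuring that the computability claim is propagated through Lemma~\ref{thm:computingorientation} and the choice of threshold.
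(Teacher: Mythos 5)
Your proposal is correct and follows essentially the same route as the paper: compute a tight $r$-transitive fraternal augmentation via Lemma~\ref{thm:computingorientation} with a scaled-down parameter (the paper uses $\delta=\epsilon/4$ where you use $\epsilon/3$; either works once the size threshold is folded into $f$), extract a degeneracy ordering of the underlying undirected graph by the greedy algorithm, and invoke the constructive bound $|\wreach_r[G,<,v]|\le 2(d+1)^2$ from the proof of Lemma~\ref{thm:approximatingcol}. No gaps.
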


\begin{proof}
  Let~$\delta:=\epsilon/4$. We compute an~$r$-transitive fraternal
  augmentation~$\vec{H}=\aug(G,r, \delta)$ of~$G$ in
  time~$g(r,\delta)\cdot n^{1+\delta}$ by
  Lemma~\ref{thm:computingorientation}, where~$g$ is the function from
  the lemma. We can order the vertices as in the proof of
  Theorem~\ref{thm:approximatingcol} by a simple greedy algorithm in
  time~$\Oof(n^{1+\delta})$ and obtain an order
  witnessing~$\wcol_r(G)\leq 2(n^{\delta}+1)^2\leq n^\epsilon$.
\end{proof}

In the next lemma we use the weak colouring number to prove the
existence of sparse neighbourhood covers in nowhere dense classes of
graphs.  

\begin{definition}
  Let~$G$ be a graph, let~$<$ be an ordering of~$V(G)$ and
  let~$r>0$. For a vertex~$v\in V(G)$ we define
  \[X_{r}[G,<,v] := \{w\in V(G) : v\in \wreach_{r}[G,<,w]\}.\]
\end{definition}

\begin{lemma}\label{thm:covers}
  Let~$G$ be a graph such that~$\mathrm{wcol}_{2r}(G)\leq s$ and
  let~$<$ be an order witnessing this. Then~$\XXX=\{X_{2r}[G,<,v] :
  v\in V(G)\}$ is an~$r$-neighbourhood cover of~$G$ with radius at
  most~$2r$ and maximum degree at most~$s$.
\end{lemma}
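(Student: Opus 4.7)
The plan is to establish three properties of $\XXX=\{X_{2r}[G,<,v]\st v\in V(G)\}$ separately: (i) each cluster, viewed as the induced subgraph $G[X_{2r}[G,<,v]]$, is connected and has radius at most $2r$; (ii) for every $u\in V(G)$ there is a cluster containing $N_r(u)$; and (iii) every vertex lies in at most $s$ clusters.

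Property (iii) is immediate from the crucial duality built into the definition: by unfolding,
\[
u\in X_{2r}[G,<,v]\quad\Longleftrightarrow\quad v\in\wreach_{2r}[G,<,u].
\]
Hence the degree $d^{\XXX}(u)$ equals $|\wreach_{2r}[G,<,u]|$, which is bounded by $\wcol_{2r}(G)\le s$ because $<$ is an order witnessing the weak colouring number.

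For property (i), fix $v$ and write $X:=X_{2r}[G,<,v]$. Every $w\in X$ admits, by definition of weak reachability, a path $P$ from $w$ to $v$ of length at most $2r$ all of whose vertices are $\ge v$. The key observation is that $X$ is closed under taking subpaths of such witnessing paths: if $u'$ lies on $P$, the $u'$-to-$v$ subpath of $P$ still has length at most $2r$ and uses only vertices $\ge v$, so $v\in\wreach_{2r}[G,<,u']$ and hence $u'\in X$. Consequently $P$ is a path in $G[X]$, so $G[X]$ is connected and $v$ is a centre from which every other vertex of $X$ is reached by a path of length at most $2r$.

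For property (ii), given $u\in V(G)$, let $v$ be the $<$-minimum element of $N_r(u)$. I claim $N_r(u)\subseteq X_{2r}[G,<,v]$. Pick $w\in N_r(u)$ and concatenate a shortest $w$-to-$u$ path with a shortest $u$-to-$v$ path to obtain a walk of length at most $2r$. Every vertex on either constituent path has distance at most $r$ to $u$ (since it sits on a shortest path to/from $u$), so the whole walk lies inside $N_r(u)$ and, by the choice of $v$, consists only of vertices $\ge v$. Extracting a simple subpath from $w$ to $v$ yields a path of length at most $2r$ with all vertices $\ge v$, showing $v\in\wreach_{2r}[G,<,w]$, i.e.\ $w\in X_{2r}[G,<,v]$.

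No step here is a real obstacle; the only substantive point is the closure of $X_{2r}[G,<,v]$ under subpath prefixes used in (i), which is what makes the cluster truly connected of radius $2r$ rather than merely a set of vertices within distance $2r$ of $v$. The rest amounts to unwinding the definitions of weak reachability and of the sets $X_{2r}[G,<,v]$.
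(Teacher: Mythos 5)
Your proof is correct and follows essentially the same route as the paper: the degree bound via the duality $u\in X_{2r}[G,<,v]\iff v\in\wreach_{2r}[G,<,u]$, and the covering property via the $<$-minimum of $N_r(u)$ together with the concatenated walk of length at most $2r$ through $N_r(u)$. Your treatment of property (i) is in fact slightly more careful than the paper's (which dismisses the radius bound with ``clearly''), since your closure-under-subpaths observation shows the witnessing paths lie inside the cluster, which is what is actually needed for the clusters to be connected subgraphs of intrinsic radius at most $2r$.
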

\begin{proof}
  Clearly the radius of each cluster is at most~$2r$, because if~$v$
  is weakly~$2r$-accessible from~$w$ then~$w\in
  N_{2r}(v)$. Furthermore, every~$r$-neighbourhood lies in some
  cluster. To see this, let~$v\in V(G)$. Let~$u$ be the minimum
  of~$N_r(v)$ with respect to~$<$. Then~$u$ is weakly~$2r$-accessible
  from every~$w\in N_r(v)\setminus\{u\}$ as there is a path from~$w$
  to~$u$ which uses only vertices of~$N_r(v)$ and has length at
  most~$2r$ and~$u$ is the minimum element
  of~$N_r(v)$. Thus~$N_r(v)\subseteq X_{2r}[G,<,u]$. Finally observe
  that for every~$v\in V(G)$,
 \begin{align*}
   d^\mathcal{X}(v) &= |\{u\in V(G) : v\in X_{2r}[G,<,u]\}|\\ &=
   |\{u\in V(G) : u\in \wreach_{2r}[G_<, v]\}| = |\wreach_{2r}[G_<,v]|\leq s.
 \end{align*}
\end{proof}

\begin{proof}[Proof of Theorem~\ref{thm:alg-covers}]
  Let~$\delta:=\epsilon/2$. We order the vertices of~$G$ in order~$<$
  as in Corollary~\ref{thm:order}, where~$\delta$ plays the role
  of~$\epsilon$ in the corollary, such that~$\wreach_{2r}[G,<,v]\leq
  n^\delta$ for all~$v\in V(G)$ in time~$g(r,\delta)\cdot
  n^{1+\delta}$, where~$g$ is the function from the corollary.

  Let us first note the following observation. 

  \begin{Claim}\label{claim:correctness}
    For~$v\in V(G)$ let~$S(v):=\{u : u<v\}$. Then~$X_{2r}[G,<,v] =
    N_{2r}^{G\setminus S(v)}(v)$.
  \end{Claim}

  Our algorithm computes the sets~$X_{2r}[G,<,v]$ in ascending
  order. To do so, it chooses the smallest vertex~$v$, performs~$2r$
  levels of a breadth-first search and then deletes~$v$ from the
  graph.  Correctness of the algorithm follows immediately from
  Claim~\ref{claim:correctness}. Let us analyse the running time.

  We construct the following representation of~$G$ which is easily
  seen to be computable in time~$\Oof(n^{1+\delta})$ . We split the
  edges of~$G$ into edges going to larger elements and into edges
  going to smaller elements with respect to the ordering. For
  each~$v\in V(G)$ we write~$N_>(v)$ (resp.~$N_<(v)$) for the
  neighbours of~$v$ that are larger (resp. smaller) than~$v$. We
  write~$d_>(v)$ for~$|N_>(v)|$ and~$d_<(v)$ for~$|N_<(v)|$. Note that
  we have~$d_<(v)\leq n^\delta$ for each~$v\in V(G)$, as~$d_<(v)\leq
  |\wreach_{2r}[G,<,v]|$.

  Let~$G'$ be a subgraph of~$G$ with~$n'$ vertices. We can count the
  edges of~$G'$ by counting the sum of~$d_<(v)$ over all~$v\in V(G')$,
  hence~$G'$ has at most~$n'\cdot n^\delta$ many edges. We can thus
  perform each breadth-first search to compute~$X_{2r}[G,<,v]$ in
  time~$\Oof(|X_{2r}[G,<,v]|\cdot n^\delta)$ for each vertex~$v\in
  V(G)$.  Furthermore, we have the following overhead in the
  breadth-first search for deleting edges that point to~$v$, which
  must be deleted. As we store the edges of each vertex in separate
  lists, for each vertex~$w\in N_>(v)$ (this is the first level of the
  breadth-first search), we have to access only the edges to vertices
  of~$N_<(w)$. No other vertex is connected to~$v$ in~$G\setminus S(v)$. Hence,
  the deletion of~$v$ from the adjacency list of~$w$ can be done in
  time~$d_<(w)\leq n^\delta$. The number of such vertices~$w$
  is~$d_>(v)$, which at the time of deletion of~$v$ is bounded
  by~$|X_{2r}[G,<,v]|$.

  For ease of presentation let~$X_v:=X_{2r}[G,<,v]$ and let us drop
  any constant factors in the following estimation. We get a total
  running time of
  \begin{align*}
    & \sum_{v\in V(G)} \big(|X_v|\cdot n^\delta+\sum_{w\in
      N_>(v)} d_<(w)\big)\\
    = & \sum_{v\in V(G)} |X_v|\cdot n^\delta+\sum_{v\in
      V(G)}\sum_{w\in
      N_>(v)} d_<(w)\\
    \leq & \sum_{v\in V(G)}
    |X_v|\cdot n^\delta+\sum_{v\in V(G)}|X_v|\cdot n^\delta\\
    = & \; 2n^\delta\sum_{v\in V(G)} |X_v|\\
    \leq & \; 2n^{1+2\delta}=:f(r,\epsilon)\cdot
    n^{1+\epsilon}
  \end{align*}
\end{proof}

\begin{remark}
  By definition, an~$r$-neighbourhood cover~$\XXX$ of a graph~$G$
  contains for each~$v\in V(G)$ a cluster~$X\in \XXX$ such
  that~$N_r^G(v) \subseteq X$. For the algorithmic applications below
  it will be useful to store along with the neighbourhood cover a
  function~$f_\XXX \st V(G) \rightarrow \XXX$ which associates with
  every vertex~$v$ such a cluster~$X$ containing
  its~$r$-neighbourhood.

  The proof of the previous theorem can easily be modified to compute
  such a function along with the neighbourhood cover as follows: we
  associate with~$v\in V(G)$ the set~$X_{2r}[G, <, u]$ for
  the~$<$-minimal~$u\in V(G)$ such that~$v\in N_r^{G\setminus S(u)}(u)$,
  where~$S(u)$ is defined as in Claim~\ref{claim:correctness} in the
  proof of Theorem~\ref{thm:alg-covers}. As the sets~$X_{2r}[G, <, u]$
  are computed in increasing order, this can be done at no extra cost.
\end{remark}

We remark that our construction also yields very good covers for other
restricted classes of graphs, in particular for classes with excluded
minors and classes of graphs of bounded expansion, where we can
replace the maximum degree $n^\epsilon$ of the neighbourhood cover by
a constant. See the conclusions (Section~\ref{sec:conclusion}) for
further comments.


\section{Locality of First-Order Logic}\label{sec:local}

In this chapter, we prove the ``rank-preserving'' version of
Gaifman's locality theorem stated in the introduction.

\subsection{Background on First-Order Logic}
We start with a brief review of first-order logic. For background, we
refer the reader to \cite{EbbinghausFluTho94}. A \emph{(relational) vocabulary} is
a finite set of relation symbols, each with a prescribed
arity. Throughout this paper, we let~$\sigma$ be a vocabulary.  A
\emph{$\sigma$-structure} $A$ consist of a (not necessarily finite) set~$V(A)$, called the
\emph{universe} or \emph{vertex set} of~$A$, and for each~$k$-ary
relation symbol~$R\in\sigma$ a~$k$-ary relation~$R(A)\subseteq
V(A)^k$. A structure~$A$ is \emph{finite} if its universe is. 

For example, graphs may be viewed as~$\{E\}$-structures, where~$E$ is
a binary relation symbol.

Let~$A$ be a~$\sigma$-structure. For a subset~$X\subseteq V(A)$, the
\emph{induced substructure} of~$A$ with universe~$X$ is the
$\sigma$-structure~$A[X]$ with~$V(A[X])=X$ and~$R(A[X])=R(A)\cap X^k$
for every~$k$-ary~$R\in\sigma$. For a vocabulary
$\sigma'\subseteq\sigma$, the \emph{$\sigma'$-restriction} of~$A$ is
the~$\sigma'$-structure~$A'$ with~$V(A')=V(A)$ and~$R(A')=R(A)$ for
all~$R\in\sigma'$. Conversely,~$A$ is a \emph{$\sigma$-expansion} of a
$\sigma'$-structure~$A'$ if~$A'$ is the~$\sigma'$-restriction of~$A$.

\emph{First-order formulas} of vocabulary~$\sigma$ are formed from
atomic formulas~$x=y$ and $R(x_1,\ldots,x_k)$, where~$R\in\sigma$ is
a~$k$-ary relation symbol and~$x,y,x_1,\ldots,x_k$ are variables (we
assume that we have an infinite supply of variables) by the usual
Boolean connectives~$\neg$~(negation),~$\wedge$ (conjunction),
and~$\vee$ (disjunction) and existential and universal
quantification~$\exists x,\forall x$, respectively. The set of all
first-order formulas of vocabulary $\sigma$ is denoted
by~$\FO[\sigma]$, and the set of all first-order formulas
by~$\FO$. The free variables of a formula are those not in the scope
of a quantifier, and we write~$\phi(x_1,\ldots,x_k)$ to indicate that
the free variables of the formula~$\phi$ are among $x_1,\ldots,x_k$. A
\emph{sentence} is a formula without free variables. The
\emph{quantifier rank}~$\qr(\phi)$ of a formula~$\phi$ is the nesting
depth of quantifiers in~$\phi$, defined recursively in the obvious
way. A formula without any quantifiers is called
\emph{quantifier-free}.

To define the semantics, we inductively define a
satisfaction relation~$\models$, where for a~$\sigma$-structure~$A$, a
formula~$\phi(x_1,\ldots,x_k)$, and elements~$a_1,\ldots,a_k\in V(A)$,
\[
A\models\phi(a_1,\ldots,a_k)
\]
means that~$A$ satisfies~$\phi$ if the free variables~$x_1,\ldots,x_k$
are interpreted by~$a_1,\ldots,a_k$, respectively. If
$\phi(x_1,\ldots,x_k)=R(x_1,\ldots,x_k)$ is atomic, then
$A\models\phi(a_1,\ldots,a_k)$ if~$(a_1,\ldots,a_k)\in R(A)$. The
meaning of the equality symbol, the Boolean connectives, and the
quantifiers is the usual one.

For example, consider the formula~$\phi(x_1,x_2)=\forall y(x_1=y\vee
x_2=y\vee E(x_1,y)\vee E(x_2,y))$ in the vocabulary~$\{E\}$ of
graphs. For every graph~$G$ and vertices~$v_1,v_2\in V(G)$ we
have~$G\models\phi(v_1,v_2)$ if any only if~$\{v_1,v_2\}$ is a
dominating set of~$G$. Thus~$G$ satisfies the sentence~$\exists
x_1\exists x_2\phi(x_1,x_2)$ if, and only if, it has a
(nonempty)
dominating set of size at most~$2$.

Whenever a~$\sigma$-structure occurs as the input of an algorithm, we
implicitly assume that it is finite and encoded in a suitable
way. Similarly, we assume that formulas~$\phi$ appearing as input are
encoded suitably. By~$|\phi|$, we denote the length of the encoding of
$\phi$.

A formula~$\phi(x_1,\ldots,x_k)\in\FO[\sigma]$ is \emph{valid} if for all
$\sigma$-structures~$A$ and all elements~$a_1,\ldots,a_k\in V(A)$
it holds that~$A\models\phi(a_1,\ldots,a_k)$. 
The Completeness Theorem for First-Order Logic implies that the set of
valid formulas is recursively enumerable. 
Two formulas~$\phi(x_1,\ldots,x_k),\psi(x_1,\ldots,x_k)\in\FO[\sigma]$ are \emph{equivalent} if for all
$\sigma$-structures~$A$ and all elements~$a_1,\ldots,a_k\in V(A)$ we have~$A\models\phi(a_1,\ldots,a_k)\iff A\models\psi(a_1,\ldots,a_k)$. 

Up to logical equivalence, for all~$k,q$ there are only finitely many
$\FO$-formulas~$\phi(x_1,\ldots,x_k)$ of quantifier-rank at most
$q$. Indeed, by systematically renaming the bound variables, bringing Boolean
combinations into conjunctive normal form, and deleting duplicate
entries from the disjunctions and conjunctions, we can normalise
$\FO$-formulas in such a way that every formula can be effectively
translated into an equivalent normalised formula of the same
quantifier rank, and for all~$k,q$ the set~$\Phi(\sigma,k,q)$ of all
normalised~$\FO$-formulas~$\phi(x_1,\ldots,x_k)$ of quantifier rank at
most~$q$ is finite and computable.

The \emph{Gaifman graph}~$G_A$ of a~$\sigma$-structure~$A$ is the
graph with vertex set~$V(A)$ and an edge between
$a_1,a_2\in V(A)$ if~$a_1,a_2$ appear together in some tuple of some
relation in~$A$. The \emph{distance}~$\dist^A(a,b)$, or just~$\dist(a,b)$, between two
elements~$a,b\in V(A)$ in~$A$ is the length of the shortest path from
$a$ to~$b$ in~$G_A$, and the \emph{$r$-neighbourhood} of~$a$ in~$A$ is
the set~$N_r^A(a)$, or just~$N_r(a)$, of all~$b\in V(A)$ such that
$\dist(a,b)\le r$. For a tuple~$\tup a=(a_1,\ldots,a_k)$, we let
$N_r(\tup a)=\bigcup_{i=1}^kN_r(a_i)$.

A first-order formula~$\psi(\tup x)$ is called \emph{$r$-local} if its
truth value at a  tuple~$\tup a$ of vertices in a structure~$A$ only depends on the
$r$-neighbourhood of~$\tup a$ in~$A$, that is,~$A\models\phi(\tup
a)\iff A[N_r(\tup a)]\models\phi(\tup a)$.  For all~$d\ge 0$ there is an
$\FO$-formula~$\delta_{\le d}(x,y)$ stating that the distance between~$x$ and~$y$ is
at most~$d$. We write~$\delta_{>d}(x,y)$ instead of~$\neg\delta_{\le
  d}(x,y)$. A \emph{basic
  local sentence} is a first-order sentence of the form 
\begin{equation}
  \label{eq:basic-local}
  \exists x_1\ldots\exists
x_k\big(\bigwedge_{1\le i<j\le
  k}\delta_{>2r}(x_i,x_j)\wedge\bigwedge_{i=1}^k\phi(x_i)\big),
\end{equation}
where
$\phi$ is~$r$-local.  

\begin{theorem}[Gaifman's Locality Theorem~\cite{Gaifman82}]
  Every first-order sentence is equivalent to a Boolean combination of
  basic local sentences.
\end{theorem}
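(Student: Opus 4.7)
I would prove the theorem by induction on the structure of the formula, but in order to obtain a usable induction hypothesis for formulas with free variables, I would strengthen the statement as follows: for every~$q\in\N$ and every~$\FO$-formula~$\phi(\bar x)$ of quantifier rank at most~$q$, $\phi(\bar x)$ is equivalent to a Boolean combination of~$r$-local formulas around~$\bar x$ and basic local sentences, where both~$r$ and the parameters of the basic local sentences depend only on~$q$ and~$|\bar x|$. Gaifman's theorem for sentences is the special case~$\bar x=\emptyset$, since a local formula with no free variables is equivalent to~$\top$ or~$\bot$.

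The base case~$q=0$ is trivial: quantifier-free formulas are vacuously~$0$-local around their free variables. The Boolean cases are immediate, since the class of Boolean combinations of the two stipulated kinds of formulas is closed under~$\neg$, $\wedge$, and~$\vee$. The only interesting inductive step is the existential case~$\phi(\bar x)=\exists y\,\phi'(\bar x,y)$. By the induction hypothesis, $\phi'(\bar x,y)$ is equivalent to a Boolean combination of~$r'$-local formulas~$\psi_i(\bar x,y)$ around~$\bar x y$ and basic local sentences~$\theta_j$. After putting this combination in disjunctive normal form and pulling each~$\theta_j$ (a closed formula) out of the existential, it suffices to treat a single formula of the shape~$\exists y\,\psi(\bar x,y)$ with~$\psi$ being~$r'$-local around~$\bar x y$.

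I would then split this existential on the distance of~$y$ from~$\bar x$. When~$\dist(y,\bar x)\le 2r'+1$, the formula~$\exists y\,(\delta_{\le 2r'+1}(y,\bar x)\wedge\psi(\bar x,y))$ is immediately~$(3r'+1)$-local around~$\bar x$, because~$N_{r'}(\bar xy)\subseteq N_{3r'+1}(\bar x)$. When~$\dist(y,\bar x)>2r'+1$, the neighborhoods~$N_{r'}(\bar x)$ and~$N_{r'}(y)$ are disjoint with no edges between them, so~$A[N_{r'}(\bar xy)]$ is the disjoint union of~$A[N_{r'}(\bar x)]$ and~$A[N_{r'}(y)]$. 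A standard Feferman--Vaught decomposition on disjoint unions (provable by a short EF-game argument, preserving quantifier rank) then yields~$\psi(\bar x,y)\equiv\bigvee_j\alpha_j(\bar x)\wedge\beta_j(y)$ on this portion of the structure, with~$\alpha_j$ and~$\beta_j$ both~$r'$-local around their respective free variables. Pushing the existential inside reduces the far case to subformulas of the form~$\exists y\,(\delta_{>d}(y,\bar x)\wedge\beta(y))$ with~$\beta$ an~$s$-local formula in a single variable.

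The main obstacle is to eliminate the parametric distance constraint from~$\exists y\,(\delta_{>d}(y,\bar x)\wedge\beta(y))$, which I would handle by a scattered-set argument. Let~$k=|\bar x|$ and~$D=2d+2s+1$, and for each~$m\ge 1$ let~$\theta_m$ be the basic local sentence asserting the existence of~$m$ pairwise~$D$-far elements each satisfying~$\beta$. If~$\theta_{k+1}$ holds, then by pigeonhole (each element of~$\bar x$ is within distance~$d$ of at most one member of a~$D$-scattered family) some~$\beta$-witness lies at distance~$>d$ from every element of~$\bar x$, so the existential is true irrespective of~$\bar x$. If instead~$\theta_i\wedge\neg\theta_{i+1}$ holds for some~$i\le k$, a maximal~$D$-scattered set of~$\beta$-witnesses has size exactly~$i$ and every~$\beta$-witness lies within distance~$D$ of this set; the question of whether a~$\beta$-witness lies far from~$\bar x$ then reduces to whether one of finitely many bounded-radius clusters contains such a witness, which can be expressed as a Boolean combination of basic local sentences (to capture the possible scatter patterns of~$\beta$-witnesses) and local formulas around~$\bar x$ (to capture how~$\bar x$ meets the resulting bounded-radius regions), after which the inductive normal form is reapplied to each case. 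Assembling the close case, the far case, and all sub-cases of the scattered-set argument produces the required normal form and completes the induction.
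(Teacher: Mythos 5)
The paper never proves this statement -- it is quoted from Gaifman's paper as a black box -- so there is no in-paper proof to match; the closest analogue is the Rank-Preserving Locality Theorem (Theorem~\ref{thm:rplocality}), which the authors establish \emph{semantically}, via the EF$_q^+$-game and a type-counting argument (Lemma~\ref{lem:rplocality}), precisely because the syntactic induction you propose does not give enough control over quantifier rank for their purposes. Your route is the classical syntactic one: strengthen the statement to formulas with free variables, induct on the formula, split the existential witness into ``near'' and ``far'', decompose the far case over the disjoint union of neighbourhoods via Feferman--Vaught, and eliminate the residual distance constraint by a scattered-set/pigeonhole argument. That architecture is correct and is essentially Gaifman's original proof; the near case, the disjoint-union decomposition, and the $\theta_{k+1}$ pigeonhole step are all fine.

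The genuine gap is in the case $\theta_i\wedge\neg\theta_{i+1}$ with $i\le k$. A small point first: a maximal (non-extendable) scattered set of $\beta$-witnesses need not have size exactly $i$ -- only a maximum-size one does -- though non-extendability is all you need for the covering property. The serious point is that the bounded-radius clusters around such a set are not definable relative to $\bar x$: a cluster lying far from $\bar x$ is invisible to any local formula around $\bar x$, while basic local sentences only count witnesses globally, so ``whether one of finitely many bounded-radius clusters contains such a witness'' is not yet in the target normal form, and re-applying the induction does not obviously terminate. The sentence in which you assert expressibility is exactly where the content of Gaifman's theorem lives. The standard resolution, which your phrase ``how $\bar x$ meets the resulting bounded-radius regions'' gestures at but does not carry out, is to compare the global scatter number with the locally realised one. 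Let $\lambda_i(\bar x)$ be the local formula (of radius about $d+D+s$) asserting that $N_d(\bar x)$ already contains $i$ pairwise $D$-separated $\beta$-witnesses. If $\lambda_i(\bar x)$ fails, then any maximum global $D$-separated family of $\beta$-witnesses, which has size $i$, must have a member outside $N_d(\bar x)$, and that member witnesses the existential outright. If $\lambda_i(\bar x)$ holds, the witnessing local family cannot be extended globally (else $\theta_{i+1}$ would hold), so every $\beta$-witness lies within distance $D$ of it and hence inside $N_{d+D}(\bar x)$, and the existential collapses to the local formula $\exists y\,(d<\dist(y,\bar x)\le d+D\wedge\beta(y))$. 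Thus under $\theta_i\wedge\neg\theta_{i+1}$ the target formula is equivalent to $\neg\lambda_i(\bar x)\vee\exists y\,(d<\dist(y,\bar x)\le d+D\wedge\beta(y))$, which is in normal form with no further recursion. With that replacement your induction closes and the proof is complete.
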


The algorithm of Frick and Grohe~\cite{frigro01}
for deciding first-order properties on graph classes of bounded local
tree width relies on Gaifman's theorem. Unfortunately, we cannot use
Gaifman's theorem here, at least not directly, because it does not give us sufficient
control over the quantifier rank of the basic local sentences we
translate a sentence to. As we intend to apply the theorem repeatedly,
such control will be crucial. To get around these difficulties, we
need a discounted rank measure, which does not charge the full
quantifier rank to distance formulas,  and a refined version of
Gaifman's theorem.

\subsection{The Logic~$\FO^+$}
We define an extension~$\FO^+$ of first-order logic by adding new
atomic formulas~$\dist(x,y)\le d$, for all variables~$x,y$ and all
$d\in\mathbb N$. We call these formulas \emph{distance atoms}. The
meaning of the distance atoms is
obvious. Note that every~$\FO^+$-formula~$\phi$ is equivalent to an
$\FO$-formula~$\phi^-$ obtained from~$\phi$ by replacing each distance
atom~$\dist(x,y)\le d$ by the
$\FO$-formula~$\delta_{\le d}(x,y)$. Thus~$\FO^+$ is only a syntactic
extension of~$\FO$. However, the quantifier rank of~$\delta_{\le
  d}(x,y)\in\FO$ is at least~$\lceil\log d\rceil$, whereas by
definition the
quantifier rank of the atomic~$\FO^+$-formula~$\dist(x,y)\le d$ is~$0$.
With this definition as one of the base steps, we can define the
quantifier rank~$\qr(\phi)$ for~$\FO^+$-formulas~$\phi$ recursively
as for~$\FO$-formulas.

We now define the discounted rank measure. Let~$q\in\N$. 

We say that~$\phi$ has \emph{$q$-rank} at most~$\ell$ if~$\phi$ has
quantifier-rank at most~$\ell$ and if each distance atom
$\dist(x,y)\leq d$ in the scope of~$i\leq \ell$ quantifiers satisfies
$d\leq (4q)^{q+\ell-i}$.

For example, the sentence 
\[
\exists x\exists y\Big(\dist(x,y)\le 12^5\wedge\exists
z\big(\dist(x,z)\le 12^6\wedge\forall
z'(\neg\dist(z,z')\le 12^4\vee\dist(z',y)\le 12^4)\big)\Big)
\]
has~$3$-rank~$6$, because for the distance atom~$\dist(x,z)\le 12^6$
in the scope of~$3$ quantifiers we have~$12^6=(4\cdot
3)^{3+6-3}$. Note that the quantifier-rank of this formula is~$4$ and
hence~$\leq\ell=6$.

For convenience, we let 

\begin{equation}
  f_q(\ell):=(4q)^{q+\ell}.\label{eq:f}
\end{equation}
This is is the
largest value of~$d$ which may occur in a distance atom~$\dist(x,y)\le
d$ of a formula of~$q$-rank~$\ell$.

The definition of the~$q$-rank
arises from the necessities of the proof of
Theorem~\ref{thm:rplocality}. Note that this rank measure makes it
cheaper to define distances as in \FO-formulas: with an~$\FO^+$-formula of~$q$-rank~$q$
we can define distances up to~$(4q)^{2q}$, which is much more than the
distance~$2^q$ we can define with an~$\FO$-formula of quantifier rank
$q$. Also note that defining distances becomes more expensive in the
scope of quantifiers.

Up to logical equivalence, for all~$k,q,\ell$ there are only finitely
many~$\FO^+[\sigma]$-formulas~$\phi(x_1,\ldots,x_k)$ of~$q$-rank at most~$\ell$. As~$\FO$-formulas, we can normalise~$\FO^+$
formulas such that
every formula can be effectively translated into an equivalent
normalised formula
of the same rank, and for all~$k,q,\ell$ the set
$\Phi^+(\sigma,k,q,\ell)$ of
all normalised~$\FO^+$-formulas~$\phi(x_1,\ldots,x_k)$ of~$q$-rank at
most~$\ell$
is finite and
computable.

\subsection{An Ehrenfeucht-Fra\"iss\'e Game for~$\FO^+$}

For~$\sigma$-structures~$A,B$ and tuples~$\tup a=(a_1,\ldots,a_k)\in
V(A)^k,\tup b=(b_1,\ldots,b_k)\in V(B)^k$ we write~$(A,\tup
a)\equiv^+_{q,\ell} (B,\tup b)$ (and say that~$(A,\tup a)$ and
$(B,\tup b)$ are
$(q,\ell)^+$-equivalent) if for all~$\phi(\tup x)\in\FO^+$ of~$q$-rank
at most~$\ell$ we have~$A\models\phi(\tup a)\iff B\models\phi(\tup
b)$. Observe that~$(A,\tup a)\equiv^+_{q,\ell} (B,\tup b)$ implies for
all~$i,j\in[k]$ that either~$\dist(a_i,a_j)=\dist(b_i,b_j)$ or
$\dist(a_i,a_j)>f_q(\ell)$ and~$\dist(b_i,b_j)>f_q(\ell)$.

We generalise the well-known characterisation of first-order
equivalence by means of the 
Ehrenfeucht-Fra\"iss\'e (EF) game (see, for example,
\cite{EbbinghausFluTho94}) to the logic~$\FO^+$ parameterized by
$q$-ranks. A \emph{partial~$d$-isomorphism} between two structures
$A,B$ is a mapping~$p$ with domain~$\dom(p)\subseteq V(A)$ and range
$\rg(p)\subseteq V(B)$ that is an isomorphism between the induced
substructure~$A[\dom(p)]$ and the induced substructure~$B[\rg(p)]$ and
in addition, preserves distances up to~$d$, that is, for all
$a,a'\in\dom(p)$ either~$\dist(a,a')=\dist(p(a),p(a'))$ or
$\dist(a,a')>d$ and~$\dist(p(a),p(a'))>d$. 

\begin{definition}[EF$_q^+$-game]\upshape
  Let~$A,B$ be~$\sigma$-structures,~$\tup{a}=(a_1,\ldots,a_k)\in V(A)^k$,~$\tup{b}=(b_1,\ldots,b_k)\in
  V(B)^k$ and~$q\in\N$. Let~$0\leq \ell\leq q$. The \emph{$\ell$-round
    EF$_q^+$-game} on~$(A, \tup a, B,\tup b)$ is played by
  two players, called \emph{Spoiler} and \emph{Duplicator}. The game
  is played for~$\ell$ rounds.  In round~$i$, Spoiler picks an element
 ~$a_{k+i}\in V(A)$ or an element~$b_{k+i}\in V(B)$. If Spoiler picks~$a_{k+i}\in
  V(A)$, then Duplicator must choose an element~$b_{k+i}\in V(B)$ and if
  Spoiler picks~$b_{k+i}\in V(B)$, then Duplicator must choose an element
 ~$a_{k+i}\in V(A)$. Duplicator wins
  the game if for~$0\leq i\leq \ell$, the mapping 
 ~$a_j\mapsto b_j$ for~$1\le j\le k+i$ is a partial
 ~$f_q(\ell-i)$-isomorphism.
\end{definition}

\pagebreak
\begin{theorem}\label{thm:game}
  For all~$q, 0\leq \ell\leq q, A, B$ and~$\tup{a}\in V(A)^{k}, \tup{b}\in
  V(B)^{k}$, the following are equivalent.
  \begin{enumerate}
  \item \label{item:ws}Duplicator has a winning strategy for the
   ~$\ell$-round EF$_q^+$ game on~$(A, \tup a,B,\tup
    b)$.
  \item \label{item:equiv}$(A,\tup a)\equiv^+_{q, \ell} (B,\tup b)$.
  \end{enumerate}
\end{theorem}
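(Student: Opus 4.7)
My plan is to carry out the standard back-and-forth induction on $\ell$ from the Ehrenfeucht-Fra\"iss\'e theory of first-order logic, adapted to the discounted $q$-rank. For $\ell=0$, Duplicator wins the $0$-round game iff the map $a_j\mapsto b_j$ is a partial $f_q(0)$-isomorphism; on the logical side, $\FO^+$-formulas of $q$-rank $0$ are quantifier-free Boolean combinations of atomic formulas, including distance atoms $\dist(x,y)\le d$ with $d\le(4q)^q=f_q(0)$, so agreement on all such formulas coincides exactly with being a partial $f_q(0)$-isomorphism.

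The technical heart of the inductive step is a single rank computation: if $\phi$ has $q$-rank $\le\ell$ and $\phi=\exists x\,\psi$, then $\psi$ has $q$-rank $\le\ell-1$, since any distance atom in $\psi$ in the scope of $i$ quantifiers was in the scope of $i+1$ quantifiers in $\phi$, and hence satisfies $d\le(4q)^{q+\ell-(i+1)}=(4q)^{q+(\ell-1)-i}$. Conversely, if $\chi$ has $q$-rank $\le\ell-1$, the same computation shows that $\exists y\,\chi$ has $q$-rank $\le\ell$. This tight compatibility is precisely what the discounted rank is engineered to provide, and it is what makes the induction go through.

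For $(\ref{item:ws})\Rightarrow(\ref{item:equiv})$, I argue by structural induction on $\phi$ of $q$-rank $\le\ell$. Atomic formulas are handled by the fact that Duplicator's winning strategy guarantees a partial $f_q(\ell)$-isomorphism already at the outset (the case $i=0$ in the winning condition), and Boolean combinations are immediate. For $\phi=\exists x\,\psi$, assuming without loss of generality that $A\models\phi(\tup a)$ with witness $a_{k+1}$, let Spoiler play $a_{k+1}$ and let $b_{k+1}$ be Duplicator's response; the remaining game is an $(\ell-1)$-round EF$_q^+$-game that Duplicator still wins. Since $\psi$ has $q$-rank $\le\ell-1$ by the rank computation, the inductive hypothesis yields $B\models\psi(\tup b,b_{k+1})$, hence $B\models\phi(\tup b)$.

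For $(\ref{item:equiv})\Rightarrow(\ref{item:ws})$, I describe Duplicator's strategy via Hintikka-style characteristic formulas. When Spoiler plays $a_{k+1}\in V(A)$, the finiteness of $\Phi^+(\sigma,k+1,q,\ell-1)$ yields a single formula $\chi(\tup x,y)$ of $q$-rank $\le\ell-1$ defining the $\equiv^+_{q,\ell-1}$-class of $(A,\tup a,a_{k+1})$, namely the conjunction of all normalised formulas of $q$-rank $\le\ell-1$ satisfied there together with the negations of the others. By the rank computation, $\exists y\,\chi(\tup x,y)$ has $q$-rank $\le\ell$ and is satisfied by $(A,\tup a)$, so by (\ref{item:equiv}) it is also satisfied by $(B,\tup b)$, giving a witness $b_{k+1}$ with $(A,\tup a,a_{k+1})\equiv^+_{q,\ell-1}(B,\tup b,b_{k+1})$; the inductive hypothesis applied in direction $(\ref{item:equiv})\Rightarrow(\ref{item:ws})$ then supplies Duplicator with a winning strategy for the remaining $(\ell-1)$-round game. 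The only real obstacle throughout is the careful bookkeeping of $q$-rank under the addition or removal of an outermost quantifier, which is exactly encapsulated in the rank computation highlighted above.
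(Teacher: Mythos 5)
Your proof is correct and follows essentially the same route as the paper: the base case via partial $f_q(0)$-isomorphisms, the direction from winning strategies to equivalence by structural induction on formulas, and the converse via Hintikka-style characteristic formulas, all hinging on the same rank bookkeeping showing that stripping or adding an outermost quantifier shifts the $q$-rank by exactly one. The only cosmetic difference is that the paper packages the characteristic formulas into a single recursively defined formula $\phi_{\tup a}^{q,\ell}$ and proves a three-way equivalence (Lemma~\ref{lem:game}), whereas you construct a characteristic formula per move; both arguments rest on the finiteness of $\Phi^+(\sigma,k,q,\ell)$.
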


The proof of Theorem~\ref{thm:game} requires some familiarity with
logic. It is similar to the proof that equivalence in first-order
logic is characterised by the standard Ehrenfeucht-Fra\"iss\'e game
(see, for example, \cite{EbbinghausFluTho94}).

For~$\tup{a}=(a_1,\ldots,a_k)\in V(A)^k$ and~$a\in V(A)$, write
$\dist(\tup{a},a)=_{q,\ell}\tup{d}\in (\{0,\ldots, f_q(\ell)\}\cup\{\infty\})^k$ if for
all~$i\in[k]$ we have~$\dist(a_i, a)=d_i\leq f_q(\ell)$ or
$\dist(a_i,a)>f_q(\ell)$ and $d_i=\infty$. 
Note that we can easily
write a quantifier-free~$\FO^+$-formula of~$q$-rank
$\ell$ expressing~$\dist(\tup{x},x)=_{q,\ell}\tup{d}$.

We can rephrase the existence of a winning strategy for Duplicator
in the~$\ell$-round EF$_q^+$ game on~$(A, \tup a,B,\tup
    b)$ as follows. 
    \begin{itemize}
    \item 
Duplicator has a
winning strategy for the
$0$-round~$\FO_q^+$-game on~$(A, \tup a, B,\tup b)$ if, and
only if,~$\tup{a}\mapsto\tup{b}$ is a partial~$f_q(0)$-isomorphism. 

\item
For
$0<\ell\leq q$, Duplicator has a winning strategy  for the~$\ell$-round~$\FO_q^+$-game on~$(A, \tup a, B,\tup b)$ if, and only if,
\begin{enumerate}
\item[(1)] $\tup{a}$ and~$\tup{b}$ satisfy the same distance formulas up to
 ~$f_q(\ell)$ and
\item[(2)] for every~$a\in V(A)$ 
  there is a~$b\in V(B)$ such that
  Duplicator has a winning strategy for the~$\ell-1$-round~$\FO_q^+$-game on~$(A, \tup{a}a)$
  and~$(B,\tup{b}b)$ and
\item[(3)] for every~$b\in V(B)$ 
  there is an~$a\in V(A)$ such that
  Duplicator has a winning strategy for the~$\ell-1$-round~$\FO_q^+$-game on~$(A, \tup{a}a)$
  and~$(B,\tup{b}b)$.
\end{enumerate}
    \end{itemize}

This description of winning strategies can be defined in~$\FO^+$ as
follows. Let~$A$ and
$q\in \N$ be given. For~$\tup{a}=(a_1,\ldots, a_k)\in V(A)^k$, 
$\tup{x}:=(x_1,\ldots, x_k)$ and~$0\leq \ell\leq q$, let

  \begin{align*}
    \vartheta_{\tup{a}}^{q,{\ell}}(\tup x) := \bigwedge_{\substack{a_i, a_j\in \tup{a}\\\dist(a_i, a_j)=d\leq f_q({\ell})}}\dist(x_i, x_j)=d
    \quad\wedge \bigwedge_{\substack{a_i, a_j\in \tup{a}\\\dist(a_i, a_j)> f_q({\ell})}}\dist(x_i, x_j)>f_q({\ell}).
  \end{align*}

  For~${\ell}=0$, let
  \begin{align*}
    \phi_{\tup{a}}^{q,0} (\tup x):= \vartheta_{\tup{a}}^{q,0}(\tup
    x)\quad\wedge\quad \bigwedge_{\substack{\phi(\tup
        x)\in\Phi(\sigma,k,0)\\A\models\phi(\tup a)}}\phi(\tup{x}).
  \end{align*}
  Recall that~$\Phi(\sigma,k,0)$ denotes the (finite) set of all quantifier free
  normalised~$\FO[\sigma]$-formulas~$\phi(\tup x)$.
  For~$1\le{\ell}\leq q$, let
  \[
    \phi_{\tup{a}}^{q,{\ell}} (\tup x):= \vartheta_{\tup{a}}^{q,{\ell}}(\tup x)\wedge
    \bigwedge_{a\in
        V(A)}\exists
    x_{k+1}
    \phi_{\tup{a}a}^{q,{\ell}-1}(\tup{x},x_{k+1})\wedge 
    \forall x_{k+1}\bigvee_{a\in V(A)}\phi_{\tup{a}a}^{q,{\ell}-1}(\tup{x},x_{k+1}).
  \]
  If we remove repeated entries from the big conjunction and the big
  disjunction in the definition of~$\phi_{\tup{a}}^{q,{\ell}} (\tup
  x)$, we obtain a well-defined finite formula even for infinite
  structures~$A$. Moreover, it is easy to see that the~$q$-rank of
  this formula is~${\ell}$.  The following lemma implies
  Theorem~\ref{thm:game}.

\begin{lemma}\label{lem:game}
  Given~$q, 0\leq {\ell}\leq q, A, B$ and~$\tup{a}\in V(A)^{k}, \tup{b}\in
  V(B)^{k}$, the following are equivalent.
  \begin{enumerate}
  \item \label{item:ws}Duplicator has a winning strategy for
    the~${\ell}$-round EF$_q^+$ game on~$(A, \tup a,B,\tup b)$.
  \item \label{item:hint}$B\models\phi_{\tup{a}}^{q,{\ell}}(\tup{b})$.
  \item \label{item:equiv}$(A,\tup a)\equiv^+_{q, {\ell}} (B,\tup b)$.
  \end{enumerate}
\end{lemma}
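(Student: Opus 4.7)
The plan is to prove the lemma by induction on $\ell$, cycling through the implications $(1) \Rightarrow (3) \Rightarrow (2) \Rightarrow (1)$ at each rank. The base case $\ell = 0$ unwinds to the same content for all three conditions, namely that $\tup{a} \mapsto \tup{b}$ is a partial $f_q(0)$-isomorphism: Duplicator winning $0$ rounds is this by definition, the quantifier-free $\FO^+$-formulas of $q$-rank $0$ are Boolean combinations of atomic formulas over $\tup{x}$ and distance atoms with $d \leq f_q(0)$, and the formula $\phi_{\tup{a}}^{q,0}(\tup{x})$ encodes exactly the complete type of $\tup{a}$ among these.

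For the inductive step I would first handle $(1) \Rightarrow (3)$ by induction on formula structure. Atomic and Boolean cases follow since the map $a_i \mapsto b_i$ preserves distances up to $f_q(\ell)$. For $\phi = \exists y\,\psi(\tup{x}, y)$ of $q$-rank at most $\ell$, the subformula $\psi$ has $q$-rank at most $\ell - 1$, since the discounted rank drops the permissible distance bound from $(4q)^{q+\ell-i}$ to $(4q)^{q+(\ell-1)-(i-1)}$ when a quantifier is instantiated. A witness $a \in V(A)$ for $\phi(\tup a)$ is answered by Duplicator via some $b \in V(B)$ so that Duplicator still wins the remaining $(\ell - 1)$-round game on $(A, \tup{a}a, B, \tup{b}b)$; the outer inductive hypothesis applied at rank $\ell-1$ then gives $(q, \ell-1)^+$-equivalence, so $B \models \psi(\tup{b}, b)$, and the symmetric argument handles $\forall$.

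For $(3) \Rightarrow (2)$ I would establish two facts. First, $A \models \phi_{\tup{a}}^{q,\ell}(\tup{a})$ for every $A, \tup{a}$: the distance conjuncts $\vartheta_{\tup{a}}^{q,\ell}$ hold by construction, each existential conjunct $\exists x_{k+1}\,\phi_{\tup{a}a}^{q,\ell-1}$ is witnessed by $a$ itself using the inductive version of the same claim, and the universal conjunct is exhaustive since every element $a' \in V(A)$ contributes the disjunct $\phi_{\tup{a}a'}^{q,\ell-1}$. Second, a syntactic check on $\phi_{\tup{a}}^{q,\ell}$ confirms that it has $q$-rank at most $\ell$; hence $(q,\ell)^+$-equivalence transfers satisfaction to $B$, giving $B \models \phi_{\tup{a}}^{q,\ell}(\tup{b})$.

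Finally, for $(2) \Rightarrow (1)$ I would read Duplicator's strategy directly off the structure of $\phi_{\tup{a}}^{q,\ell}$: when Spoiler picks $a \in V(A)$, the corresponding existential conjunct supplies a $b \in V(B)$ with $B \models \phi_{\tup{a}a}^{q,\ell-1}(\tup{b}b)$; when Spoiler picks $b' \in V(B)$, the universal conjunct at $b'$ selects some $a' \in V(A)$ with the same property; in either case the inductive hypothesis at rank $\ell - 1$ yields a winning strategy for Duplicator in the remaining game. The main technical obstacle throughout is not the combinatorial shape of the argument but the bookkeeping of the discounted $q$-rank: one must verify that each recursive occurrence $\phi_{\tup{a}a}^{q,\ell-1}$ contributes distance atoms only at the depth permitted by the bound $f_q(\ell - 1)$, and that the outer distance atoms in $\vartheta_{\tup{a}}^{q,\ell}$ are themselves within $f_q(\ell)$. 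Because $f_q$ is monotone in $\ell$ and decrements correctly with each stripped quantifier, the induction closes without further surprises.
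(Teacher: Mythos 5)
Your proposal is correct and follows essentially the same route as the paper: the same structural induction on formulas for (1)\,$\Rightarrow$\,(3), the same observation that $A\models\phi_{\tup a}^{q,\ell}(\tup a)$ and that this formula has $q$-rank $\ell$ for (3)\,$\Rightarrow$\,(2), and the same reading-off of Duplicator's strategy from the back-and-forth conjuncts of $\phi_{\tup a}^{q,\ell}$ for (2)\,$\Rightarrow$\,(1). The only difference is organizational (you close a single cycle $(1)\Rightarrow(3)\Rightarrow(2)\Rightarrow(1)$, whereas the paper proves $(3)\Rightarrow(2)$, $(1)\Leftrightarrow(2)$, and $(1)\Rightarrow(3)$ separately), and your rank bookkeeping, including the shift from $(4q)^{q+\ell-i}$ to $(4q)^{q+(\ell-1)-(i-1)}$ under a stripped quantifier, is exactly right.
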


\begin{proof}
  Assertion (\ref{item:equiv}) implies assertion (\ref{item:hint}), as
  the~$q$-rank of~$\phi_{\tup{a}}^{q,{\ell}}$ is~${\ell}$
  and~$A\models\phi_{\tup{a}}^{q,{\ell}}(\tup{a})$.

  Let~$q\in \N$. We prove the equivalence of (\ref{item:ws}) and
  (\ref{item:hint}) by induction on~${\ell}$.

  For~${\ell}=0$,~$(A,\tup a)\equiv^+_{q,{\ell}} (B,\tup b)$ if, and
  only if,~$\tup a\mapsto \tup b$ is a
  partial~$f_q(0)$-isomorphism. This is exactly the meaning
  of~$\phi_{\tup{a}}^{q,0}$.

  For~${\ell}>0$, 
  \begin{description}
  \item[] Duplicator has a winning strategy for the~${\ell}$-round
    EF$_q^+$ game on~$(A,\tup{a},B, \tup{b})$
    \item[$\Longleftrightarrow$] ~$\tup{a}$ and~$\tup{b}$ satisfy the same
    distance formulas up to~$f_q({\ell})$ and
    \begin{itemize}
    \item for every~$a\in V(A)$ there is a~$b\in V(B)$ such that
      Duplicator has a winning strategy for
      the~${\ell}-1$-round~$\FO_q^+$-game on~$(A, \tup{a}a)$
      and~$(B,\tup{b}b)$ and
    \item for every~$b\in V(B)$ there is an~$a\in V(A)$ such that
      Duplicator has a winning strategy for
      the~${\ell}-1$-round~$\FO_q^+$-game on~$(A, \tup{a}a)$
      and~$(B,\tup{b}b)$
    \end{itemize}
   \item[$\Longleftrightarrow$] ~$\tup{a}$ and~$\tup{b}$ satisfy the same
    distance formulas up to~$f_q({\ell})$ and
    \begin{itemize}
    \item for every~$a\in V(A)$ there is a~$b\in V(B)$ such
      that~$B\models\phi_{\tup{a}a}^{q,{\ell}-1}(\tup{b}b)$ and
    \item for every~$b\in V(B)$ there is an~$a\in V(A)$ such
      that~$B\models\phi_{\tup{a}a}^{q,{\ell}-1}(\tup{b}b)$ (by
      induction hypothesis)
    \end{itemize} 
  \item[$\Longleftrightarrow$]~$B\models\phi_{\tup{a}}^{q,{\ell}}(\tup{b})$
    (by construction of~$\phi_{\tup{a}}^{q,{\ell}}$).
  \end{description}
  It remains to show that (\ref{item:ws}) implies
  (\ref{item:equiv}). The proof is by induction
  on~${\ell}$. Case~${\ell}=0$ is handled as above. Let~${\ell}>0$ and
  suppose that Duplicator has a winning strategy for
  the~${\ell}$-round EF$_q^+$ game starting in position~$(A,\tup{a},B,
  \tup{b})$. Then the truth of atomic formulas and distances up
  to~$f_q({\ell})$ in~$\tup{a}$ and~$\tup{b}$ are preserved. Clearly,
  the set of formulas whose truth values are preserved is closed under
  negation and disjunction. Suppose that~$\phi(\tup{x})=\exists
  y\psi(\tup{x},y)$ and~$\phi$ is of rank at
  most~$(q,{\ell})$. Assume, for
  instance,~$A\models\phi(\tup{a})$. Then there is~$a\in V(A)$ such
  that~$A\models\phi(\tup{a}, a)$. By assumption Duplicator has a
  winning strategy for the~${\ell}$-round EF$_q^+$ game starting in
  position~$(A,\tup{a},B, \tup{b})$ and thus there is~$b\in V(B)$ such
  that Duplicator has a winning strategy for the~${\ell}-1$-round
  EF$_q^+$ game starting in position~$(A,\tup{a}a,B, \tup{b}b)$. Since
  the~$q$-rank of~$\psi$ is at most~${\ell}-1$, the induction
  hypothesis yields~$B\models\psi(\tup{b},b)$ and
  hence~$B\models\phi(\tup{b})$.
\end{proof}

\subsection{The Rank-Preserving Locality Theorem}
\label{sec:rpl}

We expand~$\sigma$-structures~$A$ by adding definable information
about neighbourhoods to every vertex. Let $\mathcal X$ be
an~$r$-neighbourhood cover of~$A$. For every~$a\in V(G)$, we fix some
cluster~${\mathcal X}(a)\in\XXX$ such that $N_r(a)\subseteq
{\XXX}(a)$. Actually, we view this assignments of clusters to the
vertices as being given with the neighbourhood cover. Formally, we
thus view an~$r$-neighbourhood cover~$\XXX$ as a mapping that
associates with every vertex~$a\in V(G)$ a set~${\XXX}(a)\subseteq
V(G)$ such that~$N_r(a)\subseteq {\XXX}(a)$.  For all~$q\in\N$,
let~$\sigma\star q$ be the vocabulary obtained from~$\sigma$ by adding
a fresh unary relation symbol $P_\phi$ for each
$\phi=\phi(x)\in\Phi^+(\sigma,1,q,q)$. For a~$\sigma$-structure $A$,
let~$A\star_{\XXX}q$ be the~$\sigma\star q$-expansion of~$A$ in
which~$P_\phi$ is interpreted by the set of all~$a\in V(A)$ such
that~$A\big[{\XXX}(a)\big]\models\phi(a)$.  We let~$\sigma\star^0
q:=\sigma$ and~$A\star_{\XXX}^0 q:=A$. For $i\ge 0$, we
let~$\sigma\star^{i+1} q:=(\sigma\star^i q)\star q$ and
$A\star_{\XXX}^{i+1} q:=\big(A\star_{\XXX}^{i}q\big)\star_{\XXX}q$.

A \emph{$(q,r)$-independence sentence} is a sentence of the form
\[
\exists x_1\ldots\exists x_q\Big(\bigwedge_{1\le i<j\le
  q}\dist(x_i,x_j)> 2r\wedge\bigwedge_{1\le i\le q}
\phi(x_i)\Big)
\]
for a quantifier-free first-order formula~$\phi(x_i)$. Note that the independence
sentences have the same form as the basic local sentences in Gaifman's
Theorem, except that the formula~$\phi(x)$ is required to be
quantifier-free, which implies that it is~$s$-local for every~$s\geq
0$. We denote the set of all~$(q,r)$-independence sentences of
vocabulary~$\sigma$ by~$\Psi(\sigma,q,r)$.

\begin{theorem}[Rank-Preserving Locality Theorem]\label{thm:rplocality}
  Let~$q\in\mathbb N$ and~$r=f_q(q)$. For
  every~$\FO[\sigma]$-formula~$\phi(x)$ of quantifier rank~$q$ there
  is an~$\FO^+[\sigma\star^{q+1} q]$-formula~$\hphi(x)$, which is a
  Boolean combination of~$(q+1,r)$-independence sentences and atomic
  formulas, such that for every~$\sigma$-structure~$A$,
  every~$r$-neighbourhood cover~$\XXX$ of~$A$, and every~$a\in V(A)$,
  \[
  A\models\phi(a)\iff A\star_{\XXX}^{q+1}q\models\hphi(a). 
  \]
  Furthermore,~$\hphi$ is computable from~$\phi$.
\end{theorem}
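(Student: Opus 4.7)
The plan is to proceed by induction on the quantifier rank~$q'\le q$ of a formula (with $q$ fixed as the outer parameter throughout), proving a strengthened statement for formulas with arbitrarily many free variables: any $\FO[\sigma]$-formula $\chi(\bar x)$ of quantifier rank $q'$ is equivalent, after evaluation in $A\star_{\XXX}^{q'+1}q$, to a Boolean combination of $(q{+}1,r)$-independence sentences and atomic formulas over $\sigma\star^{q'+1}q$. The base case $q'=0$ is immediate, since a quantifier-free $\chi(\bar x)$ is already a Boolean combination of atomic $\sigma$-formulas, and $\sigma\subseteq\sigma\star^{q+1}q$.

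For the inductive step, I would first normalise a formula of quantifier rank~$q'$ so that it is a Boolean combination of formulas of the form $\exists y\,\chi(\bar x, y)$ with $\chi$ of quantifier rank $q'-1$, treating Boolean combinations componentwise. Applying the induction hypothesis to~$\chi$ produces a translation $\widehat\chi$ as a Boolean combination of $(q{+}1,r)$-independence sentences and atomic formulas in the vocabulary $\sigma\star^{q'}q$. The independence-sentence parts are sentences and can be pulled outside the quantifier on~$y$; for the atomic part that genuinely depends on~$y$, I would then perform a Gaifman-style splitting of $\exists y$ into a ``near'' case $\dist(\bar x, y)\le s$ and a ``far'' case $\dist(\bar x, y) > s$, with $s$ chosen well below $r = f_q(q)$. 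The near case defines a local property of~$\bar x$ that, provided its $q$-rank is at most~$q$, is an admissible formula $\psi'(\bar x)\in\Phi^+(\sigma\star^{q'}q,|\bar x|,q,q)$; its truth can then be read off from a fresh predicate $P_{\psi'}$ in one more level of the expansion, $\sigma\star^{q'+1}q$, because the $r$-neighbourhood cover $\XXX$ ensures $N_r(\bar x)\subseteq\XXX(x_i)$ for each~$x_i$. The far case is handled by the classical ``$2r$-scattered witnesses'' trick: the existence of a~$y$ far from all~$x_i$ satisfying a given quantifier-free local type corresponds, modulo localisation corrections around~$\bar x$ that again become atomic after one more expansion, to a $(q{+}1,r)$-independence sentence, since we only ever need to distinguish counts of distant witnesses up to~$q+1$.

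The main obstacle will be the rank bookkeeping required to make this induction go through. Two things need to be verified: (a) that the extracted local formula~$\psi'$ in the near case really has $q$-rank at most~$q$, so that $P_{\psi'}$ is a legitimate predicate in $\sigma\star^{q'+1}q$ and not merely one of higher rank; and (b) that each inductive step consumes exactly one additional level of the $\star$-expansion, so that processing all~$q$ quantifiers of~$\phi$ uses at most $q+1$ levels in total. Point~(a) is precisely what the discounted $q$-rank measure and the value $r = f_q(q)=(4q)^{2q}$ are designed to ensure: distance atoms $\dist(x,y)\le d$ with $d\le f_q(q)$ have $q$-rank~$0$ at the top level, and the growth factor $4q$ in $f_q(\ell)=(4q)^{q+\ell}$ absorbs the blow-up of locality radius incurred by each Gaifman split with sufficient slack as one descends into quantifier scopes. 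The game characterisation of Lemma~\ref{lem:game} can be used throughout to replace semantic equivalence arguments by combinatorial reasoning about Duplicator's strategies, which should streamline the verification of the rank bounds and the fact that introducing $P_{\psi'}$ is conservative.
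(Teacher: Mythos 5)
Your overall strategy---a syntactic quantifier-elimination induction, splitting each $\exists y$ into a near and a far case and absorbing the near case into the next level of the $\star$-expansion---is a legitimate alternative to the paper's route (the paper instead proves a purely semantic statement, Lemma~\ref{lem:rplocality}, by an $\mathrm{EF}^+_q$-game argument and then extracts $\hphi$ from type information plus recursive enumerability of valid sentences; your route would, if it worked, make computability of $\hphi$ immediate). However, there is a genuine gap in your inductive invariant. You claim that every $\FO[\sigma]$-formula $\chi(\bar x)$ of quantifier rank $q'$ becomes, over $A\star^{q'+1}_\XXX q$, a Boolean combination of independence \emph{sentences} and \emph{atomic} formulas. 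This is false as soon as $\bar x$ has two or more entries. The vocabulary $\sigma\star^j q$ adds only \emph{unary} predicates $P_\psi$ for $\psi\in\Phi^+(\cdot,1,q,q)$, i.e.\ predicates recording properties of a single vertex inside its own cluster $\XXX(a)$; the admissible atoms in $x_1,x_2$ are therefore only unary facts about each $x_i$ separately, $E(x_1,x_2)$, $x_1=x_2$, and distance atoms. A formula such as $\exists y\,(E(x_1,y)\wedge E(x_2,y)\wedge R(y))$ (quantifier rank $1$) cannot be expressed this way: one can build a single structure containing two pairs $(a,c)$ and $(a',c')$ at distance $2$ with isomorphic individual $r$-neighbourhoods, where only one pair has a red common neighbour; every Boolean combination of sentences and such atoms agrees on the two pairs while $\chi$ does not. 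The correct invariant must instead keep track of formulas that are local around \emph{connected sub-tuples} of $\bar x$ (connected in the graph whose edges record which free variables are within distance $\approx r_k$ of each other), and these cannot be made atomic by the $\star$-operator. This is exactly the bookkeeping that the paper's game proof performs via the components $J\sqsubseteq H_p$ and the per-component equivalences $(A_J,\bar a_J)\equiv^+_{q,q-k}(B_J,\bar b_J)$; the final theorem escapes the problem only because it is stated for a \emph{single} free variable, where "local around $x$" really is capturable by a unary predicate.

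The same issue resurfaces in your near-case step: you propose to read off the near part from "a fresh predicate $P_{\psi'}$" with $\psi'(\bar x)$, but no such predicate exists in $\sigma\star^{q'+1}q$ unless $\bar x$ is a single variable. Your far case and the rank bookkeeping (points (a) and (b)) are the right concerns and the discounting in $f_q(\ell)=(4q)^{q+\ell}$ is indeed what makes them work, but the argument that far witnesses can be described by a $(q{+}1,r)$-independence sentence with a \emph{quantifier-free} kernel also needs the counting/swapping argument (the construction of the radii $d\le D-4r_{k+1}$ and the pigeonhole contradiction in Case~2 of the paper's Lemma~\ref{lem:rplocality}), which you only gesture at. To repair your proof you would need to (i) weaken the inductive invariant to allow, besides sentences and atoms, formulas local around connected sub-tuples, together with distance atoms describing the closeness pattern of $\bar x$, and (ii) prove a separate lemma showing how such multi-variable local formulas recombine when a quantifier is stripped off; at that point you would essentially be reconstructing the paper's Lemma~\ref{lem:rplocality} in syntactic form.
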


Even though we need the theorem in this general form, it may be
worthwhile to state, as a corollary, a version that does not refer to any
neighbourhood cover. It is obtained by applying the theorem to the generic
$r$-neighbourhood cover~$\XXX=\{N_r(v)\mid v\in V(G)\}$. We omit the
index~$\XXX$ in the~$\star$-notation when we refer to this
neighbourhood cover. As a further simplification, we only state the
corollary for sentences.

\begin{corollary}
  Let~$q\in\mathbb N$ and~$r=f_q(q)$. For
  every~$\FO[\sigma]$-sentence~$\phi$ of quantifier rank~$q$ there is
  an~$\FO^+[\sigma\star^{q+1} q]$-sentence~$\hphi$, which is a Boolean
  combination of~$(q+1,r)$-independence sentences, such that for
  every~$\sigma$-structure~$A$ and every~$a\in V(A)$,
  \[
  A\models\phi\iff A\star^{q+1}q\models\hphi. 
  \]
  Furthermore,~$\hphi$ is computable from~$\phi$.
\end{corollary}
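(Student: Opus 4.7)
My plan is to derive the rank-preserving locality theorem from the classical Gaifman locality theorem by absorbing all local sub-formulas into the unary predicates of the expanded vocabulary $\sigma\star^{q+1}q$. First, I would apply Gaifman's theorem to $\phi(x)$ to obtain an equivalent Boolean combination of (i) a single $r'$-local formula $\psi(x)$ with the one free variable $x$ and (ii) basic local sentences of the form $\exists x_1\ldots\exists x_k(\bigwedge_{i<j}\delta_{>2s}(x_i,x_j)\wedge\bigwedge_i\alpha(x_i))$ with $\alpha$ being $s$-local. The radii $r',s$ and the number of variables $k$ are bounded by the usual explicit functions of $\qr(\phi)=q$; the choice $r=f_q(q)=(4q)^{2q}$ is large enough that all these radii are at most $r$, and every $k$ can be padded to exactly $q+1$ by adding trivial existential witnesses.

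Second, I would translate each $r$-local sub-formula into an atom of the expanded vocabulary. For any $r$-local $\FO$-formula $\chi(x)$ of quantifier rank at most $q$, I would view $\chi$ as a normalised $\FO^+$-formula $\chi'\in\Phi^+(\sigma,1,q,q)$; this is legal because $\chi$ contains no distance atoms, so its $q$-rank equals its quantifier rank. By definition, at any vertex $a$ the predicate $P_{\chi'}$ in $A\star_\XXX q$ holds iff $A[\XXX(a)]\models\chi'(a)$, and since $N_r(a)\subseteq\XXX(a)$ and $\chi'$ is $r$-local this coincides with $A\models\chi(a)$. Substituting $P_{\chi'}(x)$ for each local sub-formula in Gaifman's decomposition then yields a Boolean combination of atomic formulas and $(q+1,r)$-independence sentences over $\sigma\star q$, as required.

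The principal obstacle is that classical Gaifman does not preserve quantifier rank: the local sub-formulas it produces may themselves have quantifier rank up to $q$, so a single absorption step cannot turn everything atomic without circularity. This is exactly the point of iterating $\star$ up to $q+1$ times together with the discounted $q$-rank measure. Each iteration can absorb one further layer of Gaifman's local/scattered decomposition, and the discounted rank allows the $2r$-distance atoms appearing in the outer independence block to coexist with more deeply nested distance atoms hidden inside local sub-formulas: at depth $i$ below the top, distance atoms are permitted up to $(4q)^{q+\ell-i}$, which grows with $\ell$ precisely to accommodate the widening neighbourhood radii produced by repeated applications of Gaifman. The delicate bookkeeping --- verifying at each iteration that all radii fit under the $q$-rank budget and that the predicate set $\Phi^+(\sigma,1,q,q)$ suffices to absorb every local sub-formula encountered --- is the technical heart of the proof, and the EF-game characterisation (Theorem~\ref{thm:game}) seems the natural tool to conduct the induction cleanly, since it reduces the problem to controlling $(q,\ell)^+$-equivalence classes rather than wrestling with formula syntax directly.
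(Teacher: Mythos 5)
The statement you are asked to prove is, in the paper, a one-line consequence of the Rank-Preserving Locality Theorem (Theorem~\ref{thm:rplocality}): one simply applies that theorem to the generic $r$-neighbourhood cover $\XXX=\{N_r(v)\mid v\in V(G)\}$ and restricts attention to sentences. What you propose is therefore really an alternative proof of Theorem~\ref{thm:rplocality} itself, built on classical Gaifman locality, and it has a genuine gap at exactly the step you defer to ``delicate bookkeeping''. Your absorption step assumes that the $s$-local formulas $\alpha$ produced by Gaifman's theorem lie in $\Phi^+(\sigma,1,q,q)$, i.e.\ have $q$-rank at most $q$. They do not: the local formulas of the classical theorem relativise their quantifiers to $s$-neighbourhoods and hence contain distance subformulas $\delta_{\le s}$ with $s$ exponential in $q$; written in $\FO$ these add roughly $\log s$ to the quantifier rank, and even rewritten in $\FO^+$ the resulting distance atoms sit in the scope of quantifiers and violate the budget $d\le(4q)^{q+\ell-i}$. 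So the vocabulary $\sigma\star q$ simply contains no predicate $P_{\chi'}$ for these formulas, and one absorption step fails. Iterating does not repair this: by definition \emph{every} level of the $\star$-iteration only provides predicates for formulas of $q$-rank at most $q$ over the previous vocabulary, so the same obstruction recurs unchanged at each level, while each fresh application of Gaifman to the leftover local formulas inflates radii and ranks again. You give no argument that this process closes off, and the number $q+1$ has nothing to do with counting ``layers of Gaifman''.

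In the paper the exponent $q+1$ counts the rounds of the EF$_q^+$-game in Lemma~\ref{lem:rplocality}, whose proof replaces Gaifman's theorem entirely by a direct game argument: in round $k$ Duplicator maintains, for each component of the closeness graph, substructures that are $(q,q-k)^+$-equivalent, consuming one level of the $\star$-iteration and shrinking the working radius from $r_k=f_q(q-k)$ to $r_{k+1}$ as she goes; the key combinatorial step (Case~2) uses the $(q{+}1,r)$-independence types to transfer scattered sets of realisations of an atomic type between the two structures. Your sketch also leaves the computability of $\hphi$ unaddressed; in the paper this needs a separate argument via the recursive enumerability of valid first-order sentences, because the set $S_\phi$ of realised type pairs is not directly computable. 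To make your route work you would first have to prove a rank-preserving version of Gaifman's theorem --- which is precisely the statement you are trying to derive.
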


To prove the theorem, it will be convenient to 
introduce the language
of types.  The \emph{$(q,{\ell})$-type} of a tuple~$\tup a\in V(A)^{k}$ in
a~$\sigma$-structure~$A$ is the set~$\tp_{q,{\ell}}(A,\tup
a)$
of all
formulas~$\phi(\tup x)\in\Phi^+(\sigma,k,q,{\ell})$ (normalised
$\FO^+[\sigma]$-formulas of~$q$-rank at most~${\ell}$) such that
$A\models\phi(\tup a)$. Note that
\[
(A,\tup a)\equiv^+_{q,{\ell}} (B,\tup b)\iff\tp_{q,{\ell}}(A,\tup
a)=\tp_{q,{\ell}}(B,\tup b).
\]
We call~$\atp_q(A,\tup a):=\tp_{q,0}(A,\tup
a)$
the \emph{atomic~$q$-type} of~$\tup a$ in~$A$.  We denote the set of
all~$(q,{\ell})$-types of~$k$-tuples in $\sigma$-structures
by~$T(\sigma,k,q,{\ell})$.

The \emph{$(q,r)$-independence type} of a structure~$A$ is the set
$\itp_{q,r}(A)$ of all~$(q',r')$-independence sentences for~$q'\le q$
and~$r'\leq r$
that are satisfied by~$A$. The set of all~$(q,r)$-independence types
of~$\sigma$-structures is denoted by~$I(\sigma,q,r)$.

\begin{lemma}\label{lem:rplocality}
  Let~$q\in\N$ and~$r:=f_q(q)$. Let~$A,B$ be~$\sigma$-structures
  and~$\XXX$,~$\YYY$~$r$-neighbourhood covers of~$A,B$,
  respectively. Let~$a_0\in V(A), b_0\in V(B)$ such that
  \begin{align*}
    & \itp_{q+1,r}\big(A\star^q_{\XXX} q\big)
    =\itp_{q+1,r}\big(B\star^q_{\YYY}q\big)\\
    \text{and }& \atp_q(A\star^{q+1}_{\XXX} q , a_0)=\atp_q(B\star^{q+1}_{\YYY}q ,b_0).
  \end{align*}
  Then~$(A,a_0)\equiv^+_{q,q} (B,b_0)$.
\end{lemma}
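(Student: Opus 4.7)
By Theorem~\ref{thm:game}, proving $(A,a_0)\equiv^+_{q,q}(B,b_0)$ reduces to exhibiting a winning strategy for Duplicator in the $q$-round EF$_q^+$-game on $(A,a_0,B,b_0)$. I will establish this by induction on $\ell\in\{0,\ldots,q\}$, proving the following strengthened statement: whenever equal-length tuples $\tup a,\tup b$ extending $a_0,b_0$ satisfy
\emph{(i)} $\atp_q(A\star^{\ell+1}_\XXX q,\tup a)=\atp_q(B\star^{\ell+1}_\YYY q,\tup b)$,
\emph{(ii)} $\itp_{q+1,f_q(\ell)}(A\star^\ell_\XXX q)=\itp_{q+1,f_q(\ell)}(B\star^\ell_\YYY q)$, and
\emph{(iii)} $\tup a\mapsto\tup b$ is a partial $f_q(\ell)$-isomorphism,
then Duplicator wins the $\ell$-round game at $(A,\tup a,B,\tup b)$. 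The lemma is the special case $\ell=q$, for which (iii) is vacuous on the single element $a_0$. The base case $\ell=0$ is immediate, since (iii) is precisely Duplicator's winning condition for $0$ rounds.

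For the inductive step, say Spoiler plays $a\in V(A)$ (the symmetric case is analogous). I split on whether $a$ lies in some cluster $\XXX(a_i)$. In the \emph{local case} $a\in\XXX(a_i)$, the predicates carried by $a_i$ in the $\star^{\ell+1}$-expansion include $P_\psi$ for every $\psi\in\Phi^+(\sigma\star^\ell q,1,q,q)$, and in particular every formula of the form $\exists y\,\phi^{q,\ell-1}_{(\tup a,c)}(\tup x,y)$---the Hintikka-style formula of Lemma~\ref{lem:game} for $(\tup a,c)$ relativised to the cluster. Hypothesis (i) thus forces the existence of some $b\in\YYY(b_i)$ with $(A\star^\ell_\XXX q,\tup a a)\equiv^+_{q,\ell-1}(B\star^\ell_\YYY q,\tup b b)$; this gives (i) at level $\ell-1$, the cluster radius bound $2r=2f_q(q)$ together with the matched type yields (iii) at scale $f_q(\ell-1)$, and (ii) at the smaller parameters is inherited from (ii) at the larger ones.

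In the \emph{distant case} $a\notin\XXX(a_i)$ for every $i$, so $\dist(a,a_i)>r\ge f_q(\ell)$ for all $i$. Group $\tup a$ into clumps of mutual distance $\le f_q(\ell-1)$, pick one representative per clump, and observe that these representatives together with $a$ form a family scattered at scale $f_q(\ell)$. The joint $(q,\ell-1)$-type of $(\tup a,a)$ in $A\star^\ell_\XXX q$ can then be packaged, modulo the atomic cluster-type predicates of $\sigma\star^\ell q$, as a $(q+1,f_q(\ell))$-independence sentence true in $A\star^\ell_\XXX q$. By hypothesis~(ii) the same sentence holds in $B\star^\ell_\YYY q$, furnishing a vertex $b\in V(B)$ at distance $>f_q(\ell-1)$ from every $b_i$ with the correct local behaviour; conditions (i)--(iii) then carry over to $(\tup a a,\tup b b)$ at parameter $\ell-1$.

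The delicate point, and the main obstacle, is calibrating the three parameters so that each round of play strips off exactly one $\star$-layer while the distance scale drops from $f_q(\ell)$ to $f_q(\ell-1)$, and so that the sentence assembled in the distant case has the right shape to fit (ii). This calibration is exactly what the discounted $q$-rank of \eqref{eq:f} is designed for: quantifying once over a formula of $q$-rank $\ell-1$, whose distance atoms reach $f_q(\ell-1)$, yields a formula of $q$-rank $\ell$ whose outermost distance atoms reach $f_q(\ell)$, matching both the radius supplied by (ii) and the scattering threshold demanded by the independence sentence.
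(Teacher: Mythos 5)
Your overall plan---reduce to the EF$_q^+$-game via Theorem~\ref{thm:game}, induct on the number of remaining rounds, and split Spoiler's move into a ``local'' and a ``distant'' case---matches the paper's strategy, but both cases have genuine gaps.

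In the local case, your argument rests on the claim that the predicates carried by $a_i$ in the $\star^{\ell+1}$-expansion include $P_\psi$ for formulas of the form $\exists y\,\phi^{q,\ell-1}_{(\tup a,c)}(\tup x,y)$. This does not typecheck: the $\star$-operation only introduces unary predicates $P_\phi$ for $\phi(x)\in\Phi^+(\sigma,1,q,q)$, i.e.\ formulas with a \emph{single} free variable, evaluated at a single vertex inside its own cluster. The Hintikka formula of a tuple $(\tup a,c)$ has $|\tup a|+1$ free variables and cannot be stored as a unary mark on $a_i$. Consequently your invariant (i)---matching atomic types of the whole tuple in the iterated expansion---only records, for each coordinate separately, its iterated cluster type plus the atomic relations and distances among coordinates; it does not record the \emph{joint} local type of several coordinates whose neighbourhoods overlap, which is exactly what Duplicator needs when Spoiler plays near two previously chosen elements that are near each other. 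The paper repairs this by maintaining a much stronger invariant: it tracks the connected components $J$ of a ``closeness graph'' $H_p$ on the indices and requires a full equivalence $(A_J,\tup a_J)\equiv^+_{q,q-k}(B_J,\tup b_J)$ on substructures $A_J,B_J$ covering the $r_k$-neighbourhoods of all elements in the component; the new element is then answered by playing inside the game on $(A_I,\tup a_I,B_I,\tup b_I)$.

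In the distant case, the step from ``the same $(q{+}1,\cdot)$-independence sentences hold in $B$'' to ``there is a $b$ of the right atomic type at distance $>f_q(\ell-1)$ from every $b_i$'' is precisely the hard part of the lemma, and you do not supply it. An independence sentence only asserts the existence of \emph{some} pairwise-scattered family of elements satisfying a single quantifier-free formula; it says nothing about avoiding the neighbourhoods of the specific, already-placed $b_0,\ldots,b_k$. The paper's Case~2 handles this with a non-obvious stabilisation argument: assuming no suitable $b_{k+1}$ exists, it builds an increasing sequence of distance thresholds $d_0=2r_{k+1},\ d_{h+1}=d_h+4r_{k+1}$ and the corresponding maximal sizes $\ell_h$ of scattered sets of the given atomic type, shows $\ell_0\le k$ and that the sequence stabilises within $k$ steps while staying below $r_k$, and then transfers scattered sets back and forth between $A$ and $B$ to manufacture an $(\ell+2)$-element $d$-scattered set in $A$, contradicting maximality. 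Without this (or an equivalent) argument, your distant case is an assertion, not a proof; and the calibration $d_{h+1}=(4(h+1)-2)r_{k+1}\le r_k$ is the actual reason the base $4q$ appears in $f_q$, beyond the one-layer-per-round accounting you describe.
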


\begin{proof}
  We start by fixing some notation.  For~$0\le k\le q$, we
  let~$\sigma_k:=\sigma\star^{q-k} q$ and~$A_k:=A\star^{q-k}_{\XXX} q$
  and~$B_k:=B\star^{q-k}_{\YYY}q$ and~$r_k:=f_q(q-k)$.  Throughout the
  proof,~$\tup x$ always denotes a tuple~$(x_0,\ldots,x_k)$ (for
  varying~$k$), and similarly~$\tup a,\tup b$ denote
  tuples~$(a_0,\ldots,a_k)$ and~$(b_0,\ldots,b_k)$.  We
  write~$J\sqsubseteq H$ to denote that~$J$ is a connected component
  of a graph~$H$. Furthermore, if~$V(H)=\{0,\ldots,k\}$
  and~$J\sqsubseteq H$, then~$\tup x_J$ denotes the sub-tuple of~$\tup
  x$ with entries~$x_j$ for~$j\in V(J)$, and~$\tup a_J,\tup b_J$
  denote the corresponding sub-tuples of~$\tup a,\tup b$.

  We shall prove that Duplicator has a winning strategy for
  the~$q$-round EF$_q^+$ game on~$(A,a_0,B,b_0)$.  We describe a
  winning strategy for Duplicator satisfying the following conditions
  for every position~$p=(A,\tup a,B,\tup b)$, where~$\tup a=(a_0,
  a_1,\ldots,a_k)$ and~$\tup b=(b_0, b_1,\ldots,b_k)$, of the game
  that can be reached if Duplicator plays according to this
  strategy. Let~$H_p$ be the graph with vertex set~$V(H_p)=\{0,\ldots,
  k\}$ and edge set
  \[
  E(H_p):=\big\{ij\st \dist(a_i,a_j)\le r_k\text{ or
  }\dist(b_i,b_j)\le r_k\big\}.
  \]
  Then for every component~$J\sqsubseteq H_p$ there are induced
  substructures~$A_J\subseteq A_k$,~$B_J\subseteq B_k$ such that the
  following conditions are satisfied:
  \begin{enumerate}
  \item [(i)]~$N_{r_k}(a_j)\subseteq V(A_J)$ and~$N_{r_k}(b_j)\subseteq
    V(B_J)$ for all~$j\in V(J)$;
  \item[(ii)]~$\big(A_J,\tup a_J\big)\equiv_{q,q-k}^+
    \big(B_J,\tup b_J\big)$.
  \end{enumerate}
  Note that this implies that~$\tup{a}\mapsto \tup{b}$ is
  a partial~$f_q(q-k)$-isomorphism.

  The proof is by induction on~$k$. For the base step~$k=0$, note that
  the graph~$H:=H_{p}$ is the one-vertex graph, which is connected. We
  let~$A_H:=A_0\big[\XXX(a_0)\big]$ and~$B_H:=B_0\big[\YYY(b_0)\big]$.
  Then (i) holds, because~$\XXX,\YYY$ are~$r$-neighbourhood covers
  and~$r=r_0$. By the assumption of the lemma, we
  have~$\atp_q(A_0\star_\XXX^{q+1} q,a_0)=\atp_q(B_0\star_\YYY^{q+1}
  q,b_0)$. In particular, for every
  formula~$\phi(x)\in\Phi^+(\sigma_0,1,q,q)$ we
  have~$A_0\star_\XXX^{q+1} q\models P_\phi(a_0)\iff
  B_0\star_\YYY^{q+1} q\models P_\phi(b_0)$, which
  implies~$A_H\models\phi(a_0)\iff B_H\models\phi(b_0)$ by the
  definition of the~$\star$-operator. As
  every~$\FO^+[\sigma_0]$-formula~$\phi(x_0)$ of~$q$-rank at most~$q$
  is equivalent to a formula in~$\Phi^+(\sigma_0,1,q,q)$, this
  implies~$(A_H,a_0)\equiv^+_{q,q}(B_H,b_0)$, that is, assertion (ii).

  For the inductive step, suppose that we are in a position~$p=(A,\tup
  a,B,\tup b)$, where~$\tup a=(a_0,a_1,\ldots,a_k)$ and~$\tup
  b=(b_0,b_1,\ldots,b_k)$ for some~$k<q$. Again, let~$H :=
  H_p$. Suppose that in the~$(k+1)$st round of the game, Spoiler
  picks~$a_{k+1}\in V(A)$.
  \begin{cs}
    \case1~$\dist(a_{k+1},a_i)\le r_{k}$ for some~$i\in \{0,\ldots,k\}$.\\
    Let~$I\sqsubseteq H$ be the connected component of~$i$, and
    let~$A_I\subseteq A_k$,~$B_I\subseteq B_k$ be substructures
    satisfying (i) and (ii).  By (i),~$a_{k+1}\in V(A_I)$. By (ii),~$(A_I,
    \tup{a}_I)\equiv_{q,q-k}^+(B_I, \tup{b}_I)$, and thus Duplicator
    has a winning strategy for
    the~$q-k$-round EF$_q^+$-game on~$(A_I, \tup{a}_I,
    B_I,\tup{b}_I)$.  Let~$b_{k+1}$ be Duplicator's answer if Spoiler
    picks~$a_{k+1}$ in this game. Then
    \begin{equation}
      \label{eq:a}
      (A_I,\bar a_Ia_{k+1})\equiv^+_{q,q-k-1} (B_I,\bar b_Ib_{k+1}).
    \end{equation}
    This implies~$\atp_q(A_k,a_{k+1})=\atp_q(B_k,b_{k+1})$ and thus
    \begin{equation}
      \label{eq:b}
      (A_{k+1}[\XXX(a_{k+1})],a_{k+1})\equiv^+_{q,q-k-1}(B_{k+1}[\YYY(b_{k+1})],b_{k+1}).
    \end{equation}
    We choose~$b_{k+1}$ as Duplicator's answer in the game
    on~$A,B$. Thus the new position is
    \[
    p':=(A,\tup aa_{k+1},B,\tup
    bb_{k+1}).
    \]
    Let~$H':=H_{p'}$. 
    \begin{cs}
      \case{1a}~$\dist(a_{k+1},a_i)\le r_{k+1}$ for some~$i\in
      \{0,\ldots,k\}$.

      Then 
      \begin{equation}
        N_{r_{k+1}}(a_{k+1})\subseteq N_{r_k}(\tup a)\subseteq V(A_I),\label{eq:c}
      \end{equation}
      because~$r_k\ge 2r_{k+1}$, and
      \begin{equation}
        \label{eq:d}
        N_{r_{k+1}}(b_{k+1})\subseteq
        N_{r_k}(\tup b)\subseteq V(B_I),
      \end{equation}
      because~$(q,q-k-1)^+$-equivalence preserves distances up
      to~$r_{k+1}$.

      Let~$J'\sqsubseteq H'$. Then there is a~$J\sqsubseteq H$ such
      that~$V(J')\cap\{0,\ldots,k\}\subseteq V(J)$. To see this, just
      note that if~$j(k+1)\in E(H')$ and~$(k+1)j'\in E(H')$
      then~$jj'\in E(H)$, because~$2r_{k+1}\le r_k$. Thus, whenever
      there is a path between two vertices~$j,j'\in\{0,\ldots,k\}$
      in~$H'$ there also is a path in~$H$. We let~$A_{J'}\subseteq
      A_{k+1}$ be the restriction of~$A_J\subseteq A_k$
      to~$\sigma_{k+1}$ and~$B_{J'}\subseteq B_{k+1}$ the restriction
      of~$B_J\subseteq A_k$ to~$\sigma_{k+1}$. Then if~$J=I$ and
      hence~$k+1\in V(J')$, (i) for~$p'$ and~$A_{J'},B_{J'}$ follows
      from \eqref{eq:c} and \eqref{eq:d}, and (ii) follows from
      \eqref{eq:a}. If~$J\neq I$, then (i) and (ii) for~$p'$
      and~$A_{J'},B_{J'}$ are inherited from (i) and (ii) for~$p$
      and~$A_J,B_J$.

      \case{1b}~$\dist(a_{k+1},a_i)> r_{k+1}$ for all~$i\in \{0,\ldots, k\}$.\\
      Let~$J'\sqsubseteq H'$. Then either~$V(J')=\{k+1\}$, or there is
      a~$J\sqsubseteq H$ such that~$V(J')\subseteq
      V(J)$. If~$V(J')=\{k+1\}$, we
      let~$A_{J'}:=A_{k+1}\big[\XXX(a_{k+1})\big]$
      and~$B_{J'}:=B_{k+1}\big[\YYY(b_{k+1})\big]$. Then (i) holds
      because~$\XXX$ and~$\YYY$ are~$r$-neighbourhood covers, and (ii)
      follows from \eqref{eq:b}. If there is a connected component~$J$
      of~$H$ such that~$V(J')\subseteq V(J)$, we let~$A_{J'}\subseteq
      A_{k+1}$ be the restriction of~$A_J\subseteq A_k$
      to~$\sigma_{k+1}$ and~$B_{J'}\subseteq B_{k+1}$ the restriction
      of~$B_J\subseteq B_k$ to~$\sigma_{k+1}$. Then (i) and (ii)
      for~$p'$ and~$A_{J'},B_{J'}$ are inherited from (i) and (ii)
      for~$p$ and~$A_J,B_J$.
    \end{cs}

    \case2~$\dist(a_{k+1},a_i)>r_k$ for all~$i\in \{0,\ldots, k\}$.\\
    Let~$t:=\atp_q(A_k,a_{k+1})$. We will prove the existence of
    a~$b_{k+1}\in V(B)$ with~$\atp_q(B_k,b_{k+1})=t$
    and~$\dist(b_{k+1},b_i)> r_{k+1}$ for all~$i\in \{0,\ldots,
    k\}$. We can then argue as in Case~1b. Assume towards a
    contradiction that
    \begin{enumerate}
    \item[(A)] there is no~$b\in V(B)$ with~$\atp_q(B_k,b)=t$ and
      ~$\dist(b,b_i)> r_{k+1}$ for all~$i\in \{0,\ldots, k\}$.
    \end{enumerate}

    The first step is to construct~$d,D,\ell$ such that~$2r_{k+1}\le
    d\le D-4r_{k+1}$ and~$D\le r_k$ and~$\ell\le k$ and there are
    elements~$a^0,\ldots,a^\ell\in V(A)$ with~$\atp_q(A_k,a^i)=t$
    and~$\dist(a^i,a^j)>D$ for~$i\neq j\in\{0,\ldots,\ell\}$, but no
    elements~$a_*^0,\ldots,a_*^{\ell+1}\in V(A)$
    with~$\atp_q(A_k,a_*^i)=t$ and~$\dist(a_*^i,a_*^j)>d$ for~$i\neq
    j\in\{0,\ldots, \ell+1\}$.

    We let~$d_0:=2r_{k+1}$, and we let~$\ell_0$ be maximal such that
    there are~$a^0_0,\ldots,a^{\ell_0}_0$ with~$\atp_q(A_k,a_0^i)=t$
    for all~$i\in\{0,\ldots, \ell_0\}$ and~$\dist(a_0^i,a_0^j)>d_0$
    for all~$i\neq j\in\{0,\ldots, \ell_0\}$. Suppose first
    that~$\ell_0>k$. As~$A$ and~$B$ satisfy the
    same~$(k+1,d_0/2)$-independence sentences (note that~$d_0$ is
    even), there are elements~$b_0^0,\ldots,b_0^{k+1}\in V(B)$
    with~$\atp_q(B_k,b_0^i)=t$ for all~$i\in\{0,\ldots, k+1\}$
    and~$\dist(b_0^i,b_0^j)>d_0$. By (A), for every~$i\in\{0,\ldots,
    k+1\}$ there is a~$j(i)\in\{0,\ldots, k\}$ such
    that~$\dist(b^i_0,b_{j(i)})\le
    r_{k+1}=d_0/2$. As~$\dist(b_0^i,b_0^j)>d_0$, we have~$j(i)\neq
    j(i')$ for~$i\neq i'\in\{0,\ldots, k+1\}$. This is a
    contradiction, which proves that~$\ell_0\le k$.

    Now suppose that~$d_h,\ell_h$ are defined for some~$h\ge
    0$. Let~$d_{h+1}:=d_h+4r_{k+1}$, and let~$\ell_{h+1}$ be maximal
    such that there are~$a^0_{h+1},\ldots,a^{\ell_{h+1}}_{h+1}$
    with~$\atp_q(A_k,a_{h+1}^i)=t$ for all~$i\in\{0,\ldots,
    \ell_{h+1}\}$ and~$\dist(a_{h+1}^i,a_{h+1}^j)>d_{h+1}$ for
    all~$i\neq j\in\{0,\ldots,
    \ell_{h+1}\}$. Then~$\ell_{h+1}\le\ell_h$. If~$\ell_{h+1}=\ell_h$
    for the first time, we stop the construction. Then~$h\le k$ and
    thus~$d_{h+1}=(4(h+1)-2)r_{k+1}\le r_k$.  We let~$d:=d_h$
    and~$D:=d_{h+1}$ and~$\ell:=\ell_h=\ell_{h+1}$.
    
    As~$A$ and~$B$ satisfy the same~$(k+1,D/2)$-independence
    sentences, there are elements~$b^0,\ldots,b^\ell\in V(B)$
    with~$\atp_q(B_k,b^i)=t$ and~$\dist(b^i,b^j)>D$.  Then for
    every~$i\in\{0,\ldots, \ell\}$ there is a~$j(i)\in\{0,\ldots,k\}$
    such that~$\dist(b^i,b_{j(i)})\le r_{k+1}$. The~$j(i)$ are
    mutually distinct, because~$\dist(b^i,b^j)>2r_{k+1}$ for~$i\neq
    j$.  To simplify the notation, let us assume that~$j(i)=i$ for
    all~$i\in\{0,\ldots, \ell\}$. As~$\dist(b^i,b^j)>D$, we
    have~$\dist(b_i,b_j)>D-2r_{k+1}$. Then it follows from (ii)
    that~$\dist(a_i,a_j)>D-2r_{k+1}$, because~$D-2r_{k+1}\le r_k$. It
    also follows from (ii) that for all~$i\in\{0,\ldots, \ell\}$ there
    is an~$a^i_*$ such that~$\dist(a^i_*,a_i)\le r_{k+1}$
    and~$\atp_q(A_k,a^i_*)=t$. Then for~$i\neq j$ we
    have~$\dist(a^i_*,a^j_*)>D-4r_{k+1}\ge d$. Furthermore, we
    have~$\dist(a_{k+1},a^i_*)>r_k-r_{k+1}\ge
    d$. Letting~$a_*^{\ell+1}:=a_{k+1}$, we have
    found~$a_*^1,\ldots,a_*^{\ell+1}\in V(A)$
    with~$\atp_q(A_k,a_*^i)=t$ and~$\dist(a_*^i,a_*^j)>d$. This is a
    contradiction.
\end{cs}
\end{proof}

We will show next how the Rank Preserving Locality Theorem follows from this lemma by
standard techniques from logic.
\medskip

\begin{proof}[Proof of the Rank Preserving Locality Theorem]
  Let~$\phi(x)\in\FO[\sigma]$ be a first-order formula of quantifier
  rank~$q$. Let~$r:=f_q(q)$ and~$\sigma_I:=\sigma\star^qq$
  and~$\sigma_T:=\sigma\star^{q+1} q$. Furthermore,
  let~$I:=I(\sigma_I,q+1,r)$ and~$T:=T(\sigma_T,1,q,0)$. A
  pair~$(\eta,\theta)\in I\times T$ is \emph{satisfiable} if there are
  a~$\sigma$-structure~$A$ and an~$r$-neighbourhood cover~$\XXX$
  of~$A$ and an~$a\in V(A)$ such
  that~$\itp_{q+1,r}(A\star^q_{\XXX}q)=\eta$ and~$\atp_q
  (A\star^{q+1}_{\XXX}q,a)=\theta$.

  It follows from Lemma~\ref{lem:rplocality} that for all satisfiable
  pairs~$(\eta,\theta)\in I\times T$ the following two statements are
  equivalent.
  \begin{enumerate}
  \item[(A)] There are a~$\sigma$-structure~$A$ and
    an~$r$-neighbourhood cover~$\XXX$ of~$A$ and an~$a\in V(A)$ such
    that~$\itp_{q+1,r}(A\star^q_{\XXX}q)=\eta$ and~$\atp_q
    (A\star^{q+1}_{\XXX}q,a)=\theta$ and ~$A\models\phi(a)$.
  \item[(B)] For all~$\sigma$-structures~$A$ and~$r$-neighbourhood
    covers~$\XXX$ of~$A$ and~$a\in V(A)$,
    if~$\itp_{q+1,r}$ $(A\star^q_{\XXX}q)=\eta$ and~$\atp_q
    (A\star^{q+1}_{\XXX}q,a)=\theta$, then~$A\models\phi(a)$.
  \end{enumerate}
  Thus there is a subset~$S_\phi\subseteq I\times T$ such that for
  all~$\sigma$-structures~$A$, all~$r$-neighbourhood covers~$\XXX$
  of~$A$, and all~$a\in V(A)$,
  \begin{equation}\label{eq:rpl1}
  A\models\phi(a)\iff\exists (\eta,\theta)\in S_\phi:\;\itp_{q+1,r}(A\star^q_{\XXX}q)=\eta\text{ and }\atp_q (A\star^q_{\XXX}q,a)=\theta.
  \end{equation}

  Recall that every~$(q+1,r)$-independence type~$\eta\in
  I$ is a subset of the finite set $\Psi(\sigma_I,q+1,r)$,
  and for every~$\sigma_I$-structure~$A$ we have
  \[
  \itp_{q+1,r}(A)=\eta\iff A\models\bigwedge_{\psi\in\eta}\psi\wedge\bigwedge_{\psi\in
    \Psi(\sigma_I,q+1,r)\setminus\eta}\neg\psi.
  \]
  We denote the
  sentence~$\bigwedge_{\psi\in\eta}\psi\wedge\bigwedge_{\psi\in
    \Psi(\sigma_I,q+1,r)\setminus\eta}\neg\psi$ by~$\widetilde\eta$
  and say that it \emph{defines} the type~$\eta$. But we can actually
  define~$\widetilde\eta$ for every subset~$\eta\subseteq
  \Psi(\sigma_I,q+1,r)$. Then either~$\widetilde\eta$ is unsatisfiable
  or there is some~$\sigma_I$-structure~$A$ such
  that~$\itp_{q+1,r}(A)=\eta$.

  Similarly, every atomic type~$\theta\in T(\sigma_T,1,q,0)$ is a
  subset of the finite set~$\Phi^+(\sigma_T,1,q,0)$, and for
  every~$\sigma_T$-structure~$A$ and every~$a\in V(A)$ we have
  \[
  \atp_q(A,a)=\theta\iff A\models \bigwedge_{\zeta(x)\in\theta}\zeta(a)\wedge\bigwedge_{\zeta(x)\in
    \Phi(\sigma_T,1,q,0)\setminus\theta}\neg\zeta(a).
  \]
  We denote the
  formula~$\bigwedge_{\zeta(x)\in\theta}\zeta(x)\wedge\bigwedge_{\zeta(x)\in
    \Phi(\sigma_T,1,q,0)\setminus\theta}\neg\zeta(x)$
  by~$\widetilde\theta(x)$. Again, we can define~$\widetilde\theta(x)$
  for every subset~$\theta\subseteq\Phi^+(\sigma_T,1,q,0)$. Then
  either~$\widetilde\theta(x)$ is unsatisfiable, or there is
  some~$\sigma_T$-structure~$A$ and~$a\in V(A)$ such
  that~$\atp_{q}(A,a)=\theta$.

  It follows from \eqref{eq:rpl1} that for
  all~$\sigma$-structures~$A$, all~$r$-neighbourhood covers~$\XXX$
  of~$A$, and all~$a\in V(A)$,
    \begin{equation}\label{eq:rpl2}
  A\models\phi(a)\iff A\star^{q+1}_{\XXX}q\models\bigvee_{(\eta,\theta)\in S_\phi}
  \big(\widetilde\eta\wedge\widetilde\theta(a)\big).
  \end{equation}
  Here we use that the~$\sigma_T$-structure~$A\star^{q+1}_{\XXX}q$ is
  an expansion of the~$\sigma_I$-structure~$A\star^{q}_{\XXX}q$.

  We could let~$\hphi(x)=\bigvee_{(\eta,\theta)\in S_\phi}
  \big(\widetilde\eta\wedge\widetilde\theta(x)\big)$. Clearly, this
  formula has the desired syntactic form, and by \eqref{eq:rpl2}
  satisfies the assertion of the theorem. However, we want~$\hphi(x)$
  to be computable from~$\phi(x)$, and with this definition, it is
  not, because the choice of~$S_\phi$ is not unique and, so far,
  arbitrary. However, we will prove that we can compute some
  set~$S_\phi$ satisfying \eqref{eq:rpl2}.

  \medskip We need to incorporate the~$r$-neighbourhood covers into
  the logical framework. Let~$R$ be a fresh binary relation symbol
  and~$\sigma_R:=\sigma\cup\{R\}$. For every~$\sigma$-structure~$A$
  and every mapping~$\XXX:V(A)\to 2^{V(A)}$, we let~$A^{\XXX}$ be
  the~$\sigma\cup\{R\}$-expansion of~$A$ with
  \[
  R(A^{\XXX})=\{ab\mid b\in\XXX(a)\}.
  \]
  Recall that we view~$r$-neighbourhood covers of~$A$ as
  mappings~$\XXX:V(A)\to 2^{V(A)}$ where~$N_r(a)\subseteq\XXX(a)$ for
  each~$a\in V(A)$. We let~$\gamma:=\forall x\forall y(\dist(x,y)\le
  r\longrightarrow R(x,y))$. Then~$\XXX$ is an~$r$-neighbourhood cover
  of~$A$ if, and only if,~$A^\XXX\models\gamma$. It is not hard to see
  that the structure~$A\star_{\XXX}q$ is definable within~$A^{\XXX}$,
  which means that for every (unary) relation symbol~$P\in(\sigma\star
  q)\setminus\sigma$ there is a~$\sigma\cup\{R\}$-formula~$\chi_P(x)$
  such that~$P(A\star_{\XXX} q)=\{a\in V(A)\mid
  A^{\XXX}\models\chi_P(a)\}$. By the so-called Lemma on Syntactical
  Interpretations (see~\cite{EbbinghausFluTho94}), this implies that
  for every~$\sigma\star q$-formula~$\psi(x)$ there is
  a~$\sigma\cup\{R\}$-formula~$\psi_R(x)$ such that~$A\star_{\XXX}
  q\models\psi(a)\iff A^{\XXX}\models\psi_R(a)$. Using this, we can
  inductively prove that~$A\star^\ell_{\XXX}q$ is definable
  within~$A^{\XXX}$ and that for every~$\sigma\star^\ell
  q$-formula~$\psi(x)$ there is
  a~$\sigma\cup\{R\}$-formula~$\psi_R(x)$ such
  that~$A\star^\ell_{\XXX} q\models\psi(a)\iff
  A^{\XXX}\models\psi_R(a)$. In particular, for
  every~$\eta\subseteq\Psi(\sigma_I,q+1,r)$ there is
  a~$\sigma_R$-sentence~$\widetilde\eta_R$ such
  that~$A\star_{\XXX}^{q+1}q\models\widetilde\eta\iff
  A^{\XXX}\models\widetilde\eta_R$ and for
  every~$\theta(x)\subseteq\Phi(\sigma_T,1,q,0)$ there is
  a~$\sigma_R$-sentence~$\widetilde\theta_R(x)$ such
  that~$A\star_{\XXX}^{q+1}q\models\widetilde\theta(a)\iff
  A^{\XXX}\models\widetilde\theta_R(a)$.

  It follows from \eqref{eq:rpl2} that for
  all~$\sigma$-structures~$A$, all~$r$-neighbourhood covers~$\XXX$
  of~$A$, and all~$a\in V(A)$,
    \begin{equation}\label{eq:rpl3}
  A\models\phi(a)\iff A^{\XXX}\models\bigvee_{(\eta,\theta)\in S_\phi}
  \big(\widetilde\eta_R\wedge\widetilde\theta_R(a)\big).
  \end{equation}
  As~$A^{\XXX}$ is an expansion of~$A$, on the left-hand side of
  \eqref{eq:rpl3} we can replace~$A$ by~$A^{\XXX}$ and thus
  rewrite \eqref{eq:rpl3} as
    \begin{equation}\label{eq:rpl4}
  A^{\XXX}\models\phi(a)\longleftrightarrow\bigvee_{(\eta,\theta)\in S_\phi}
  \big(\widetilde\eta_R\wedge\widetilde\theta_R(a)\big).
  \end{equation}
  Recalling that a~$\sigma_R$-structure~$A_R$ equals~$A^{\XXX}$ for
  some~$r$-neighbourhood cover~$\XXX$ of a $\sigma$-structure~$A$ if
  any only if~$A_R\models\gamma$, for all~$\sigma_R$-structures~$A_R$
  and all~$a\in V(A_R)$ we thus have
  \begin{equation}\label{eq:rpl5}
  A_R\models\gamma\longrightarrow\Big( \phi(a)\longleftrightarrow\bigvee_{(\eta,\theta)\in S_\phi}
  \big(\widetilde\eta_R\wedge\widetilde\theta_R(a)\big)\Big).
  \end{equation}
  For every subset~$S\subseteq I\times T$, let 
  \[
  \alpha_S(x)=\gamma\longrightarrow\Big( \phi(x)\longleftrightarrow\bigvee_{(\eta,\theta)\in S}
  \big(\widetilde\eta_R\wedge\widetilde\theta_R(x)\big)\Big).
  \]
  By \eqref{eq:rpl5}, the formula~$\alpha_{S_{\phi}}(x)$ is
  valid. Note that so far we thought of~$\alpha_S(x)$ as
  an~$\FO^+$-formula, but we can directly translate
  every~$\FO^+$-formula into an equivalent~$\FO$-formula by
  substituting appropriate distance formulas for the distance
  atoms. This changes the rank, but at this point we no longer care
  about the rank. Thus we view~$\alpha_S(x)$ as
  an~$\FO[\sigma_R]$-formula.

  The set of all valid~$\FO[\sigma_R]$-formulas is recursively
  enumerable. We start an enumeration algorithm and wait for the first
  formula~$\alpha_S(x)$ it produces. This will happen eventually,
  because we know that~$\alpha_{S_\phi}(x)$ is valid. The
  set~$S\subseteq I\times T$ of the first formula~$\alpha_S(x)$
  returned by enumeration algorithm is not necessarily the same as the
  set~$S_\phi$ we started with. However, by retracing our construction
  backwards, it is easy to see that~$S$ satisfies \eqref{eq:rpl2},
  that is, for all~$\sigma$-structures~$A$, all~$r$-neighbourhood
  covers~$\XXX$ of~$A$, and all~$a\in V(A)$,
    \begin{equation*}
  A\models\phi(a)\iff A\star^{q+1}_{\XXX}q\models\bigvee_{(\eta,\theta)\in S}
  \big(\widetilde\eta\wedge\widetilde\theta(a)\big).
  \end{equation*}
  We  define~$\hphi(x) :=\bigvee_{(\eta,\theta)\in S}
  \big(\widetilde\eta\wedge\widetilde\theta(x)\big)$. As argued above, this
  formula satisfies the conditions of the theorem, and by construction
  it is computable from~$\phi(x)$.
  
  Note that if, given a formula~$\phi$, we first compute an equivalent
  normalised formula~$\phi'$ and then apply the procedure above
  to~$\phi'$, then we can compute an upper bound for the running time.
\end{proof}

\section{The Main Algorithm}
\label{sec:mc}

We are now ready to prove our main result,
Theorem~\ref{theo:main}. We actually prove a slightly more general
theorem. A
\emph{coloured-graph vocabulary} consists of the binary relation
symbol~$E$ and possibly finitely many unary relation symbols. In
particular, if~$\sigma$ is a coloured-graph vocabulary then
$\sigma\star q$ (as defined in Section~\ref{sec:rpl}) is a coloured
graph vocabulary. A \emph{$\sigma$-coloured graph} is a
$\sigma$-structure whose~$\{E\}$-restriction is a simple undirected
graph.\footnote{To see that this is consistent with the definition of
  coloured graphs in Section~\ref{sec:scattered-sets}, we may define
  the \emph{colour} of a vertex $v$ in a $\sigma$-coloured graph $G$
  to be the set of all unary relation symbols $P\in\sigma$ such that
  $v\in P(G)$.}
We call the~$\{E\}$-restriction of a~$\sigma$-colored graph the
\emph{underlying graph} of~$G$.

\begin{theorem}\label{theo:main+u}
  For every nowhere dense class~$\mathcal C$, every~$\epsilon>0$,
  every coloured graph vocabulary~$\sigma$, and every first-order
  formula~$\phi(x)\in\FO[\sigma]$, there is an algorithm that, given
  a~$\sigma$-coloured graph~$G$ whose underlying graph is in~$\CCC$,
  computes the set of all~$v\in V(G)$ such that~$G\models\phi(v)$ in
  time~$\Oof(n^{1+\epsilon})$.

  Furthermore, if~$\CCC$ is effectively nowhere dense, then there is a
  computable function~$f$ and an algorithm that, given~$\epsilon>0$, a
  formula~$\phi(x)\in\FO[\sigma]$ for some coloured-graph
  vocabulary~$\sigma$, and a~$\sigma$-coloured graph~$G$, computes the
  set of all~$v\in V(G)$ such that~$G\models\phi(v)$ in
  time~$f(|\phi|, \epsilon)\cdot n^{1+\epsilon}$.
\end{theorem}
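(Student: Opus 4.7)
The plan is to combine the four main ingredients developed in the paper: the Rank-Preserving Locality Theorem (Theorem~\ref{thm:rplocality}) to rewrite~$\phi$, sparse neighbourhood covers (Theorem~\ref{thm:alg-covers}) to realise the label predicates, the distance independent set algorithm (Theorem~\ref{thm:scattered-set}) to evaluate the resulting independence sentences, and the splitter game (Theorem~\ref{thm:splittergame}) to bound the depth of a recursion. Concretely, after normalising~$\phi$ and setting $r := f_q(q)$, I would invoke Theorem~\ref{thm:rplocality} to obtain an equivalent formula $\hat\phi(x)$ over $\sigma\star^{q+1}q$ that is a Boolean combination of $(q{+}1,r)$-independence sentences and atomic formulas. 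The independence sentences are sentence-valued and can be evaluated once via Theorem~\ref{thm:scattered-set}, while the atomic predicates $P_\chi(x)$ depend only on look-ups of the label sets, so once every label set is known the desired output $\{v : G\models\phi(v)\}$ can be read off in time $\Oof(n)$. The real work is therefore in computing, for every formula $\chi$ occurring in the $q{+}1$ iterated expansions, the set $P_\chi = \{v : G[\XXX(v)]\models\chi(v)\}$.

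To compute these label sets I plan to use a recursive subroutine $\textsc{Eval}(H,\chi,\bar u)$ that, given an induced subgraph~$H$ of the original graph (representing the current splitter-game position), a formula~$\chi$ of quantifier rank at most~$q$, and a tuple~$\bar u$ of parameter vertices, returns $\{v\in V(H) : H\models\chi(\bar u,v)\}$. Inside $\textsc{Eval}$ I would first compute an $r$-neighbourhood cover $\XXX_H$ of~$H$ of radius at most~$2r$ and degree at most~$n^\delta$ (for~$\delta$ chosen small in terms of~$\epsilon$), apply the Rank-Preserving Locality Theorem to~$\chi$, and then reduce to (a) evaluating each new label $P_{\chi'}$ by a recursive call on the clusters $H[\XXX_H(v)]$, and (b) evaluating each new independence sentence by Theorem~\ref{thm:scattered-set}. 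To ensure termination of the recursion in (a), for each cluster I would pick its centre vertex as Connector's next move in the $(\ell,m,2r)$-splitter game on~$G$ (computed on the fly as in Remark~\ref{rem:splitterrunningtime}), let~$W$ be Splitter's answer, and recurse on $H[\XXX_H(v)]\setminus W$ rather than on the cluster itself. By Theorem~\ref{thm:splittergame}, after at most~$\ell$ such nested calls the graph becomes empty and the recursion bottoms out in the trivial case.

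The main technical obstacle is that the vertices of~$W$ may serve as witnesses to the quantifiers of~$\chi$ and hence cannot simply be discarded. I plan to handle this exactly as in the proof of Theorem~\ref{thm:scattered-set}: before each recursive call, refine the colouring so that each original colour class is split according to the distance vector $\bar d\in\{0,\dots,r,\infty\}^{|W|}$ of its members to the vertices of~$W$, iterate over all choices of which subset of~$W$ to use as witnesses and over all valid sub-colourings, and combine the recursive answers appropriately; the parameter tuple~$\bar u$ is folded in by the same distance-colouring device. Since $|W|\le m$ and $(r{+}1)^m$ is constant in~$n$, the branching factor per recursive call depends only on $q,r,m$ and the original vocabulary. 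For the running time, each recursive call costs $\Oof(n^{1+\delta})$ outside the recursion (one cover computation plus one invocation of the DIS algorithm), the branching factor is constant, and the recursion depth is at most~$\ell$, so a geometric-series analysis yields total time $n^{1+O(\delta)}$; choosing $\delta := \epsilon/c$ for a suitable constant $c=c(|\phi|)$ gives the claimed bound $\Oof(n^{1+\epsilon})$. Effectiveness in the effectively nowhere dense case follows since the splitter-game bounds $\ell,m$ and the cover parameters are all computable there, so~$c$ becomes a computable function of $|\phi|$ and~$\epsilon$.
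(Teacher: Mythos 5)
Your overall architecture is the same as the paper's: apply the Rank-Preserving Locality Theorem, compute a sparse $r$-neighbourhood cover, evaluate the independence sentences with the distance independent set algorithm, and recurse into clusters with Splitter's response removed so that the splitter game bounds the recursion depth. However, there are two concrete gaps. First, your mechanism for eliminating Splitter's set $W$ does not work for general first-order formulas. The device from the proof of Theorem~\ref{thm:scattered-set} --- enumerating which subset of $W$ serves as ``witnesses'' and branching over valid sub-colourings --- is tailored to a purely existential, symmetric problem; it has no meaning for a formula with universal quantifiers or alternation, where one does not ``choose witnesses'' from $W$. What is actually needed is a general reduction lemma (the paper's Lemma~\ref{lem:redlem}): expand the vocabulary by unary predicates $Q_{ij}$ recording $\dist^G(v,w_j)=i$ for $i\le f_q(\ell)$, and rewrite $\chi(\bar x,\bar w)$ into a formula $\chi^\theta(\bar x)$ over the expanded vocabulary, depending on the atomic type $\theta$ of $\bar w$, such that evaluation in $G$ with parameters $\bar w$ equals evaluation of $\chi^\theta$ in the expansion of $G\setminus\{w_1,\dots,w_m\}$. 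This is proved by an $\mathrm{EF}_q^+$-game argument, and it is essential that $\chi^\theta$ has the \emph{same $q$-rank} as $\chi$, so that $q$ and $r=f_q(q)$ stay fixed and the Rank-Preserving Locality Theorem can be re-applied at the next level. You identified the distance-colouring idea, but the ``combine the recursive answers by iterating over witness subsets'' step would fail on the first universal quantifier.

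Second, your running-time analysis is incorrect as stated. The branching factor of the recursion is \emph{not} constant: each call on $H$ spawns a recursive call for \emph{every cluster} of the cover of $H$, and there are up to $|V(H)|$ clusters, so a ``constant branching, depth $\ell$, geometric series'' argument does not bound anything. The saving comes entirely from the degree bound of the cover: since each vertex lies in at most $n^\delta$ clusters, $\sum_{X\in\XXX}n_X\le n^{1+\delta}$, and the recurrence $T(j,n)\le\sum_{X}cT(j-1,n_X)+cn^{1+\delta}$ resolves (using $\sum_X n_X^{1+\alpha}\le(\sum_X n_X)^{1+\alpha}$) to $T(j,n)\le c^j n^{1+2j\delta}$. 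The exponent thus grows by roughly $2\delta$ per recursion level, which forces the choice $\delta=\epsilon/(2\ell)$ rather than ``$\delta=\epsilon/c$ for a suitable constant'' justified by a geometric series. With Lemma~\ref{lem:redlem} in place and this accounting, your plan does become the paper's proof.
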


Clearly, this implies Theorem~\ref{theo:main}.

We need one more lemma for the proof. It describes a standard
reduction that allows us to remove a bounded number of elements from a
structure in which we want to evaluate a formula.

\begin{lemma}\label{lem:redlem}
  Let~$\sigma$ be a coloured-graph vocabulary
  and~$k,\ell,m,q\in\mathbb N$ with~$0\le\ell\le q$. Then there are
  \begin{enumerate}
  \item a coloured-graph vocabulary~$\sigma'\supseteq\sigma$,
  \item for
    every~$\FO^+[\sigma]$-formula~$\phi(x_1,\ldots,x_k,y_1,\ldots,y_m)$
    of~$q$-rank~$\ell$ and every atomic~$q$-type~$\theta\in
    T(\sigma,m,q,0)$ an~$\FO^+[\sigma']$
    formula~$\phi^\theta(x_1,\ldots,x_k)$ of~$q$-rank at most~$\ell$,

  \item for every~$\sigma$-coloured graph~$G$ and
    all~$w_1,\ldots,w_m\in V(G)$ a~$\sigma'$-expansion~$G'$
    of~$G\setminus\{w_1,\ldots,w_m\}$,
  \end{enumerate}
  such that if~$\atp_q(G,w_1,\ldots,w_m)=\theta$ then for
  all~$v_1,\ldots,v_k\in V(G)\setminus\{w_1,\ldots,w_m\}$
  \[
  G\models\phi(v_1,\ldots,v_k,w_1,\ldots,w_m)\iff
  G'\models\phi^\theta(v_1,\ldots,v_k).
  \]
  Furthermore,~$\phi^\theta$ is computable from~$\phi$ and~$\theta$,
  and~$G'$ is computable from~$G$ and~$w_1,\ldots,w_m$ in
  time~$f(\ell,m,q)\cdot(|V(G)|+|E(G)|)$.
\end{lemma}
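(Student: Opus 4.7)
The plan is to expand $G\setminus\{w_1,\ldots,w_m\}$ with unary colour predicates that fully encode the interaction of the removed vertices with the rest of $G$. Let $D:=f_q(\ell)$, the largest distance bound that can appear in any distance atom of a formula of $q$-rank $\ell$. I extend $\sigma$ to $\sigma'$ by adding, for each $i\in[m]$ and each $d\in\{0,\ldots,D,\infty\}$, a unary symbol $Q_{i,d}$ (intended to record $\dist^G(v,w_i)=d$), and for each $i,j\in[m]$ and each such $d$ a unary symbol $R_{i,j,d}$ that I use as a ``constant flag''. The expansion $G'$ has $V(G'):=V(G)\setminus\{w_1,\ldots,w_m\}$, keeps $E$ and the original $\sigma$-colours restricted to $V(G')$, puts $v\in Q_{i,d}(G')$ exactly when $\dist^G(v,w_i)=d$ (with $\infty$ meaning ``greater than $D$''), and sets $R_{i,j,d}(G'):=V(G')$ when $\dist^G(w_i,w_j)=d$ and $R_{i,j,d}(G'):=\emptyset$ otherwise. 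These sets are produced by $m$ breadth-first searches of depth $D$ from the $w_i$, giving the claimed $f(\ell,m,q)\cdot(|V(G)|+|E(G)|)$ running time.

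The translation $\phi\mapsto\phi^\theta$ is by structural recursion, with all the real work at the atoms. Atoms involving only $y$-variables that are quantifier-free of $q$-rank $0$ (namely $y_i=y_j$, $E(y_i,y_j)$, $P(y_i)$, and $\dist(y_i,y_j)\le d$ with $d\le f_q(0)$) are determined by $\theta$ and are replaced by $\top$ or $\bot$. Mixed atoms collapse to unary conditions on the $x$-variables: $x_j=y_i\mapsto\bot$, $E(y_i,x_j)\mapsto Q_{i,1}(x_j)$, and $\dist(y_i,x_j)\le d\mapsto\bigvee_{d'\le d}Q_{i,d'}(x_j)$. Atoms over the $x$-variables alone are retained, except for $\dist(x_{j_1},x_{j_2})\le d$: since a shortest $v_1v_2$-path in $G$ either avoids all the $w_i$ or decomposes along the $w$-vertices it visits in order, one has
\[
\dist^G(v_1,v_2)\le d \iff \dist^{G'}(v_1,v_2)\le d\ \vee\ \bigvee_{(i_1,\ldots,i_s)}\bigvee_{d_0+\cdots+d_s\le d}\Big(Q_{i_1,d_0}(v_1)\wedge\bigwedge_{l=1}^{s-1}\gamma_{i_l,i_{l+1},d_l}\wedge Q_{i_s,d_s}(v_2)\Big),
\]
where $s\le m$, the $i_l$ are distinct, and each constant $\gamma_{i,j,d}$ is a $\top/\bot$ value read off $\theta$ when $d\le f_q(0)$, or the atom $R_{i,j,d}(x_{j_1})$ (using the free variable already in scope) otherwise. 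Replacing the vertices on the right-hand side by the corresponding $x$-variables gives the substitute formula. Finally, $\dist(y_i,y_j)\le d$ with $d>f_q(0)$ is rewritten as $\exists x\,\bigvee_{d'\le d}R_{i,j,d'}(x)$, introducing exactly one extra quantifier.

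For the $q$-rank bookkeeping, a distance atom at depth $p$ of $\phi$ satisfies $d\le f_q(\ell-p)$; none of the substitutions above introduces a new distance atom, and the only new quantifier is added precisely when $d>f_q(0)$. The latter inequality forces $\ell-p\ge 1$, so the depth of the new quantifier is $p+1\le\ell$ and the $q$-rank budget is preserved. Boolean connectives and quantifiers of $\phi$ commute with the translation. Correctness is then an induction on $\phi$ using the definitions of $Q_{i,d}$, $R_{i,j,d}$ and $\theta$, together with the path-decomposition identity for the distance-atom step. A minor edge case, $V(G')=\emptyset$, forces $V(G)\subseteq\{w_1,\ldots,w_m\}$ and is handled separately by evaluating $\phi$ directly on the at most $m$ remaining vertices and hard-wiring the outcome into $\phi^\theta$.

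The main obstacle is the path-decomposition identity for $\dist^G(v_1,v_2)\le d$ combined with the $q$-rank accounting that has to absorb the single extra $\exists$ needed for large $y_iy_j$-distances. The crucial point is that the bound on a distance atom shrinks fast enough with depth that room for the extra quantifier is exactly available precisely when it is needed.
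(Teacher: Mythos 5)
Your construction of $G'$ (unary predicates recording exact $G$-distances to each $w_i$, plus constant flags for the pairwise $w$-distances) is essentially the paper's expansion, but your route to $\phi^\theta$ is an explicit atom-by-atom syntactic translation, whereas the paper avoids all of that bookkeeping: it proves a single Ehrenfeucht--Fra\"iss\'e claim (Duplicator plays the game on $G_1',G_2'$ and answers $w_{ij}$ by $w_{(3-i)j}$), deduces that $\tp_{q,\ell}(G',\tup v)$ together with $\theta$ determines $\tp_{q,\ell}(G,\tup v\tup w)$, and then obtains a computable $\phi^\theta$ as a disjunction of type-defining formulas found by enumerating valid sentences. Your path-decomposition identity, the $q$-rank accounting for the extra $\exists$, and the handling of the constant flags are all fine as far as they go (and the decomposition could even be simplified: since $Q_{i,d}$ records distance in $G$, a single intermediate $w_i$ suffices).

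However, there is a genuine gap at the quantifier step: you assert that ``quantifiers of $\phi$ commute with the translation,'' but they do not. A quantifier in $\phi$ ranges over $V(G)$, which contains $w_1,\ldots,w_m$, while the same quantifier in $\phi^\theta$ ranges over $V(G')=V(G)\setminus\{w_1,\ldots,w_m\}$. A witness for an existential quantifier may well be one of the removed vertices, and your translation then loses it. Concretely, $\phi(y_1)=\exists z\,(z=y_1)$ is true in $G$ at $w_1$, but your rules send $z=y_1$ to $\bot$ and hence $\phi^\theta$ to $\exists z\,\bot$, which is false. The fix is the standard relativisation
\[
(\exists z\,\psi)^\theta \;:=\; \exists z\,\psi^\theta \;\vee\; \bigvee_{i=1}^m \bigl(\psi[z/y_i]\bigr)^\theta ,
\]
and dually for $\forall$; one then has to rerun the $q$-rank and correctness induction for the substituted formulas $\psi[z/y_i]$ (which works, since substitution does not increase quantifier rank and only decreases the depth of distance atoms). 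Without this step the translation is simply incorrect, so the proof as written does not establish the lemma.
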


\begin{proof}
  We use a game theoretic argument similar to (but simpler than) the
  proof of the rank preserving locality theorem.

  For~$1\le i\le f_q(\ell)$ and~$1\le j\le m$, we let~$Q_{ij}$ be a
  fresh unary relation symbol, and we let~$\sigma'$ be the union
  of~$\sigma$ with all these~$Q_{ij}$. For every~$\sigma$-coloured
  graph~$G$ and all~$w_1,\ldots,w_m\in V(G)$ we let~$G'$ be
  the~$\sigma'$-expansion of~$G\setminus\{w_1,\ldots,w_m\}$ with
  \[
  Q_{ij}(G')=\{v\in V(G)\setminus\{w_1,\ldots,w_m\}\mid\dist^G(v,w_j)=i\}.
  \]
  Clearly,~$G'$ can be computed from~$G$ in
  time~$f(\ell,m,q)\cdot(|V(G)|+|E(G)|)$, for some function~$f$.

  \begin{Claim}\sloppy
    Let~$G_1,G_2$ be~$\sigma$-coloured graphs and~$v_{11},\ldots,v_{1k},w_{11},\ldots,w_{1m}\in
    V(G_1)$,~$v_{21},\ldots,v_{2k},w_{21},\ldots,w_{2m}\in V(G_2)$ such that
    \[
    \atp_q(G_1,w_{11},\ldots,w_{1m})=\atp_q(G_2,w_{21},\ldots,w_{2m})\]
    and
    \[
    G_1',(v_{11},\ldots,v_{1k})\equiv^+_{(q,\ell)}G_2',(v_{21},\ldots,v_{2k}).
    \]
    Then
    \[
    G_1,(v_{11},\ldots,v_{1k},w_{11},\ldots,w_{1m})\equiv^+_{q,\ell}G_2,(v_{21},\ldots,v_{2k},w_{21},\ldots,w_{2m}).
    \]
  \end{Claim}
  
  \begin{ClaimProof}
    It is easy to see that Duplicator has a winning strategy for
    the~$\ell$-round~$EF^+_q$-game
    on $(G,(v_{11},\ldots,v_{1k},$ $w_{11},\ldots,w_{1m}),
    G_2,(v_{21},\ldots,v_{2k},w_{21},\ldots,w_{2m}))$: she simply
    plays according to a winning strategy for
    the $\ell$-round~$EF^+_q$-game
    on~$(G_1',(v_{11},\ldots,v_{1k}),G_2',(v_{21},\ldots,v_{2k}))$,
    and whenever Spoiler selects a~$w_{ij}$ she answers by
    selecting~$w_{(3-i)j}$.
  \end{ClaimProof}
  
  The claim implies that there is a set~$S_{\phi,\theta}\subseteq
  T(\sigma',k,q,\ell)$ such that
  \[
  G\models\phi(v_1,\ldots,v_k,w_1,\ldots,w_m)\iff G'\models\bigvee_{\eta\in
    S_{\phi,\theta}}\bigwedge_{\psi(x_1,\ldots,x_k)\in\eta}\psi(v_1,\ldots,v_k).
  \]
  It remains to prove that we can compute such a set~$S_{\phi,\theta}$
  from~$\phi$ and~$\theta$. We use an argument based on the recursive
  enumerability of the valid first-order sentences similar to the one
  in the proof of the Rank Preserving Locality Theorem.
\end{proof}

\bigskip
\begin{proof}[Proof of Theorem~\ref{theo:main+u}]
  Let~$\CCC$ be a nowhere dense class of graphs and
  $\epsilon>0$. Without loss of generality we may assume
  that~$\epsilon\le 1/2$, which implies $\epsilon^2\le\epsilon/2$, and
  that~$\CCC$ is closed under taking subgraphs.

  The input to our algorithm is an~$\epsilon\le 1/2$,
  a~$\sigma$-coloured graph~$G$ whose $\{E\}$-restriction is in~$\CCC$
  and an~$\FO^+[\sigma]$-formula~$\phi(x)$, for some coloured-graph
  vocabulary~$\sigma$. Our algorithm will compute the set of all~$v\in
  V(G)$ such that~$G\models\phi(v)$ in time $\Oof(n^{1+\epsilon})$.

  We start by fixing a few parameters. We choose~$q$ such that the
  $q$-rank of~$\phi$ is at most~$q$ and let~$r=f_q(q)$. By the
  Rank-Preserving Locality Theorem, we can find an
  $\FO^+[\sigma\star^{q+1}q]$-formula~$\hphi(x)$, which is a Boolean
  combination of~$(q+1,r)$-independence sentences and atomic formulas,
  such that for all~$\sigma$-coloured graphs~$G$,
  all~$r$-neighbourhood covers~$\XXX$ of~$G$, and all~$v\in V(G)$ we
  have~$G\models\phi(v)\iff G\star_{\XXX}^{q+1}q\models\hphi(v)$. We
  choose~$\ell,m$ according to Theorem~\ref{thm:splittergame} such
  that Splitter has a winning strategy for the~$(\ell,m,2r)$-splitter
  game on every graph in~$\CCC$.  Note that~$q,r,\ell,m$ and~$\hphi$
  only depend on~$\phi$ and the class~$\CCC$, but not on~$\epsilon$ or
  the input graph~$G$. Now~$\epsilon$ comes into
  play. Let~$\delta=\epsilon/(2\ell)$. Choose $n_0=n_0(\delta,r)$
  according to Theorem~\ref{thm:alg-covers} such that every
  graph~$G\in\CCC$ of order~$n\ge n_0$ has an $r$-neighbourhood cover
  of radius at most~$2r$ and maximum degree at
  most~$n^{\delta}$. Choose~$n_1\ge n_0$ such that~$n_1^{\delta/2}\ge
  2$ and that every graph~$G\in\CCC$ of order~$n\ge n_1$ has at most
  $n^{1+\delta}$ edges. The existence of such an~$n_1$ follows from
  Lemma~\ref{lem:dens1}.  All the parameters and the
  formula~$\hphi(x)$ can be computed from~$\phi,\epsilon$ and the
  nowhere-density parameters of~$\CCC$ if~$\CCC$ is effectively
  nowhere dense.

  Now consider the~$\sigma$-coloured input graph~$G$. If~$n=|V(G)|< n_1$,
  we compute the set of all~$v\in V(G)$ such that~$G\models\phi(v)$ by
  brute force; in this case the running time can be bounded in terms
  of $\phi,\epsilon$, and~$\CCC$. So let us assume that~$n\ge n_1$. We
  compute an~$r$-neighbourhood cover~$\XXX$ of~$G$ of radius $2r$ and
  maximum degree~$n^{\delta}$. The main task of our algorithm will be
  to compute~$G\star_{\XXX}^{q+1}q$. Before we describe how to do
  this, let us assume that we have computed~$G\star_{\XXX}^{q+1}q$ and
  describe how the algorithm proceeds from there. The next step is to
  evaluate all~$(q,r)$-independence sentences in the Boolean
  combination~$\hphi(x)$ in~$G\star_{\XXX}^{q+1}q$. Consider such a
  sentence
  \[
  \psi=\exists x_1\ldots\exists x_q\Big(\bigwedge_{1\le i<j\le
    q}\dist(x_i,x_j)> 2r\wedge\bigwedge_{1\le i\le q} \chi(x_i)\Big).
  \]
  Remember that~$\chi(x_i)$ is an atomic formula. Thus we can easily
  compute the set~$U$ of all~$v\in V(G)$ such that
  $G\star_{\XXX}^{q+1}q\models\chi(v)$. Then we can use the algorithm
  of Theorem~\ref{thm:scattered-set} to decide if~$U$ has~$k$ elements
  of pairwise distance greater than~$2r$. This is the case if and only
  if $G\star_{\XXX}^{q+1}q\models\psi$.  This way, we decide
  which~$(q,r)$-independence sentences in~$\hphi(x)$ are satisfied
  in~$G\star_{\XXX}^{q+1}q$. It remains to evaluate the atomic
  formulas in~$\hphi(x)$ and combine the results to evaluate the
  Boolean combination. Both tasks are easy.

  Let us now turn to computing~$G\star_{\XXX}^{q+1}q$. We inductively
  compute~$G\star_{\XXX}^{i}q$ for~$0\le i\le q+1$. The base
  step~$i=0$ is trivial, because~$G\star_{\XXX}^{0}q=G$. As each
  $G\star_{\XXX}^{i}q$ is a~$\sigma'$ coloured graph for
  some~$\sigma'$ (to be precise,~$\sigma'=\sigma\star^iq$), it
  suffices to show how to compute~$G\star_{\XXX}q$ from~$G$. To do
  this, for each formula $\xi(x)\in\Phi^+(\sigma,1,q,q)$ we need to
  compute the set $P_\xi(G\star_{\XXX}q)$ of all~$v\in V(G)$ such that
  $G\big[\XXX(v)\big]\models\xi(v)$.  Let us fix a
  formula~$\xi(x)\in\Phi^+(\sigma,1,q,q)$.

  For every~$X\in\XXX$, let~$v_X\in X$ be a ``centre'' of~$G[X]$, that
  is, a vertex with~$X\subseteq N_{2r}(v_X)$. Such a~$v_X$ exists
  because the radius of~$G[X]$ is at most~$2r$. Let~$W_X\subseteq
  N_{2r}^G$ be Splitter's response if Connector chooses~$v_X$ in the
  first round of the~$(\ell,m,2r)$-splitter game on~$G$. Without loss
  of generality we assume that
  $W_X\neq\emptyset$. Let~$w_1,\ldots,w_m$ be an enumeration
  of~$W_X$. We apply Lemma~\ref{lem:redlem} with~$k=1$,~$\ell=q$,
  and~$m,q$ to the formulas~$\xi_0(x_1,y_1\ldots,y_m)=\xi(x_1)$ and
  $\xi_j(x_1,y_1\ldots,y_m)=\xi(y_j)$ for~$j=1,\ldots,m$.
  Let~$\sigma'$ be the vocabulary obtained by
  Lemma~\ref{lem:redlem}~(1), and let~$G_X$ be the graph obtained
  from~$G$ and~$w_1,\ldots,w_m$ by
  Lemma~\ref{lem:redlem}~(3). (Neither $\sigma'$ nor~$G_X$ depend on
  the formula.) For~$0\le j\le m$, let $\xi_j'(x_1)$ be the formula
  obtained from~$\xi_j$ by Lemma~\ref{lem:redlem}~(2).  We recursively
  evaluate the formulas~$\xi'_0,\ldots,\xi'_1$ in~$G_X$. This gives us
  the set $\Xi_X$ of all~$v\in V(G)$ such that
  $G\big[X\big]\models\xi(v)$. Doing this for all~$X\in\XXX$, we can
  compute the set
  \[
  P_\xi(G\star_{\XXX}q)=\{v\in V(G)\mid G[\XXX(v)]\models\xi(v)\}
  =\bigcup_{X\in\XXX}\big(\Xi_x\cap\{v\in V(G)\mid \XXX(v)=X\}\big).
  \]
  The crucial observation to ensure that the algorithm terminates is
  that in a recursive call with input~$G_X,\xi_j'$ the parameters~$q$
  and hence~$r=f_q(q)$ can be left unchanged. Moreover, it follows
  from the definition of~$G_X$ that Splitter has a winning strategy
  for the~$(\ell-1,m,2r)$-splitter game on~$G_X$. Thus we can reduce
  the parameter~$\ell$ by~$1$. Once we have reached~$\ell=0$, the
  graph~$G_X$ will be empty, and the algorithm terminates.

  There is one more issue we need to attend to, and that is how we
  compute Splitter's winning strategy, that is, the sets~$W_X$. We use
  Remark~\ref{rem:splitterrunningtime}. This means that to
  compute~$W_X$ in some recursive call, we need the whole history of
  the game (in a sense, the whole call stack). In addition, we need a
  breadth-first search tree in all graphs that appeared in the game
  before. It is no problem to compute a breadth-first search tree once
  when we first need it and then store it with the graph; this only
  increases the running time by a constant factor.

  This completes the description of the algorithm.

  Let us analyse the running time. The crucial parameters are the
  order $n$ of the input graph and the level~$j$ of the recursion. As
  argued above, we have~$j\le\ell$. We write the running time as a
  function~$T$ of~$j$ and~$n$. We first observe that the time used by
  the algorithm without the recursive calls can be bounded by~$c_1
  n^{1+\delta}$ for a suitable constant $c_1$ depending on the input
  sentence~$\phi$, the parameter $\epsilon$, and the class~$\CCC$, but
  not on~$n$ or~$j$. Furthermore, for~$n<n_1$ the running time can be
  bounded by a constant~$c_2$ that again only depends
  on~$\phi,\epsilon$, and~$\CCC$, and for~$j=0$ the running time can
  be bounded by~$c_3$. Furthermore, there is a $c_4$ such that for
  each~$X\in\XXX$ at most~$c_4$ recursive calls are made to the
  graph~$G_X$. Let~$n_X=|V(G_X)|\le |X|$ and $c=\max\{c_1,c_2,c_3,c_4\}$.
  We obtain the following recurrence for~$T$:
  \begin{align*}
    T(0,n)&\le c,\\
    T({j},n)&\le c&\text{for all }n< n_1,\\
    T({j},n)&\le\sum_{X\in\XXX}cT({j}-1,n_X)+cn^{1+\delta}&\text{for
      all }{j}\ge 1, n\ge n_1
  \end{align*}
  We claim that for all~$n\ge 1$ and~$0\le j\le\ell$ we have
  \begin{equation}\label{eq:mt1}
    T({j},n)\le c^jn^{1+2{j}\delta}=c^\ell n^{1+\epsilon}.
  \end{equation}
  As~$c$ and~$\ell$ are bounded in terms of~$\phi,\epsilon,\CCC$, this
  proves the theorem.

  \eqref{eq:mt1} can be proved by a straightforward induction. The
  crucial observation is
  \begin{equation}
    \label{eq:mt2}
    \sum_{X\in\XXX}n_X=\sum_{v\in V(G)}|\{X\in\XXX\mid v\in X\}|\le nn^\delta=n^{1+\delta}.
  \end{equation}
  The base steps
  ${j}=0$ and~$n<n_1$ are trivial. In the inductive step, we have
  \begin{align*}
    T({j},n)&\le\sum_{X\in\XXX} cT({j}-1,n_X)+cn^{1+\delta}\\
    &\le \sum_{X\in\XXX}
    cc^{j-1}n_X^{1+2({j}-1)\delta}+cn^{1+\delta}&\text{(Induction Hypothesis)}\\
    &\le c^j\Big(\sum_{X\in\XXX}n_X\Big)^{1+2({j}-1)\delta}+cn^{1+\delta}\\
    &\le c^jn^{(1+\delta)(1+2({j}-1)\delta)}+cn^{1+\delta}&\text{(by \eqref{eq:mt2})}\\
    &\le c^j\big(n^{1+(2{j}-1)\delta+2({j}-1)\delta^2}+n^{1+\delta}\big)\\
    &\le
    c^j\left(\frac{n^{1+2{j}\delta}+n^{1+(3/2)\delta}}{n^{\delta/2}}\right)&\text{(because
      ~$2({j}-1)\delta^2\le
      \frac{\epsilon^2}{2\ell}\le\delta/2$)}\\
    &\le c^jn^{2{j}\delta}&\text{(because~$n^{\delta/2}\ge 2$).}
  \end{align*}
\end{proof}

\section{Conclusion}
\label{sec:conclusion}
We prove that deciding first-order properties 
is fixed-parameter tractable on nowhere dense graph classes. This
generalises a long list of previous algorithmic meta theorems for
first-order logic. Furthermore, it is optimal on classes of graphs
closed under taking subgraphs. 
It remains open to find an optimal meta theorem for first-order
properties on classes that are not closed under taking subgraphs, but
only satisfy some weaker closure condition like being closed under
taking induced subgraphs. 

Our theorem underlines that nowhere dense graph classes have very favourable
algorithmic properties. As opposed to 
Robertson and Seymour's structure theory underlying most algorithms on
graph classes with excluded minors, the graph theory behind our
algorithms does not cause enormous
hidden constants in the running time. 

A particularly interesting property of nowhere dense classes and
classes of bounded expansion that we
uncover here for the first time is that they have simple sparse neighbourhood
covers with very good parameters. We have focussed on the radius of the
covering sets and have not tried to optimise the degree of the cover, that
is, the number of covering sets a vertex may be contained in. As
the graph theory underlying our result is not very complicated, we
believe that it is possible to obtain good degree bounds as well, probably
much better than those obtained through graph minor theory~\cite{abrgavmal+07,buscoslaf+07} (even
though the classes we consider are much larger). However, this remains
future work.

%
\bibliographystyle{plain}
\bibliography{nowheredense}  
%
%

\end{document}